
\documentclass[12pt]{article}
\usepackage{amssymb}
\usepackage{amsmath}
\usepackage{amsthm}
\usepackage{amsfonts}
\usepackage{setspace}
\usepackage{geometry}
\usepackage{graphicx}
\usepackage{color}

\setcounter{MaxMatrixCols}{10}

\newtheorem{theorem}{Theorem}
\newtheorem{assumption}{Assumption}
\newtheorem{lemma}{Lemma}

\numberwithin{equation}{section}
\input{tcilatex}
\begin{document}

\title{{\Large Maximum Score Estimation of Preference Parameters for a
Binary Choice Model under Uncertainty} \thanks{%
This work was in part supported by National Science Council of Taiwan
(102-2410-H-001-012), by European Research Council (ERC-2009-StG-240910-
ROMETA) and by the National Research Foundation of Korea Grant funded by the
Korean Government (NRF-2012-S1A3-A2033467). We thank Hidehiko Ichimura,
Liangjun Su, participants at 2013 Asian Meeting of the Econometric Society
and seminar participants at Singapore Management University, a co-editor and
three anonymous referees for helpful comments on this work.}}
\author{Le-Yu Chen \\
Institute of Economics, Academia Sinica \and Sokbae Lee \\
Department of Economics, Seoul National University \and Myung Jae Sung%
\thanks{%
Corresponding author. Address: School of Economics, Hongik University, 94
Wausan-ro, Mapo-Gu, Seoul, South Korea 121-791. E-mail:
mjaesung@hongik.ac.kr.} \\
School of Economics, Hongik University}
\date{1 December 2013}
\maketitle

\begin{abstract}
This paper develops maximum score estimation of preference parameters in the
binary choice model under uncertainty in which the decision rule is affected
by conditional expectations. The preference parameters are estimated in two
stages: we estimate conditional expectations nonparametrically in the first
stage and then the preference parameters in the second stage based on Manski
(1975,\thinspace 1985)'s maximum score estimator using the choice data and
first stage estimates. The paper establishes consistency and derives rate of
convergence of the two-stage maximum score estimator. Moreover, the paper
also provides sufficient conditions under which the two-stage estimator is
asymptotically equivalent in distribution to the corresponding single-stage
estimator that assumes the first stage input is known. These results are of
independent interest for maximum score estimation with nonparametrically
generated regressors. The paper also presents some Monte Carlo simulation
results for finite-sample behavior of the two-stage estimator.%

\noindent \textbf{Keywords:} \textit{discrete choice, maximum score
estimation, generated regressor, preference parameters, M-estimation, cube
root asymptotics} 

\noindent \textbf{JEL Codes:} C12, C13, C14.
\end{abstract}


\onehalfspacing

\section{Introduction}

This paper develops a semiparametric two-stage estimator of preference
parameters in the binary choice model where the agent's decision rule is
affected by conditional expectations of outcomes which are uncertain at the
choice-making stage and the preference shocks are nonparametrically
distributed with unknown form of heteroskedasticity. The pioneering papers
of Manski\thinspace (1991,\thinspace 1993) established nonparametric
identification of agents' expectations in the discrete choice model under
uncertainty when the expectations are fulfilled and conditioned only on
observable variables. Utilizing this result, Ahn and Manski\thinspace (1993)
proposed a two-stage estimator for a binary choice model under uncertainty
where agent's utility was linear in parameter and the unobserved preference
shock had a known distribution. Specifically, they estimated the agent's
expectations nonparametrically in the first stage and then the preference
parameters in the second stage by maximum likelihood estimation using the
choice data and the expectation estimates. Ahn\thinspace (1995,\thinspace
1997) extended the two-step approach further. On one hand, Ahn (1995)
considered nonparametric estimation of conditional choice probabilities in
the second stage. On the other hand, Ahn (1997) retained the linear index
structure of the Ahn-Manski model but estimated the preference parameters in
the second stage using average derivative method hence allowing for unknown
distribution of the unobservable. In principle, alternative approaches
accounting for nonparametric unobserved preference shock can also be applied
in the second step estimation of this framework. Well known methods include
Cosslett (1983), Powell et al.\thinspace (1989), Ichimura (1993), Klein and
Spady (1993), and Coppejans (2001), among many others.\medskip

The aforementioned papers allow for nonparametric setting of the
distribution of the preference shock. But the unobserved shock is assumed
either to be independent of or to have specific dependence structure with
the covariates. By contrast, Manski\thinspace (1975,\thinspace 1985)
considered a binary choice model under the conditional median restriction
and thus allowed for general form of heteroskedasticity for the unobserved
shock. It is particularly important, as shown in Brown and Walker (1989), to
account for heteroskedasticity in random utility models. Therefore, this
paper develops the semiparametric two-stage estimation method for the
Ahn-Manski model where the second stage is based on Manski (1975,\thinspace
1985)'s maximum score estimator and thus can accommodate nonparametric
preference shock with unknown form of heteroskedasticity.\medskip

From a methodological perspective, this paper also contributes to the
literature on two-stage M-estimation method with non-smooth criterion
functions. When the true parameter value can be formulated as the unique
root of certain population moment equations, the problem of M-estimation can
be reduced to that of Z-estimation. Chen et al.\thinspace (2003) considered
semiparametric non-smooth Z-estimation problem with estimated nuisance
parameter, while allowing for over-identifying restrictions. Chen and Pouzo
(2009,\thinspace 2012) developed general estimation methods for
semiparametric and nonparametric conditional moment models with possibly
non-smooth generalized residuals. For the general M-estimation problem,
Ichimura and Lee (2010) assumed some degree of second-order expansion of the
underlying objective function and established conditions under which one can
obtain a $\sqrt{N}$-consistent estimator of the finite dimensional parameter
where $N$ is the sample size when the nuisance parameter at the first stage
is estimated at a slower rate. For more recent papers on two-step
semiparametric estimation, see Ackerberg et al. (2012), Chen et al. (2013),
Escanciano et al. (2012, 2013), Hahn and Ridder (2013), and Mammen et al.
(2013), among others. None of the aforementioned papers include the maximum
score estimation in the second stage estimation. \medskip

For this paper, the second stage maximum score estimation problem cannot be
reformulated as a Z-estimation problem. Furthermore, even in the absence of
nuisance parameter, Kim and Pollard (1990) demonstrated that the maximum
score estimator can only have the cube root rate of convergence and its
asymptotic distribution is non-standard. The most closely related paper is
Lee and Pun (2006) who showed that $m$ out of $n$ bootstrapping can be used
to consistently estimate sampling distributions of nonstandard M-estimators
with nuisance parameters. Their general framework includes the maximum score
estimator as a special case, but allowing for only parametric nuisance
parameters. Therefore, established results in the two-stage estimation
literature are not immediately applicable and the asymptotic theory
developed in this paper may also be of independent interest for non-smooth
M-estimation with nonparametrically generated covariates.\medskip

The rest of the paper is organized as follows. Section 2 sets up the binary
choice model under uncertainty and presents the two-stage maximum score
estimation procedure of the preference parameters. Section 3 states
regularity assumptions and derives consistency and rate of convergence of
the estimator. In addition, Section 3 gives conditions under which the
two-stage maximum score estimator is asymptotically equivalent to the
infeasible single-stage maximum score estimator with a known first stage
input. Section 4 presents Monte Carlo studies assessing finite sample
performance of the estimator. Section 5 gives further applications of
maximum score estimation with nonparametrically generated regressors.
Section 6 concludes this paper. Proofs of technical results along with some
preliminary lemmas are given in the Appendices.

\section{Maximum Score Estimation of a Binary Choice Model under Uncertainty}

Suppose an agent must choose between two actions denoted by 0 and 1. The
utility from choosing action $j\in \{0,1\}$ is 
\begin{equation*}
U{}_{j}=v{}_{j}{}^{\prime }\beta {}_{1}+y{}^{\prime }\beta
{}_{2}+\varepsilon {}_{j}.
\end{equation*}%
Realization of the random vector $(v_{j},\varepsilon _{j})\in R^{k}\times R$
is known to the agent before the action is chosen and the random vector $%
y\in R^{p}${\ \ is realized only after the action is chosen. Random vectors $%
(v_{1},\varepsilon _{1})$ and $(v_{0},\varepsilon _{0})$ are not necessarily
identical. Distribution of $y$ depends on the chosen action and realization
of a random vector $x\in R^{q}$. Let $E^{s}(\cdot |\cdot )${\ \ denote the
agent's subjective conditional expectation. Given the realization of $%
(v_{j},\varepsilon _{j})$, the agent chooses the action $d$ that maximizes
the expected utility: 
\begin{equation*}
v_{j}^{\prime }\beta _{1}+E^{s}(y|x,d=j){}^{\prime }\beta _{2}+\varepsilon
_{j},j\in \{0,1\}.
\end{equation*}%
Thus the decision rule has the form 
\begin{equation}
d=1\left\{ z{}^{\prime }\beta
{}_{1}+[E{}^{s}(y|x,d=1)-E{}^{s}(y|x,d=0)]{}^{\prime }\beta
{}_{2}>\varepsilon \right\} ,  \label{e1}
\end{equation}%
where $z\equiv v{}_{1}-v{}_{0},\ \varepsilon \equiv \varepsilon
_{0}-\varepsilon _{1}$, and $1\{\cdot \}$ is an indicator function whose
value is one if the argument is true and zero otherwise. }}\medskip

As in Ahn and Manski (1993), suppose that expectations are fulfilled: 
\begin{equation*}
E{}^{s}(y|x,d=j)=E(y|x,d=j).
\end{equation*}%
We assume that the researcher does not observe realization of $\varepsilon ${%
\ and }$E(y|x,d=j)$, but that of $(z,x,d,y)$.\medskip

Let $G(x)\equiv E(y|x,d=1)-E(y|x,d=0)$ and let $w\equiv (z,G(x))\in \mathcal{%
W}\subset R{}^{k+p}$, where $\mathcal{W}$ denotes the support of the
distribution of $w$. Then, equation (\ref{e1}) can be written as 
\begin{equation}
d=1\{w^{\prime }\beta >\varepsilon \},  \label{eq}
\end{equation}%
where $\beta \equiv (\beta _{1},\beta _{2})$ is a vector of unknown
preference parameters. The set of assumptions leading to the binary choice
model in (\ref{eq}) is equivalent to that of Ahn and Manski (1993, equations
(1)-(3)). Note that $x$ affects the agent's decision only through $G(x)$,
and therefore, $x$ and $z$ can have common elements, as long as the support
of the distribution of $w$ is not contained in any proper linear subspace of 
$R{}^{k+p}$. \medskip

In this paper, we consider an important deviation from Ahn and Manski
(1993)'s setup where the unobserved preference shock $\varepsilon $ is
independent of $(z,x)$ with a known distribution function. Instead, we
consider inference under a flexible specification of the unobserved model
component. Following Manski (1985), we impose the restriction: 
\begin{equation}
\text{Med}(\varepsilon |z,x)=0.  \label{m}
\end{equation}%
The conditional median independence assumption in (\ref{m}) allows for
heteroskedasticity of unknown form, and hence, is substantially weaker than
the assumption imposed in Ahn and Manski (1993). Given (\ref{m}), the model (%
\ref{e1}) then satisfies 
\begin{equation}
\text{Med}(d|z,x)=1\{w^{\prime }\beta >0\}.  \label{e2}
\end{equation}

We may consider sufficient conditions for \eqref{m} in terms of the original
structural errors $\varepsilon _{0}$ and $\varepsilon _{1}$. Recall that $%
\varepsilon \equiv \varepsilon _{0}-\varepsilon _{1}$. Suppose that (i) the
distribution of $(\varepsilon _{0},\varepsilon _{1})$ is the same as that of 
$(\varepsilon _{1},\varepsilon _{0})$ conditional on $x$ and $z$, and (ii)
the support of this common conditional distribution is $R^{2}$. This type of
condition is called conditional exchangeability assumption. Then this
implies that $\varepsilon $ is symmetrically distributed around zero,
thereby implying equation (2.3). For further discussions regarding
conditional exchangeability assumption, see Fox (2007) in the context of
multinomial discrete-choice models and Arellano and Honor\'{e} (2001) for
applications in panel data models, among others. Also, note that the
conditional exchangeability assumption is a sufficient (but not necessary)
condition for equation (\ref{m}). \medskip

{\ Let }$\Theta $ denote the space of preference parameters, and let $%
\Lambda _{j}$, $j\in \{1,...,p\}$, denote the function space of difference
of conditional expectations $E(y_{j}|x,d=1)-E(y_{j}|x,d=0)$. Moreover, let $%
b\equiv (b_{1},b_{2})$ and $\gamma _{j}(x)$, $j\in \{1,...,p\}$, denote
generic elements of $\Theta $ and $\Lambda _{j}$, respectively. Let $\gamma
(x)\equiv (\gamma _{1}(x),...,\gamma _{p}(x))$ and $\Lambda \equiv
\prod\nolimits_{j=1}^{p}\Lambda _{j}$ be the space of $\gamma $. We refer to 
$\beta \equiv (\beta_1,\beta_2) $ and $G(x)$ as the true finite-dimensional
and infinite-dimensional parameters.\medskip

Suppose that data consist of random samples $%
(z{}_{i},x{}_{i},d{}_{i},y{}_{i}),i=1,\cdots ,N$. We estimate in the first
stage the conditional expectations which are not observed. Let $\widehat{G{}}%
(x{}_{i})${\ denote an estimate of the difference in conditional
expectations. Using the estimate }$\widehat{G{}}$, we {estimate the
preference parameters }$\beta $ in the second stage\ by the method of
maximum score estimation of Manski (1975,\thinspace 1985). For any $b$ and $%
\gamma $, define the sample score function%
\begin{equation}
S_{N}(b,\gamma )\equiv \frac{1}{N}\dsum\limits_{i=1}^{N}\tau
_{i}(2d_{i}-1)1\{z_{i}^{\prime }b{}_{1}+\gamma (x_{i}){}^{\prime }b_{2}>0\},
\label{score}
\end{equation}%
where $\tau _{i}\equiv \tau (x_{i})$ is a predetermined weight function to
avoid unduly influences from estimated $G(x_{i})$ at data points carrying
low density. The two-stage estimator of $\beta $ is now defined as 
\begin{equation}
\widehat{\beta }=\arg \max\nolimits_{b\in \Theta }\unit{S}_{N}(b,\widehat{G}%
).  \label{beta_hat}
\end{equation}

We end this section by commenting on inherent features of the maximum score
estimation approach. The zero conditional median assumption does not require
the existence of any error moments and allows heteroskedastic errors of an
unknown form. However, the maximum score approach has its drawbacks, mainly
due to its weak assumption. First, in terms of prediction power, it can
identify unknown parameters up to scale and also only identify whether the
conditional probability of $d=1$ is above or below one half; hence, the
partial effects of covariates are not identified. Second, lack of smoothness
in the objective function makes computation of the estimator difficult and
lets the estimator converge in probability to the true parameter at a rate
of $N^{-1/3}$.

\section{Consistency, Rate of Convergence and Asymptotic Distribution of $%
\widehat{\protect\beta }\label{main results}$}

Let $F(t;b)$ and $f(t;b)$, respectively, denote the distribution and density
of $w^{\prime }b$. To simplify the analysis, we consider fixed trimming such
that $\tau (x)=1(x\in \mathcal{X})$, where $\mathcal{X\subset R}^{q}$ is a
predetermined, compact, and convex subset of the support of $x$. For any
real vector $b$, let $\left\Vert b\right\Vert _{E}$ denote the Euclidean
norm of $b$. For any $p$-dimensional vector of functions $h(x)$, let $%
\left\Vert h\right\Vert _{\infty }\equiv \left\Vert \left( \left\Vert
h_{1}\right\Vert _{\sup },...,\left\Vert h_{p}\right\Vert _{\sup }\right)
\right\Vert _{E}$ where $\left\Vert h_{j}\right\Vert _{\sup }\equiv \sup
\{\left\vert h_{j}(x)\right\vert :x\in \mathcal{X}\}$ and $h_{j}(x)$ denote
the $j$th component of $h$. Let $\widetilde{z}$ be the subvector of $z$
excluding the first component, say $z_{1}$ of $z$. Write $b_{1}=(b_{1,1},%
\widetilde{b}_{1})$ and $\beta _{1}=(\beta _{1,1},\widetilde{\beta }_{1})$.
We assume the following regularity conditions.\medskip

\begin{assumption}
\label{assumption-consist} Assume that:

\begin{itemize}
\item[\textbf{C1}.] $\Theta =\{-1,1\}\times \Upsilon $, where $\Upsilon $ is
a compact subspace of $R^{k+p-1}$and $\left( \widetilde{\beta }_{1},\beta
_{2}\right) $ is an interior point of $\Upsilon $.

\item[\textbf{C2}.] (a) The support of the distribution of $w$ is not
contained in any proper linear subspace of $R{}^{k+p}$. (b) $0<P(d=1|w)<1$
for almost every $w$. (c) For almost every $(\widetilde{z},x)$, the
distribution of $z_{1}$ conditional on $(\widetilde{z},x)$ has everywhere
positive density with respect to Lebesgue measure.

\item[\textbf{C3}.] $\text{Med}(\varepsilon |z,x)=0$ for almost every $(z,x)$%
.

\item[\textbf{C4}.] There is a positive constant $L<\infty $ such that $%
\left\vert F(t_{1};b)-F(t_{2};b)\right\vert \leq L\left\vert
t_{1}-t_{2}\right\vert $ for all $(t_{1},t_{2})\in R^{2}$ uniformly over $%
b\in \Theta $.

\item[\textbf{C5}.] $\left\Vert \widehat{G{}}-G\right\Vert _{\infty
}=o_{p}(1)$.\medskip
\end{itemize}
\end{assumption}

Because the scale of $\beta $ for the model characterized by (\ref{e2})
cannot be identified, Assumption C1 imposes scale normalization by requiring
that the absolute value of the first coefficient is unity. Assumption C2
implies that $F(t;b)$ is absolutely continuous and has density $f(t;b)$ for
each $b\in \{-1,1\}\times \Upsilon $. Assumptions C1 - C3 are standard in
the maximum score estimation literature (see e.g., Manski (1985), Horowitz
(1992), and Florios and Skouras (2008)). Assumption C4 is a mild condition
on the distribution of the index variable $w^{\prime }b$. Assumption C5
requires uniform consistency of the first stage estimation. This assumption
can be easily verified for standard nonparametric estimators such as series
estimators (Newey (1997, Theorem 1)) and the kernel regression estimator
(Bierens (1983, Theorem 1), Bierens (1987, Theorem 2.3.1) and Andrews (1995,
Theorem 1)). \medskip

Given these regularity conditions, we have the following result.\medskip

\begin{theorem}[Consistency]
\label{thm-consist} Let Assumption \ref{assumption-consist} (\textbf{C1} - 
\textbf{C5}) hold. Then the two-stage estimator given by (\ref{beta_hat})
converges to $\beta $ in probability as $N\longrightarrow \infty $.\medskip
\end{theorem}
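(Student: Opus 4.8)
The plan is to establish consistency by verifying the conditions of a standard argmax/M-estimation consistency theorem (as in Manski (1985) or Newey and McFadden (1994, Theorem 2.1)), adapted to accommodate the estimated first-stage input $\widehat G$. The three ingredients are: (i) identification, i.e. the population score $S(b)\equiv E[\tau(x)(2d-1)1\{w'b>0\}]$ is uniquely maximized at $b=\beta$ over $\Theta$; (ii) uniform convergence of the feasible sample criterion $S_N(b,\widehat G)$ to $S(b)$ over $b\in\Theta$; and (iii) compactness of $\Theta$ together with appropriate continuity, which is given by Assumption C1. Step (iii) is immediate, so the work is in (i) and (ii).

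For identification (i), I would follow Manski's argument. Writing $S(b)-S(\beta)=E\big[\tau(x)\,(2\,\mathrm{Med}(d\mid z,x)-1)\,(1\{w'b>0\}-1\{w'\beta>0\})\big]$ after conditioning on $(z,x)$ and using that $\mathrm{sgn}(E[(2d-1)\mid z,x]) = 2\cdot 1\{w'\beta>0\}-1$ under the conditional median restriction C3, one sees that the integrand is nonpositive, so $\beta$ is a maximizer; uniqueness up to the maximizer being $\beta$ then follows from C2(a) (full-dimensional support, ruling out proper linear subspaces), C2(b) (so that the sign of $2P(d=1\mid w)-1$ genuinely reveals $\mathrm{sgn}(w'\beta)$ on a set of positive measure whenever $b\neq\beta$), and C2(c) (the continuously distributed "special regressor" $z_1$ with everywhere positive conditional density, which guarantees the set $\{w: \mathrm{sgn}(w'b)\neq\mathrm{sgn}(w'\beta)\}$ has positive probability for any $b\neq\beta$ in $\Theta$). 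This is essentially Lemma 2 of Manski (1985); I would cite it and fill in only what is needed for the trimmed version with weight $\tau(x)=1\{x\in\mathcal X\}$, noting that since $\mathcal X$ is a fixed set of positive probability and $z_1\mid(\widetilde z,x)$ has full support, the identification argument goes through verbatim on the trimmed population.

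For uniform convergence (ii), I would split $\sup_{b\in\Theta}|S_N(b,\widehat G)-S(b)|$ into two pieces:
\[
\sup_{b\in\Theta}\big|S_N(b,\widehat G)-S_N(b,G)\big| \;+\; \sup_{b\in\Theta}\big|S_N(b,G)-S(b)\big|.
\]
The second term is a standard uniform law of large numbers over the Vapnik--Chervonenkis class of half-space indicators $\{1\{z'b_1+G(x)'b_2>0\}: b\in\Theta\}$ (finite VC dimension because it is a finite-dimensional family of linear threshold rules), multiplied by the bounded envelope $|\tau(x)(2d-1)|\le 1$; hence it is $o_p(1)$. The first term is where the generated regressor enters. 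Here I would argue that, on the event $\{\|\widehat G-G\|_\infty\le\delta\}$ (which has probability approaching $1$ by C5 for any fixed $\delta>0$), the two indicators $1\{z'b_1+\widehat G(x)'b_2>0\}$ and $1\{z'b_1+G(x)'b_2>0\}$ differ only when $z'b_1+G(x)'b_2$ lies within $O(\delta)$ of zero — using that $b_2$ is bounded on the compact $\Theta$ — so the absolute difference is bounded by $\frac1N\sum_i 1\{|z_i'b_1+G(x_i)'b_2|\le C\delta\}$, whose expectation is at most $\sup_{b\in\Theta} [F(C\delta;b)-F(-C\delta;b)]\le 2LC\delta$ by the Lipschitz bound in C4. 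Taking $\delta\to 0$ after $N\to\infty$ (and controlling the sample average by the same VC-class ULLN applied to the "boundary strip" indicators) makes the first term $o_p(1)$ as well.

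The main obstacle I anticipate is handling the first-stage term uniformly in $b$: the set of $x$ for which the sign of the index flips depends on $b$ through $b_2$, so the "boundary strip" argument must be made uniform over $\Theta$, and one must be careful that $\widehat G$ is random and correlated with the sample. The clean way around this is exactly the two-step bound above — first pass to the deterministic event $\{\|\widehat G-G\|_\infty\le\delta\}$ via C5, then on that event dominate the flip region by a $b$-indexed strip of width $O(\delta)$ whose probability is uniformly controlled by C4, and finally invoke the VC/Glivenko--Cantelli property so that the empirical measure of every such strip is uniformly close to its (small) population probability. Once $\sup_{b}|S_N(b,\widehat G)-S(b)|=o_p(1)$ and identification are in hand, the standard argmax continuity argument delivers $\widehat\beta\to\beta$ in probability, completing the proof.
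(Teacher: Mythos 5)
Your proposal is correct and follows essentially the same route as the paper: the same decomposition of $\sup_{b\in\Theta}|S_N(b,\widehat G)-S(b,G)|$ into the generated-regressor term and the standard uniform-LLN term, with the first term handled exactly as in the paper's Lemma A.1 (pass to the event $\|\widehat G-G\|_\infty\le\delta$ via C5, dominate the indicator discrepancy by a $b$-indexed strip of width $O(\delta)$, control its population measure by the Lipschitz condition C4, and control its empirical measure uniformly via the VC/Glivenko--Cantelli property), and identification plus continuity taken from Manski (1985) before invoking the standard argmax consistency theorem. No substantive differences.
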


In addition to consistency, we also study rate of convergence of the
estimator $\widehat{\beta }$. Let $\widetilde{w}\equiv (\widetilde{z},G(x)),$
$\widetilde{b}\equiv (\widetilde{b}_{1},b_{2})$ and $\widetilde{\beta }%
\equiv (\widetilde{\beta }_{1},\beta _{2})$. Let $F_{\varepsilon }(\cdot
|z,x)$ denote the distribution function of $\varepsilon $ conditional on $%
(z,x)$ and $g_{1}(z_{1}|\widetilde{z},x)$ denote the density function of $%
z_{1}$ conditional on $(\widetilde{z},x)$. Let $p_{1}\left( \cdot ,%
\widetilde{z},x\right) $ denote the partial derivative of $P(d=1|z,x)$ with
respect to $z_{1}$. Define the following matrix%
\begin{equation*}
V\equiv \beta _{1,1}E\left[ \tau p_{1}(-\widetilde{w}^{\prime }\widetilde{%
\beta }{}/\beta _{1,1},\widetilde{z},x)g_{1}(-\widetilde{w}^{\prime }%
\widetilde{\beta }/\beta _{1,1}|\widetilde{z},x)\widetilde{w}\widetilde{w}%
^{\prime }\right] .
\end{equation*}

Since the objective function of (\ref{score}) is non-smooth, we require the
nonparametric parameter of the estimation problem should possess certain
degree of smoothness to facilitate derivation of the rate of convergence
result. In particular, we consider the following well known class of smooth
functions (see, e.g., van der Vaart and Wellner (1996, Section 2.7.1)) : for 
$0<\alpha <\infty $, let $C_{M}^{\alpha }$ denote the class of functions $f$:%
$\mathcal{X\longmapsto R}$ with $\left\Vert f\right\Vert _{\alpha }\leq M$
where for any $q$ dimensional vector of non-negative integers $k=\left(
k_{1},...,k_{q}\right) $,%
\begin{equation*}
\left\Vert f\right\Vert _{\alpha }\equiv \max_{\sigma (k)\leq \underline{%
\alpha }}\left\Vert D^{k}f\right\Vert _{\sup }+\max_{\sigma (k)\leq 
\underline{\alpha }}\sup_{x\neq x^{\prime }}\frac{\left\vert
D^{k}f(x)-D^{k}f(x^{\prime })\right\vert }{\left\Vert x-x^{\prime
}\right\Vert _{E}^{\alpha -\underline{\alpha }}}
\end{equation*}%
where $\sigma (k)\equiv \sum\nolimits_{j=1}^{q}k_{q}$, \underline{$\alpha $}
denotes the greatest integer smaller than $\alpha $, and $D^{k}$ is the
differential operator 
\begin{equation*}
D^{k}\equiv \frac{\partial ^{\sigma (k)}}{\partial x_{1}^{k_{1}}\cdot \cdot
\cdot \partial x_{q}^{k_{q}}}.
\end{equation*}%
Given the norm $\left\Vert \cdot \right\Vert _{\alpha }$, for any $p$%
-dimensional vector of functions $h(x)$, let $\left\Vert h\right\Vert
_{\alpha ,p}\equiv \left\Vert \left( \left\Vert h_{1}\right\Vert _{\alpha
},...,\left\Vert h_{p}\right\Vert _{\alpha }\right) \right\Vert _{E}$ where $%
h_{j}(x)$ denote the $j$th component of $h$. Note that $\left\Vert \cdot
\right\Vert _{\alpha ,p}$ is a stronger norm than $\left\Vert \cdot
\right\Vert _{\infty }$ used in condition C5 for the uniform consistency of
the first stage estimator. \medskip

The regularity conditions imposed for the convergence rate result are stated
as follows.\medskip

\begin{assumption}
\label{assumption-rate}Assume that:

\begin{itemize}
\item[\textbf{C6.}] The support of $\widetilde{z}$ is bounded.

\item[\textbf{C7}.] There is a positive constant $\overline{B}<\infty $ such
that (i) for every $z_{1}$ and for almost every $\left( \widetilde{z}%
,x\right) $, 
\begin{equation*}
g_{1}(z_{1}|\widetilde{z},x)<\overline{B},\left\vert \partial g_{1}(z_{1}|%
\widetilde{z},x)/\partial z_{1}\right\vert <\overline{B},\text{ and }%
\left\vert \partial ^{2}g_{1}(z_{1}|\widetilde{z},x)/\partial
z_{1}^{2}\right\vert <\overline{B},
\end{equation*}%
and (ii) for non-negative integers $i$ and $j$ satisfying $i+j\leq 2$, 
\begin{equation*}
\left\vert \partial ^{i+j}F_{\varepsilon }(t|z,x)/\partial t^{i}\partial
z_{1}^{j}\right\vert <\overline{B}
\end{equation*}%
for every $t$ and $z_{1}$ and for almost every $\left( \widetilde{z}%
,x\right) $.

\item[\textbf{C8}.] All elements of the vector $\widetilde{w}$ have finite
third absolute moments.

\item[\textbf{C9}.] The matrix $V$ is positive definite.

\item[\textbf{C10}.] For each $j\in \{1,...,p\}$, $\Lambda
_{j}=C_{M}^{\alpha }$ for some $\alpha >q$ and $M<\infty $.

\item[\textbf{C11}.] $\left\Vert \widehat{G{}}-G\right\Vert _{\alpha
,p}=O_{p}(\varepsilon _{N})$ where $\varepsilon _{N}$\ is a non-stochastic
positive real sequence such that $N^{1/3}\varepsilon _{N}\leq 1$ for each $N$%
.\medskip
\end{itemize}
\end{assumption}

Assumption C6 is standard in deriving asymptotic properties of Manski's
maximum score estimator (see, e.g. Kim and Pollard (1990), pp.\thinspace 213
- 216). Assumption C7 requires some smoothness of the density $g_{1}(z_{1}|%
\widetilde{z},x)$ and the distribution $F_{\varepsilon }(t|z,x)$. Assumption
C8 is mild. Since $-V$ corresponds to the second order derivative of $%
E[S_{N}(b,\gamma )]$ with respect to $\widetilde{b}$ evaluated at true
parameter values, Assumption C9 is analogous to the classic condition of
Hessian matrix being non-singular in the M-estimation framework. Assumption
C10 imposes smoothness for the nonparametric parameter $\gamma $ and hence
helps to control complexity of the space $\Lambda $. The requirement $\alpha
>q$ is in line with the literature of two-stage semiparametric estimation
with non-smooth objective functions (See, e.g., Chen et.\thinspace
al.\thinspace (2003, Example 2, pp.\thinspace 1601-1603) and Ichimura and
Lee (2010, Section 4.1, pp.\thinspace 258-259)). \medskip

Assumption C11 requires that the first stage estimator should converge under
the norm $\left\Vert \cdot \right\Vert _{\alpha ,p}$ at a rate no slower
than $N^{-1/3}$. Note that convergence of $\widehat{G{}}$ to $G$ in the norm 
$\left\Vert \cdot \right\Vert _{\alpha ,p}$ also implies uniform convergence
of derivatives of $\widehat{G{}}$ to those of $G$. For integer-valued $%
\alpha >0$, Assumption C11 is fulfilled provided that for vector of
non-negative integers $k=\left( k_{1},...,k_{q}\right) $ that satisfies $%
\sigma (k)\leq \alpha $, 
\begin{equation}
\left\Vert D^{k}\widehat{G}_{t,j}-D^{k}G_{t,j}\right\Vert _{\sup
}=O_{p}(\varepsilon _{N})  \label{der}
\end{equation}%
where $\widehat{G}_{t,j}(x)$ denotes the estimate of $G_{t,j}(x)\equiv
E(y_{j}|x,d=t)$ for $\left( t,j\right) \in \{0,1\}\times \{1,...,p\}$. The
condition (\ref{der}) can also be verified for series estimators (Newey
(1997, Theorem 1)) and the kernel regression estimator (Andrews (1995,
Theorem 1)).\medskip

\begin{theorem}[Rate of Convergence]
\label{thm-rate}In addition to Assumption \ref{assumption-consist} (\textbf{%
C1} - \textbf{C5}), let Assumption \ref{assumption-rate} (\textbf{C6} - 
\textbf{C11}) also hold. Then $\left\Vert \widehat{\beta }-\beta \right\Vert
_{E}=O_{p}(N^{-1/3})$.\medskip
\end{theorem}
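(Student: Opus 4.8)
The plan is to adapt the standard cube-root asymptotics machinery (Kim and Pollard, 1990; Manski, 1985; Horowitz, 1992) to the two-stage setting, treating the first-stage estimation error $\widehat G-G$ as a perturbation that must be shown not to slow the rate. Having already established consistency (Theorem~\ref{thm-consist}), I would work locally: restrict attention to a shrinking neighborhood of $\beta$ and show that with probability tending to one, $\widehat\beta$ cannot lie at distance larger than $O_p(N^{-1/3})$ from $\beta$. Following the usual recipe, write the centered objective $S_N(b,\widehat G)-S_N(\beta,\widehat G)=\mathbb{E}_N m_{b,\widehat G}$ where $m_{b,\gamma}(z,x,d)\equiv\tau(2d-1)[\,1\{z'b_1+\gamma(x)'b_2>0\}-1\{z'\beta_1+\gamma(x)'\beta_2>0\}\,]$, and decompose it into a deterministic drift term plus an empirical-process remainder.

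\textbf{Key steps.} First, I would control the deterministic part: using Assumptions C2(c), C3, C7, and the smoothness of $F_\varepsilon$ and $g_1$, a second-order Taylor expansion of $E[m_{b,G}]$ around $b=\beta$ shows that $E[S_N(b,G)]-E[S_N(\beta,G)]\leq -c\|\widetilde b-\widetilde\beta\|_E^2$ for some $c>0$ and all $b$ near $\beta$, where the constant $c$ is governed by the positive-definite matrix $V$ from Assumption C9; this is the standard identification-curvature bound. Second, I would handle the substitution of $\widehat G$ for $G$: by a mean-value argument the difference $E[m_{b,\widehat G}]-E[m_{b,G}]$ (conditional on $\widehat G$) is controlled by $\|\widehat G-G\|_\infty$ times a Lipschitz-type bound coming again from the bounded densities in C7, and this contributes a term no larger than $O_p(\varepsilon_N\|\widetilde b-\widetilde\beta\|_E)$ plus lower-order pieces, which by C11 ($N^{1/3}\varepsilon_N\le1$) is absorbed into the rate. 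Third, for the stochastic part, I would apply a maximal inequality over the class $\mathcal{F}_\delta\equiv\{m_{b,\gamma}:\|b-\beta\|_E\leq\delta,\ \gamma\in\Lambda,\ \|\gamma-G\|_{\alpha,p}\leq M\}$. The indicator functions in $b$ form a VC class; the novelty is the perturbation in the function argument $\gamma$, but Assumption C10 makes $\Lambda_j=C_M^\alpha$ with $\alpha>q$ a Donsker class with polynomial bracketing entropy, so the product/composition class $\mathcal{F}_\delta$ still has an $L^2$-bracketing (or uniform) entropy integral that converges, and an envelope with $E[F_\delta^2]=O(\delta)$ (using C4, C8, and boundedness of $\widetilde z$ from C6). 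This yields $E\sup_{\mathcal{F}_\delta}|(\mathbb{E}_N-E)m|=O(N^{-1/2}\phi_N(\delta))$ with $\phi_N(\delta)\sim\sqrt\delta$ (up to the usual modulus), exactly the ingredient needed for Theorem~3.2.5 of van der Vaart and Wellner.

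\textbf{Conclusion via the rate lemma.} Combining the curvature bound $-c\|\widetilde b-\widetilde\beta\|_E^2$, the first-stage contribution $O_p(\varepsilon_N\|\widetilde b-\widetilde\beta\|_E)$, and the stochastic modulus $O_p(N^{-1/2}\|\widetilde b-\widetilde\beta\|_E^{1/2})$, and noting that the discrete coordinate $b_{1,1}\in\{-1,1\}$ equals $\beta_{1,1}$ with probability tending to one (so that only $\widetilde b$ is free), the standard ``peeling'' / ``shelling'' argument of Kim and Pollard forces the maximizer to satisfy $r_N\|\widetilde{\widehat\beta}-\widetilde\beta\|_E=O_p(1)$ where $r_N$ solves $r_N^2 N^{-1/2} r_N^{-1/2}\asymp1$, i.e.\ $r_N=N^{1/3}$ (the $\varepsilon_N$ term being negligible by C11). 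This gives $\|\widehat\beta-\beta\|_E=O_p(N^{-1/3})$.

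\textbf{Main obstacle.} The routine parts are the Taylor expansion and the discrete-coordinate argument; the delicate step is the empirical-process bound over $\mathcal{F}_\delta$, specifically verifying that replacing the fixed regressor $G(x)$ by a \emph{generated} regressor $\gamma(x)$ ranging over a smooth-function class does not blow up the entropy integral or the envelope's $L^2$ size. The key is that $\alpha>q$ (Assumption C10) keeps the bracketing entropy of $C_M^\alpha$ of order $\epsilon^{-q/\alpha}$ with $q/\alpha<1$, so the entropy integral $\int_0^\delta\sqrt{\log N_{[\,]}(\epsilon,\mathcal{F}_\delta)}\,d\epsilon$ remains finite and the cube-root rate is preserved; care is also needed to show the envelope of the difference $1\{z'b_1+\gamma(x)'b_2>0\}-1\{z'\beta_1+G(x)'\beta_2>0\}$ has $L^2$-norm $O(\sqrt{\delta+\varepsilon_N})=O(\sqrt\delta)$, which again uses the bounded conditional density of the index (C4, C7) together with $\|\gamma-G\|_{\alpha,p}$ small.
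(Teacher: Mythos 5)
Your architecture coincides with the paper's: a curvature (quadratic) expansion of the population criterion in \(\widetilde b\) with a cross term of order \(\Vert\widetilde b-\widetilde\beta\Vert_{E}\,\Vert\gamma-G\Vert_{\infty}\) and a bias term of order \(\Vert\gamma-G\Vert_{\infty}^{2}\) (the paper's Lemma \ref{quadratic expansion of S(b,r)}, driven by C9 and the sign result of Lemma \ref{Lemma for sign of partial derivative}); a bracketing maximal inequality for a localized class of indicator differences (Lemmas \ref{envelope}--\ref{verify maximal inequality}); and a peeling rate lemma accommodating the nuisance parameter (Lemma \ref{rate of convergence lemma}), applied with \(\delta_{N}=N^{-1/3}\) and \(\varepsilon_{N}\le\delta_{N}\) from C11. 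Your treatment of the discrete coordinate \(b_{1,1}\) and of the first-stage drift is also as in the paper.

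There is, however, a genuine gap in the key empirical-process step. You assert \(\phi_{N}(\delta)\asymp\sqrt{\delta}\) from (i) an envelope of \(L_{2}\)-size \(O(\sqrt{\max\{\delta,\varepsilon_{N}\}})\) and (ii) convergence of the entropy integral, using only the \emph{global} entropy of \(C_{M}^{\alpha}\). That is not enough: composing \(C_{M}^{\alpha}\) through the indicators (via the bounded conditional density of \(w^{\prime}\beta\)) gives an \(L_{2}(P)\)-bracketing entropy of order \(t^{-2q/\alpha}\) with no dependence on \(\delta\) or \(\varepsilon_{N}\), so the bracketing integral up to the envelope norm is of order \((\max\{\delta,\varepsilon_{N}\})^{(\alpha-q)/(2\alpha)}\), strictly larger than \(\sqrt{\max\{\delta,\varepsilon_{N}\}}\); feeding this modulus into the peeling balance \(\phi_{N}(\delta_{N})\le\sqrt{N}\delta_{N}^{2}\) yields only \(\delta_{N}\asymp N^{-1/(3+q/\alpha)}\), slower than \(N^{-1/3}\). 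The missing idea is the \emph{localized} entropy bound of the paper's Lemma \ref{entropy bound}: since \(\Vert\gamma-G\Vert_{\alpha}\le\varepsilon\), the rescaled functions \((\gamma-G)/\varepsilon\) lie in \(C_{1}^{\alpha}\), and the finite-dimensional part is a Lipschitz class of diameter \(\delta\), whence \(\log N_{[]}(t,\digamma_{\delta,\varepsilon},L_{2}(P))\lesssim(\max\{\delta,\varepsilon\})^{q/\alpha}t^{-2q/\alpha}\); that prefactor is exactly what makes the entropy integral \(\asymp\sqrt{\max\{\delta,\varepsilon\}}\) and preserves the cube-root rate. Relatedly, you define \(\mathcal{F}_{\delta}\) with \(\Vert\gamma-G\Vert_{\alpha,p}\le M\) for fixed \(M\), under which the claimed envelope bound \(E[F_{\delta}^{2}]=O(\delta)\) fails (the indicator difference is not small when \(\gamma\) is far from \(G\) in sup norm); you need the localization \(\Vert\gamma-G\Vert_{\alpha,p}\le C\varepsilon_{N}\), legitimate because C11 places \(\widehat G\) in that set with probability tending to one, which you invoke only informally at the end.
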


If $G$ were priorly known to the researcher, the preference parameters $%
\beta $ could be estimated by the single stage maximum score estimator $%
\widehat{\beta }_{G}$, defined as 
\begin{equation}
\widehat{\beta }_{G}=\arg \max\nolimits_{b\in \Theta }\unit{S}_{N}(b,G).
\label{beta_hat_G}
\end{equation}%
Kim and Pollard\thinspace (1990) showed that $\widehat{\beta }_{G}$
converges to $\beta $ at cube root rate and established its asymptotic
distribution. In the case of unknown $G$, Theorem \ref{thm-rate} implies
that the two-stage estimator $\widehat{\beta }$ retains the same convergence
rate as the infeasible estimator $\widehat{\beta }_{G}$. Indeed if condition
C11 is strengthened for faster convergence of first stage estimates, we can
establish the oracle property that $N^{1/3}(\widehat{\beta }-\beta )$ and $%
N^{1/3}(\widehat{\beta }_{G}-\beta )$ have the same limiting distribution.
Therefore, the inference on $\beta $ can be carried out by subsampling
(Delgado et al.\thinspace (2001)) since the standard bootstrap cannot be
used to estimate the distribution of the maximum score estimator
consistently (Abrevaya and Huang (2005)). We now state the asymptotic
distributional equivalence result in the next theorem.\medskip

\begin{theorem}[Asymptotic Distribution]
\label{thm-distribution}Suppose all assumptions stated in Theorem \ref%
{thm-rate} hold with the additional restriction that the sequence $%
\varepsilon _{N}$ stated in \textbf{C11 }further satisfies $\varepsilon
_{N}=o(N^{-1/3})$. Then $N^{1/3}(\widehat{\beta }-\beta )$ is asymptotically
equivalent in distribution to $N^{1/3}(\widehat{\beta }_{G}-\beta )$.
\end{theorem}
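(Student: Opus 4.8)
The plan is to show that the two-stage estimator $\widehat{\beta}$ and the infeasible estimator $\widehat{\beta}_G$ solve asymptotically equivalent optimization problems, so that the same argmax machinery (in the style of Kim and Pollard (1990)) delivers identical limiting distributions. First I would recenter and rescale: set $b = \beta + N^{-1/3} s$ for $s$ in a compact set (Theorem \ref{thm-rate} guarantees the relevant $s$ are $O_p(1)$), and study the localized processes $\mathbb{Z}_N(s) \equiv N^{2/3}\bigl[S_N(\beta + N^{-1/3}s, \widehat G) - S_N(\beta, \widehat G)\bigr]$ and its infeasible analogue $\mathbb{Z}_N^G(s)$ with $\widehat G$ replaced by $G$. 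The key claim to establish is that $\sup_{\|s\|_E \le K} |\mathbb{Z}_N(s) - \mathbb{Z}_N^G(s)| = o_p(1)$ for every fixed $K$; given this, both processes converge weakly to the same Gaussian-plus-quadratic limit (the drift $-\tfrac12 s'\!\!\;$-type term coming from $V$ and the Brownian-motion-like term), and the argmax continuous-mapping / Argmax theorem yields $N^{1/3}(\widehat\beta - \beta) \rightsquigarrow$ the same limit as $N^{1/3}(\widehat\beta_G - \beta)$.

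To prove that negligibility claim I would decompose $\mathbb{Z}_N(s) - \mathbb{Z}_N^G(s)$ into a deterministic (drift) part and a stochastic (empirical-process) part. Write the difference of indicators $1\{z'b_1 + \widehat G(x)'b_2 > 0\} - 1\{z'b_1 + G(x)'b_2 > 0\}$; this is nonzero only on the event that $z'b_1 + G(x)'b_2$ lies in a band of half-width $\|\widehat G - G\|_\infty \, \|b_2\|_E$ around zero. Using the Lipschitz bound C4 on $F(\cdot;b)$ together with C5, the expected measure of this band is $O_p(\|\widehat G - G\|_\infty) = o_p(1)$; multiplying by $N^{2/3}$ and by the further $N^{-1/3}$ gain from the local difference against the $b=\beta$ baseline, and invoking C11 with $\varepsilon_N = o(N^{-1/3})$ — so that $N^{1/3}\varepsilon_N \to 0$ — makes the drift part $o_p(1)$. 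For the stochastic part I would use a maximal inequality over the relevant class of functions: the class indexed by $(s, \gamma)$ with $\gamma$ ranging over a shrinking $\|\cdot\|_{\alpha,p}$-neighborhood of $G$. Assumption C10 ($\Lambda_j = C_M^\alpha$ with $\alpha > q$) controls the bracketing/covering entropy of $\Lambda$, so the entropy of the product class is manageable; combined with the band being thin (envelope shrinking at rate $\varepsilon_N + N^{-1/3}$) and the variance bound from C4, a chaining argument (e.g., along the lines of van der Vaart and Wellner (1996), or the preliminary lemmas in the Appendices) shows the centered stochastic fluctuation, multiplied by $N^{2/3}$, is $o_p(1)$ uniformly in $\|s\|_E \le K$. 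Here C8 (finite third moments of $\widetilde w$) and C6 (bounded support of $\widetilde z$) are what make the envelope integrable and the relevant moments finite.

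The main obstacle, and the place I expect to spend the most care, is the uniform stochastic equicontinuity step: controlling $N^{2/3}$ times an empirical process indexed jointly by the finite-dimensional localization $s$ and the infinite-dimensional nuisance $\gamma$ in a $\|\cdot\|_{\alpha,p}$-ball of radius $O_p(\varepsilon_N)$, where the underlying function class has indicators (hence is not Donsker-trivial) and the envelope is shrinking. The delicate accounting is that the $N^{-1/3}$ cube-root scaling must beat the combined effect of the entropy of $\Lambda$ and the band-width; this is precisely why $\alpha > q$ (so the metric entropy grows slower than any polynomial rate that would spoil the chaining bound) and why C11 needs to be strengthened to $\varepsilon_N = o(N^{-1/3})$ rather than merely $O(N^{-1/3})$ — the strict little-$o$ is what converts an $O_p(1)$ bound on the difference process into the required $o_p(1)$. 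Once the two localized processes are shown to be uniformly close, the rest is a routine appeal to the argmax theorem together with Kim and Pollard's identification of the limit process for $\mathbb{Z}_N^G$, which is given.
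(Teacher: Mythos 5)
Your overall architecture matches the paper's: everything reduces to showing
$\sup_{\Vert b-\beta\Vert_{E}\leq CN^{-1/3}}\bigl\vert [S_{N}(b,\widehat{G})-S_{N}(\beta ,\widehat{G})]-[S_{N}(b,G)-S_{N}(\beta ,G)]\bigr\vert =o_{p}(N^{-2/3})$,
split into a population (drift) part and a centered empirical-process part, after which the equivalence follows (the paper invokes Kim and Pollard's Theorem 1.1 for approximate maximizers rather than rerunning an argmax continuous-mapping argument, but that difference is cosmetic). The genuine gap is in your stochastic equicontinuity step. You take as envelope for the relevant class an indicator of a band of width of order $\varepsilon _{N}+N^{-1/3}$, which is dominated by $N^{-1/3}$; by C4 its $L_{2}(P)$ norm is then of order $N^{-1/6}$, and the bracketing/chaining bound with entropy exponent $q/\alpha <1$ from C10 delivers only $O(N^{-1/2}\cdot N^{-1/6})=O(N^{-2/3})$. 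That is exactly the borderline order of the cube-root fluctuations themselves, so after multiplying by $N^{2/3}$ you obtain $O_{p}(1)$, not the asserted $o_{p}(1)$; with this envelope the strengthening $\varepsilon _{N}=o(N^{-1/3})$ never enters the bound, because the $N^{-1/3}$ term swallows $\varepsilon _{N}$. What is needed -- and what the paper's Lemmas \ref{envelope copy(1)} and \ref{entropy bound copy(1)} supply -- is that the functions to be controlled are differences in the $\gamma $-direction, recentered also at $b=\beta $, i.e. $\tau (2d-1)\bigl[\overline{m}_{\widetilde{b},\gamma }-\overline{m}_{\widetilde{b},G}\bigr]$ in the paper's notation: their envelope is supported on bands whose width is of order $\min \{\delta ,\varepsilon \}=O(\varepsilon _{N})$, not $\delta +\varepsilon $, independently of the $N^{-1/3}$ localization in $b$. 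This gives an $L_{2}$ envelope of order $\sqrt{\varepsilon _{N}}=o(N^{-1/6})$ and, with $\alpha >q$, a maximal-inequality bound of order $N^{-1/2}(\max \{\delta ,\varepsilon \})^{q/(2\alpha )}(\min \{\delta ,\varepsilon \})^{(\alpha -q)/(2\alpha )}=o(N^{-2/3})$. That is precisely where $\varepsilon _{N}=o(N^{-1/3})$ does its work in the stochastic term; your sketch states that the little-$o$ converts $O_{p}(1)$ into $o_{p}(1)$ but the accounting you give does not produce it.

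A secondary, fillable weakness concerns the drift part. You assert an extra factor $N^{-1/3}$ from ``the local difference against the $b=\beta $ baseline'' using only C4 and C5. The gain is real, but it does not follow from a band-measure bound: it comes from the second-order expansion of the population criterion (the paper's Lemma \ref{quadratic expansion of S(b,r)}), in which the first-order term vanishes at the truth by the median restriction C3 together with the smoothness condition C7, the pure $\Vert \widehat{G}-G\Vert _{\infty }^{2}$ terms cancel in the difference of differences, and only the cross term $e(\widetilde{b},\widehat{G})=O(\Vert b-\beta \Vert _{E}\,\Vert \widehat{G}-G\Vert _{\infty })=o_{p}(N^{-2/3})$ survives. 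So C3, C7 and C9 -- not C4/C5 -- are what license that step, and without the expansion the band argument alone only gives $O_{p}(\varepsilon _{N})=o_{p}(N^{-1/3})$, which is insufficient.
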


\section{Monte Carlo Simulations}

We employ the following data generating process (DGP) in simulation study of
the two-stage maximum score estimator:%
\begin{equation*}
d=1\{z\beta _{1}+G(x)\beta {}_{2}>\varepsilon \},
\end{equation*}%
where $G(x)=E(y|x,d=1)-E(y|x,d=0)$, $z\sim \text{Logistic}$, $x\sim N(0,1)$
and $\varepsilon =0.25\eta \sqrt{1+z^{2}+x^{2}}$ with $\eta |(x,z)\sim
N(0,1) $. The scalar random variable $y$ is generated according to 
\begin{equation}
y=d(\gamma _{01}+\gamma _{11}m(x)+u_{1})+(1-d)(\gamma _{00}+\gamma
_{10}m(x)+u_{0}),  \label{y}
\end{equation}%
where $(u_{1},u_{0})$ are independent of $(x,z,\varepsilon )$ and are
jointly normally distributed with $E(u_{1})=E(u_{0})=0$, $%
Var(u_{1})=Var(u_{0})=\sigma _{u}^{2}$, and $Cov(u_{1},u_{0})=\rho $. Given (%
{\ref{y}), \ }%
\begin{equation*}
G(x)=\gamma _{01}-\gamma _{00}+(\gamma _{11}-\gamma _{10})m(x).
\end{equation*}%
We consider the following two types of the $m(x)$ function:%
\begin{align}
\text{Linear} &\text{ : } m(x)=x,  \label{L} \\
\text{Nonlinear} &\text{ : } m(x)=x^{2}\tan ^{-1}x.  \label{N}
\end{align}%
The true parameter values are specified as follows: $\beta_1 = 1$, $\beta_2
= 1$, $\gamma_{01} = 0.2$, $\gamma_{11} = 0.1$, $\gamma_{00} = 0.1$, $%
\gamma_{10} = 0.4$, $\rho = -0.8$, and $\sigma_u = 0.33$.


We compare infeasible single-stage estimator using $(z,G(x))$ as regressors
and also the feasible two-stage estimator using $(z,\widehat{G}(x))$ as
regressors. We consider both parametric and nonparametric first stage
estimators. For the former, we estimate $E(y|x,d=j)$ by running OLS of $y$
on $x$ with an intercept term using $d=j$ subsamples. For the latter, we
implement Nadaraya-Watson kernel regression estimators. The nonparametric
estimators of $E(y|x,d=j)$, $j\in \{0,1\}$ are constructed as 
\begin{equation}
\frac{\sum\limits_{i=1}^{N}y_{i}K(\widehat{\sigma }_{j}^{-1}h_{N}^{-1}\left(
x-x_{i}\right) )1\{d_{i}=j\}}{\sum\limits_{i=1}^{N}K(\widehat{\sigma }%
_{j}^{-1}h_{N}^{-1}\left( x-x_{i}\right) )1\{d_{i}=j\}}  \label{nwk}
\end{equation}%
where $\widehat{\sigma }_{j}$ is the estimated standard deviation of $x_{i}$
conditional on $d_{i}=j$, $K(.)$ is a univariate kernel function and $h_{N}$
is a deterministic bandwidth sequence. We use two types of kernel and
bandwidth configurations. \medskip

For the first type, we use the second-order Gaussian kernel and set $h_{N}$
to be $cN^{-1/5}$ for various values of the bandwidth scale $c$. For the
second type, we use the following 8th order kernel function (see, e.g.,
Bierens (1987, p. 112) and Andrews (1995, p.\thinspace 567)): 
\begin{equation}
K(x)\equiv \dsum\limits_{s=1}^{4}a_{s}\left\vert b_{s}^{{}}\right\vert
^{-1}\exp \left[ -x^{2}/(2b_{s}^{2})\right] ,  \label{K}
\end{equation}%
where the constants $\left( a_{s},b_{s}\right) ,$ $s\in \{1,...,4\}$ satisfy 
\begin{equation}
\dsum\limits_{s=1}^{4}a_{s}=1\text{ and }\dsum%
\limits_{s=1}^{4}a_{s}b_{s}^{2l}=0\text{ for }l\in \{1,2,3\}\text{.}
\label{ab}
\end{equation}%
We specify $b_{s}=s^{-1/2}$ and then solve $a_{s}$ as solution of the system
of linear equations (\ref{ab}). Associated with this kernel, the bandwidth $%
h_{N}$ is set to be $cN^{-19/360}$ for various values of the scale $c$.%
\footnote{%
As noted by Bierens (1987, p.\thinspace 113), choice of the constants $%
\left( a_{s},b_{s}\right) $ for the kernel function is less crucial since
its effect on asymptotic variance of the conditional mean estimator can be
captured via the bandwidth scale $c$.} By Theorem 1(b) of Andrews (1995),
kernel regression estimator of $G(x)$ based on the second type configuration
has convergence property required in (\ref{der}) with $\sigma (k)\leq 2$ and 
$\varepsilon _{N}=N^{-41/120}$, thus fulfilling regularity conditions C5 and
C11 of Section \ref{main results}. The first stage estimation with the
second-order kernels satisfies condition C5 but may not satisfy C11;
however, we experiment with the second-order kernels as well since kernel
estimates with the second-order kernels often outperform those with the
higher-order kernels in small samples.\footnote{%
See e.g., Marron and Wand (1992) and Efromovich (2001) for theoretical
arguments why the higher-order kernels may perform poorly in small samples.}%
\medskip

To implement the second-stage estimator using nonparametric first stage
estimators, we trim the data by setting $\tau _{i}=1\{\left\vert
x_{i}\right\vert \leq 1.95\}$ where $\tau _{i}$ is the weight introduced in (%
\ref{score}). The estimates of $\beta _{1}$ and $\beta _{2}$ are obtained
using grid search method.{\ }We report simulation results of $\widehat{\beta 
}_{2}$ for the parameter capturing the agent's uncertainty. Let $\widehat{%
\beta }_{2,\func{Single}}$, $\widehat{\beta }_{2,OLS}$, $\widehat{\beta }%
_{2,Kernel\_2nd}$ and $\widehat{\beta }_{2,Kernel\_8th}$ respectively denote
the estimators $\widehat{\beta }_{2}$ that are constructed based on the
infeasible single-stage, two-stage (OLS first stage) and two-stage (kernel
regression first stage implemented with the 2nd and 8th order kernels)
maximum score estimators. We compute bias, median, root mean squared error
(RMSE), mean absolute deviation (mean AD) and median absolute deviation
(median AD) of these estimators based on 1000 simulation repetitions for
sample size $N\in \{300,500,1000\}$.\medskip

Tables 1-6 present simulation results for the four types of estimators of $%
\beta _{2}$ under linear and nonlinear designs of the $G(x)$ function.
Tables 7 and 8 graph the simulated empirical distribution functions (edf)
for $N^{1/3}(\widehat{\beta }_{2,\func{Single}}-\beta _{2})$, $N^{1/3}(%
\widehat{\beta }_{2,OLS}-\beta _{2})$, $N^{1/3}(\widehat{\beta }%
_{2,Kernel\_2nd}-\beta _{2})$ and $N^{1/3}(\widehat{\beta }%
_{2,Kernel\_8th}-\beta _{2})$. As expected, for linear setup of $G$ the
estimator $\widehat{\beta }_{2,OLS}$ enjoys the best overall finite-sample
performance among all two-stage estimators. However, this estimator also
incurs huge bias when agent's conditional expectation is nonlinear. For the
estimators $\widehat{\beta }_{2,Kernel\_2nd}$ and $\widehat{\beta }%
_{2,Kernel\_8th}$, the function $G$ is nonparametrically estimated at the
first stage. Hence regardless of nonlinearity of $G$, we see that the
simulated bias, RMSE, mean AD and median AD of these estimators generally
decrease as sample size grows. \medskip

We note that the edf curves of Tables 7 and 8 for the (kernel first-stage)
two-stage estimators broadly match shapes of those for the infeasible
estimators. Interestingly, finite sample behavior of the estimator $\widehat{%
\beta }_{2,Kernel\_2nd}$ fits that of $\widehat{\beta }_{2,\func{Single}}$
better than its counterpart implemented with the 8th order kernel. Use of
higher order kernels allows for verification of convergence of $\widehat{G}$
to $G$ in the strong norm $\left\Vert \cdot \right\Vert _{\alpha ,p}$.
However, as well known in the literature, the estimates with the
higher-order kernels seem to perform poorly in simulations relative to those
with the second-order kernels. The superb performance of $\widehat{\beta }%
_{2,Kernel\_2nd}$ suggests that the asymptotic distributional equivalence
result in Theorem 3 may not give us sharp asymptotics and there is scope to
develop further asymptotic theory. This is an interesting future research
topic.

\section{Further Applications of Two-Step Maximum Score Estimation with
First-Stage Nonparametric Estimation}

Our paper has been motivated by estimation problem in the binary choice
model under uncertainty. However, the resulting estimator has wider
applicability than just this model. To further motivate our two-step
estimation procedure, this section gives a couple of additional econometric
models for which unknown parameters can be estimated by maximum score with
nonparametrically generated regressors.\medskip

We first consider maximum score estimation of an incomplete information
games. Aradillas-Lopez (2012) developed a two-step procedure for estimation
of incomplete information games with Nash equilibrium behavior. Equation (2)
of Aradillas-Lopez (2012, p.\thinspace 123) gives a description of players'
behavior in a $2\times 2$ game: 
\begin{align*}
Y_{1}& =1\{X_{1}^{\prime }\beta _{1}+\Delta _{1}\text{Pr}[Y_{2}=1|X]-\zeta
_{1}\geq 0\}, \\
Y_{2}& =1\{X_{2}^{\prime }\beta _{2}+\Delta _{2}\text{Pr}[Y_{1}=1|X]-\zeta
_{2}\geq 0\},
\end{align*}%
where $Y_{p}\in \{0,1\}$ is the binary action for player $p=1,2$, $X_{p}$
and $\zeta _{p}$ are observable and unobservable payoff covariates, $X\equiv
(X_{1}^{\prime },X_{2}^{\prime })^{\prime }$, and $\{(\beta _{p},\Delta
_{p}):p=1,2\}$ are unknown parameters.\medskip

Aradillas-Lopez (2012, Assumption A0, p.\thinspace 122) assumed that
players' behavior corresponds to a Bayesian-Nash equilibrium with a
degenerate selection mechanism. He further assumed that $\zeta _{1}$ and $%
\zeta _{2}$ are independent of each other, independent of $X$, and of the
selection mechanism.\medskip

We can make the same assumptions as in Aradillas-Lopez (2012), with one
exception. As in the previous section, we consider $\text{Med}(\zeta
_{p}|X)=0$ almost surely, instead of assuming the full independence between $%
\zeta _{p}$ and $X$, where $p=1,2$. Allowing for dependence between $\zeta
_{p}$ and $X$ might be important in applications when we suspect possible
interactions between observed covariates and unobserved components that
affect players' payoffs. Then for each $p=1,2$, we can estimate $(\beta
_{p},\Delta _{p})$ by running maximum score regression of $Y_{p}$ on $X_{p}$
and $G_{-p}(X)\equiv \text{Pr}[Y_{-p}=1|X]$ with the nonparametric first
stage estimation of $G_{-p}(X)$. Therefore, methodology of the present paper
can be applied to extension of Aradillas-Lopez (2012)'s context allowing
unobserved payoffs to exhibit unknown form of heteroskedasticity. \medskip

Our second application, which is based on Fox (2007), is maximum score
estimation of multinomial discrete-choice models using a subset of choices
under endogeneity. Fox (2007) proposed pairwise maximum score estimation of
multinomial discrete-choice models using a subset of choices. For
simplicity, assume that a researcher has data on only two choice, say $1$
and $2$, among $J(\geq 3)$ alternatives, and also assume that there exists
an endogenous covariate. Fox (2007, p.\thinspace 1013) solved the
endogeneity problem by including, instead of the endogenous covariate,
fitted values from the OLS regression of the endogenous covariate, say
price, on a vector of instruments. We can extend Fox (2007) to allow for
nonparametric fitted values. Then this extension again can be accommodated
in the framework of maximum score estimation with nonparametrically
generated regressors.

\section{Conclusions}

This paper has developed maximum score estimation of preference parameters
in the binary choice model under uncertainty in which the decision rule is
affected by conditional expectations. The estimation procedure is
implemented in two stages: we estimate conditional expectations
nonparametrically in the first stage and obtain the maximum score estimate
of the preference parameters in the second stage using choice data and the
first stage estimates. The paper has shown consistency and convergence rate
of the two-stage maximum score estimator. Moreover, we also establish the
oracle property in terms of asymptotic equivalence in distribution of the
two-stage estimator and its corresponding infeasible single-stage version.
These results are of independent interest for maximum score estimation with
nonparametrically generated regressors.\medskip

It would be an alternative approach to develop the second stage estimator
using Horowitz (1992)'s smoothed maximum score estimator or using a Laplace
estimator proposed in Jun, Pinkse, and Wan (2013). These alternative methods
would produce faster convergence rates but require extra tuning parameters.
Alternatively, we might build the second stage estimator based on Lewbel
(2000), who introduced the idea of a special regressor satisfying certain
conditional independence restriction. These are interesting future research
topics.

\appendix%

\section{Proof of Consistency}

Recall that $w=(z,G(x))$ and $S_{N}(b,\gamma )$ is the sample score function
defined by (\ref{score}). We first state and prove a preliminary lemma that
will be invoked in proving Theorem \ref{thm-consist} of the paper.\medskip

\begin{lemma}
\label{L1} Under Assumptions C1, C4 and C5, 
\begin{equation}
\sup\limits_{b\in \Theta }\left\vert S_{N}(b,\widehat{G})-S_{N}(b,G)\right%
\vert \overset{p}{\longrightarrow }0.  \label{s1}
\end{equation}
\end{lemma}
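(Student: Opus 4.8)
The plan is to reduce the problem to counting the data points at which the two indicator functions disagree. Writing out the definition of $S_{N}$ in (\ref{score}) and using $|\tau_{i}(2d_{i}-1)|\leq 1$, I would first obtain
\[
\sup_{b\in\Theta}\left|S_{N}(b,\widehat{G})-S_{N}(b,G)\right|
\leq \sup_{b\in\Theta}\frac{1}{N}\sum_{i=1}^{N}
\left|1\{z_{i}^{\prime}b_{1}+\widehat{G}(x_{i})^{\prime}b_{2}>0\}
-1\{z_{i}^{\prime}b_{1}+G(x_{i})^{\prime}b_{2}>0\}\right|.
\]
The two indicators can differ only when $z_{i}^{\prime}b_{1}+G(x_{i})^{\prime}b_{2}$ and $z_{i}^{\prime}b_{1}+\widehat{G}(x_{i})^{\prime}b_{2}$ lie on opposite sides of $0$, which forces $|w_{i}^{\prime}b|=|z_{i}^{\prime}b_{1}+G(x_{i})^{\prime}b_{2}|\leq|(\widehat{G}(x_{i})-G(x_{i}))^{\prime}b_{2}|\leq C\|\widehat{G}-G\|_{\infty}$, where $C\equiv\sup_{b\in\Theta}\|b_{2}\|_{E}<\infty$ by the compactness part of C1 (and where I use $\|\widehat{G}(x)-G(x)\|_{E}\leq\|\widehat{G}-G\|_{\infty}$ pointwise). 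Hence the right-hand side above is at most $\sup_{b\in\Theta}\frac{1}{N}\sum_{i=1}^{N}1\{|w_{i}^{\prime}b|\leq C\|\widehat{G}-G\|_{\infty}\}$.

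Next I would handle the randomness of the radius $\|\widehat{G}-G\|_{\infty}$ by freezing it. Fix $\delta>0$. On the event $\{\|\widehat{G}-G\|_{\infty}\leq\delta\}$, which by C5 has probability tending to $1$, the last quantity is bounded by $\sup_{b\in\Theta}\frac{1}{N}\sum_{i=1}^{N}1\{|w_{i}^{\prime}b|\leq C\delta\}$. The collection of sets $\{w:|w^{\prime}b|\leq C\delta\}$, $b\in\Theta$, is an intersection of two half-spaces indexed by $b$ (and $\Theta$ is a union of two such families), hence a VC class and therefore Glivenko--Cantelli, so
\[
\sup_{b\in\Theta}\left|\frac{1}{N}\sum_{i=1}^{N}1\{|w_{i}^{\prime}b|\leq C\delta\}-P(|w^{\prime}b|\leq C\delta)\right|\overset{p}{\longrightarrow}0.
\]
By C4, $P(|w^{\prime}b|\leq C\delta)=F(C\delta;b)-F(-C\delta;b)\leq 2LC\delta$ uniformly in $b\in\Theta$. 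Combining these, with probability tending to $1$ we get $\sup_{b\in\Theta}|S_{N}(b,\widehat{G})-S_{N}(b,G)|\leq 2LC\delta+o_{p}(1)$, and since $\delta>0$ is arbitrary, the supremum converges to $0$ in probability, which is (\ref{s1}).

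The only genuinely delicate point is that, because the criterion is not smooth (indeed not even continuous) in $b$, I cannot pass from a pointwise-in-$b$ bound to a uniform one by a continuity/compactness argument; I need the uniform law of large numbers over the indicator class together with the fact that this class has finite VC dimension. The interplay between the stochastic radius $\|\widehat{G}-G\|_{\infty}$ and this uniform bound is managed by the standard device of replacing the radius by a deterministic $\delta$ on a high-probability event and then sending $\delta\downarrow 0$ at the very end; the Lipschitz condition C4 is exactly what makes the resulting deterministic bound vanish as $\delta\downarrow 0$.
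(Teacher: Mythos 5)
Your argument is essentially the paper's own proof: bound the score difference by the empirical frequency of the event $\{|w'b|\leq C\Vert \widehat{G}-G\Vert _{\infty }\}$, replace the random radius by a deterministic $\delta $ on the high-probability event supplied by C5, invoke a VC/Glivenko--Cantelli uniform law over $b\in \Theta $ for these index sets, and use the Lipschitz condition C4 to make the population probability of order $\delta $, then send $\delta \downarrow 0$. The only (easily repaired) blemish is that you drop $\tau _{i}$ before using $\Vert \widehat{G}(x_{i})-G(x_{i})\Vert _{E}\leq \Vert \widehat{G}-G\Vert _{\infty }$, which is justified only for $x_{i}\in \mathcal{X}$ since $\Vert \cdot \Vert _{\infty }$ takes suprema over $\mathcal{X}$; retaining the factor $\tau _{i}$ in the bound (as the paper does) restricts attention to trimmed observations and closes this point without changing anything else.
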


\begin{proof}[Proof of Lemma \protect\ref{L1}]
Note that
\begin{equation}
\left\vert S_{N}(b,\widehat{G})-S_{N}(b,G)\right\vert \leq \frac{1}{N}%
\dsum\limits_{i=1}^{N}\tau _{i}1\left\{ \left\vert ({\widehat{G}(x}_{i})-G{(x%
}_{i}))^{\prime }b_{2}\right\vert \geq \left\vert w_{i}^{\prime
}b\right\vert \right\} .  \label{s}
\end{equation}%
By Assumption C1, $\left\Vert b_{2}\right\Vert _{E}<B_{2}$ for some finite
positive constant $B_{2}$. Therefore, the right-hand side of the inequality (%
\ref{s}) is bounded above by%
\begin{equation}
\tilde{\Gamma} _{N}\equiv P_{N}\left( \tau =1,B_{2}\left\Vert \widehat{G{}}%
-G\right\Vert _{\infty }\geq \left\vert w^{\prime }b\right\vert \right),
\label{b}
\end{equation}%
where $P_{N}$ denotes the empirical probability. Note that the term (\ref{b}%
) is further bounded above by%
\begin{equation}
\Gamma _{N}\equiv P_{N}\left( B_{2}\left\Vert \widehat{G{}}-G\right\Vert
_{\infty }\geq \left\vert w^{\prime }b\right\vert \right) .  \label{gamma}
\end{equation}%
Let $E_{\eta }$ denote the event $\left\Vert \widehat{G{}}-G\right\Vert
_{\infty }<\eta $ for some $\eta >0$. Then given $\epsilon >0$,%
\begin{eqnarray*}
&&P(\sup\nolimits_{b\in \Theta }\Gamma _{N}>\epsilon )\leq
P(\sup\nolimits_{b\in \Theta }\Gamma _{N}>\epsilon ,E_{\eta })+P(E^c_{\eta })
\\
&\leq &P\left[ \sup\nolimits_{b\in \Theta }P_{N}\left( B_{2}\eta \geq
\left\vert w^{\prime }b\right\vert \right) >\epsilon \right] +P(E^c_{\eta }).
\end{eqnarray*}%
By Assumption C5, $P(E^c_{\eta })\longrightarrow 0$ as $N\longrightarrow
\infty $. Hence, to show (\ref{s1}), it remains to establish that as $%
N\longrightarrow \infty $,%
\begin{equation}
P\left[\sup\nolimits_{b\in \Theta }P_{N}\left( B_{2}\eta \geq \left\vert
w^{\prime }b\right\vert \right) >\epsilon \right] \longrightarrow 0.
\label{delta}
\end{equation}%
Note that by Assumption C4, $P\left( B_{2}\eta \geq \left\vert w^{\prime
}b\right\vert \right) \leq 2LB_{2}\eta $. Therefore, we have that
\begin{eqnarray}
&&P \left[ \sup\nolimits_{b\in \Theta }P_{N}\left( B_{2}\eta \geq \left\vert
w^{\prime }b\right\vert \right) >\epsilon \right]  \notag \\
&\leq &P \left[ \sup\nolimits_{b\in \Theta }\left\vert P_{N}\left( B_{2}\eta
\geq \left\vert w^{\prime }b\right\vert \right) -P\left( B_{2}\eta \geq
\left\vert w^{\prime }b\right\vert \right) \right\vert >\epsilon
-2LB_{2}\eta \right],  \label{s2}
\end{eqnarray}%
where $\eta$ is taken to be sufficiently small such that $\epsilon
-2LB_{2}\eta > 0$ for the given $\epsilon$. By Lemma 9.6, 9.7 (ii) and 9.12
(i) of Kosorok (2008), the family of sets $\left\{ B_{2}\eta \geq \left\vert
w^{\prime }b\right\vert \right\} $ for $b\in \Theta $ forms a Vapnik-\v{C}%
ervonenkis class. Therefore, by Glivenko-Cantelli Theorem (see, e.g. Theorem
2.4.3 of van der Vaart and Wellner (1996)), the right-hand side of (\ref{s2}%
) tends to zero as $N\longrightarrow \infty $. Hence, the convergence result
in (\ref{delta}) holds and Lemma \ref{L1} thus follows.
\end{proof}

We now prove Theorem \ref{thm-consist} for consistency of $\widehat{\beta }$%
.\medskip

\begin{proof}[Proof of Theorem \protect\ref{thm-consist}]
For any $(b,\gamma )$, define
\begin{equation*}
S(b,\gamma )\equiv E\left[ \tau (2d-1)1\{z{}^{\prime }b{}_{1}+\gamma
(x)^{\prime }b{}_{2}>0\}\right] .
\end{equation*}%
Given Assumptions C1 - C3 and by Manski (1985, Lemma 3, p.\thinspace 321), $%
\beta $ uniquely satisfies $\beta =\arg \max_{b\in \Theta }S(b,G)$. We now
look at the difference%
\begin{equation}
\left\vert S_{N}(b,\widehat{G})-S(b,G)\right\vert \leq \left\vert S_{N}(b,%
\widehat{G})-S_{N}(b,G)\right\vert +\left\vert S_{N}(b,G)-S(b,G)\right\vert ,
\label{d}
\end{equation}%
where by Lemma \ref{L1}, the first term of the right-hand side of (\ref{d})
converges to zero in probability uniformly over $b\in \Theta $, whilst by
Manski (1985, Lemma 4, p.\thinspace 321), the second term converges to zero
almost surely uniformly over $b\in \Theta $. Therefore, we have that
\begin{equation*}
\sup\limits_{b\in \Theta }\left\vert S_{N}(b,\widehat{G})-S(b,G)\right\vert
\overset{p}{\longrightarrow }0.
\end{equation*}%
By Lemma 5 of Manski (1985, p.\thinspace 322), $S(b,G)$ is continuous in $b$%
. Given these results, Theorem 1 thus follows by application of the
consistency theorem in Newey and McFadden (1994, Theorem 2.1).
\end{proof}

\section{Lemma on the Rates of Convergence of a Two-Stage M-Estimator with a
Non-smooth Criterion Function}

We first present and prove a general lemma establishing the rates of
convergence of a general two-stage M-estimator under high level assumptions.
In next section, we prove Theorem \ref{assumption-rate} by verifying these
assumptions for the particular estimator given by (\ref{beta_hat}) under the
regularity conditions of C1 - C11.

To present a general result, let $s\mapsto m_{\theta ,h}(s)$ be measurable
functions indexed by parameters $\left( \theta ,h\right) $. Let $\Theta $
and $H$ be the space of parameters $\theta $ and $h$, respectively. Let $%
\left( \theta ^{\ast },h^{\ast }\right) $ denote the true parameter value.
We assume $\left( \theta ^{\ast },h^{\ast }\right) \in \Theta \times H$. Let 
$S_{N}\left( \theta ,h\right) \equiv \sum_{i=1}^{N}m_{\theta ,h}(s_{i})/N$
be the empirical criterion of the M-estimation problem where $\left(
s_{i}\right) _{i=1}^{N}$ are i.i.d. random vectors. Suppressing the
individual index, let $S\left( \theta ,h\right) \equiv E\left[ m_{\theta
,h}(s)\right] $ be the population criterion. For a given first stage
estimate $\widehat{h}$, let the estimator $\widehat{\theta }$ be constructed
as 
\begin{equation}
\widehat{\theta }=\arg \sup\nolimits_{\theta \in \Theta }S_{N}\left( \theta ,%
\widehat{h}\right) \text{.}  \label{thetahat}
\end{equation}

Let $d_{\Theta }(\theta ,\theta ^{\ast })$ and $d_{H}(h,h^{\ast })$ be
non-negative functions measuring discrepancies between $\theta $ and $\theta
^{\ast }$, and $h$ and $h^{\ast }$, respectively. Note that $d_{\Theta }$
and $d_{H}$ are usually related to but not necessarily the same as the
metrics specified for the spaces $\Theta $ and $H$. Given a non-stochastic
positive real sequence $\varepsilon _{N}$, define $H_{N}(C)\equiv \{h\in
H:d_{H}(h,h^{\ast })\leq C\varepsilon _{N}\}$. To simplify the presentation,
we use the notation $\lesssim $ to denote being bounded above up to a
universal constant. Define the recentered criterion 
\begin{equation}
\widetilde{S}_{N}(\theta ,h)\equiv (S_{N}(\theta ,h)-S_{N}(\theta ^{\ast
},h))-(S(\theta ,h)-S(\theta ^{\ast },h)).  \label{recentered criterion}
\end{equation}%
The following lemma modifies the rate of convergence results developed by
van der Vaart (1998, Theorem 5.55) and provides sufficient conditions
ensuring that $\widehat{\theta }$ retains the same convergence rate as it
would have if $h^{\ast }$ were known.\medskip

\begin{lemma}[Rate of convergence for a general two-stage M-estimator]
\label{rate of convergence lemma} For any fixed and sufficiently large $C>0$%
, assume that for all sufficiently large $N$, 
\begin{equation}
\sup\nolimits_{h\in H_{N}(C)}\left\vert S(\theta ^{\ast },h)-S(\theta ^{\ast
},h^{\ast })\right\vert \lesssim (C\varepsilon _{N})^{2}
\label{inequality 2}
\end{equation}%
and there is a sequence of non-stochastic functions $e_{N}:\Theta \times
H_{N}(C)\longmapsto R$ such that for all sufficiently small $\delta >0$ and
for every $\left( \theta ,h\right) \in \Theta \times H_{N}(C)$ satisfying $%
d_{\Theta }(\theta ,\theta ^{\ast })\leq \delta $, 
\begin{eqnarray}
&&S(\theta ,h)-S(\theta ^{\ast },h^{\ast })+e_{N}(\theta ,h)\lesssim
-d_{\Theta }^{2}(\theta ,\theta ^{\ast })+d_{H}^{2}(h,h^{\ast }),
\label{quadratic expansion} \\
&&\sup_{d_{\Theta }(\theta ,\theta ^{\ast })\leq \delta ,\left( \theta
,h\right) \in \Theta \times H_{N}(C)}\left\vert e_{N}(\theta ,h)\right\vert
\lesssim C\delta \varepsilon _{N},  \label{inequality 1}
\end{eqnarray}%
and 
\begin{equation}
E\left[ \sup_{d_{\Theta }(\theta ,\theta ^{\ast })\leq \delta ,\left( \theta
,h\right) \in \Theta \times H_{N}(C)}\left\vert \widetilde{S}_{N}(\theta
,h)\right\vert \right] \lesssim \frac{\phi _{N}(\delta )}{\sqrt{N}},
\label{maximal inequality}
\end{equation}%
where $\phi _{N}(\delta )$ is a sequence of functions defined on $\left(
0,\infty \right) $ and satisfies that $\phi _{N}(\delta )\delta ^{-\alpha }$
is decreasing for some $\alpha <2$. Suppose $d_{H}(\widehat{h},h^{\ast
})=O_{p}(\varepsilon _{N})$, $d_{\Theta }(\widehat{\theta },\theta ^{\ast
})=o_{p}(1)$ and there is a non-stochastic positive real sequence $\delta
_{N}$ which tends to zero as $N\longrightarrow \infty $ and satisfies that $%
\varepsilon _{N}\leq \delta _{N}$ and $\phi _{N}(\delta _{N})\leq \sqrt{N}%
\delta _{N}^{2}$ for every $N$. Then $d_{\Theta }(\widehat{\theta },\theta
^{\ast })=O_{p}(\delta _{N})$.
\end{lemma}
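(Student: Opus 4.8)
The plan is to follow the classical peeling (chaining over shells) argument of van der Vaart (1998, Theorem 5.55), adapted to accommodate the estimated nuisance parameter $\widehat h$. First I would fix a large constant $C$ so that the event $\mathcal{A}_N \equiv \{d_H(\widehat h, h^\ast) \le C\varepsilon_N\}$ has probability arbitrarily close to $1$ for all large $N$; this is possible by the hypothesis $d_H(\widehat h, h^\ast) = O_p(\varepsilon_N)$. On this event $\widehat h \in H_N(C)$, so throughout the argument we may restrict attention to $h \in H_N(C)$ and use the displayed inequalities \eqref{inequality 2}--\eqref{maximal inequality}. Since $d_\Theta(\widehat\theta,\theta^\ast) = o_p(1)$, we may also restrict to $d_\Theta(\widehat\theta,\theta^\ast) \le \delta$ for the fixed small $\delta$ for which the quadratic expansion holds.

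Next I would set up the shells. For an integer $j$ with $2^{j-1}\delta_N < d_\Theta(\widehat\theta,\theta^\ast) \le 2^j \delta_N$ (and $2^j\delta_N \le \delta$), I want to bound $P(d_\Theta(\widehat\theta,\theta^\ast) \in (2^{j-1}\delta_N, 2^j\delta_N])$ and sum over $j$. On each shell, optimality of $\widehat\theta$ gives $S_N(\widehat\theta,\widehat h) \ge S_N(\theta^\ast,\widehat h)$, hence $\widetilde S_N(\widehat\theta,\widehat h) \ge -(S(\widehat\theta,\widehat h) - S(\theta^\ast,\widehat h))$. Now decompose $S(\widehat\theta,\widehat h) - S(\theta^\ast,\widehat h) = \bigl(S(\widehat\theta,\widehat h) - S(\theta^\ast,h^\ast)\bigr) - \bigl(S(\theta^\ast,\widehat h) - S(\theta^\ast,h^\ast)\bigr)$, and apply \eqref{quadratic expansion}, \eqref{inequality 1} to the first bracket and \eqref{inequality 2} to the second. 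On the shell $d_\Theta \asymp 2^j\delta_N$ this yields, up to universal constants, $-(S(\widehat\theta,\widehat h) - S(\theta^\ast,\widehat h)) \gtrsim (2^j\delta_N)^2 - d_H^2(\widehat h,h^\ast) - 2^j\delta_N\varepsilon_N \cdot C - (C\varepsilon_N)^2 \gtrsim (2^j\delta_N)^2$, because $\varepsilon_N \le \delta_N$ forces each of the three subtracted terms to be of smaller order than $(2^j\delta_N)^2$ once $C$ is fixed and $j \ge 0$ (the cross term $2^j\delta_N\varepsilon_N$ is dominated by $(2^j\delta_N)^2$ since $\varepsilon_N \le \delta_N \le 2^j\delta_N$). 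Therefore on the $j$th shell we must have $\sup \bigl|\widetilde S_N(\theta,\widehat h)\bigr| \gtrsim (2^j\delta_N)^2$, the supremum taken over $d_\Theta(\theta,\theta^\ast) \le 2^j\delta_N$ and $h \in H_N(C)$.

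Then I would apply Markov's inequality together with the maximal inequality \eqref{maximal inequality}: the probability of the $j$th shell event is at most a constant times $\phi_N(2^j\delta_N)\big/\bigl(\sqrt N (2^j\delta_N)^2\bigr)$. Using that $\delta \mapsto \phi_N(\delta)\delta^{-\alpha}$ is decreasing for some $\alpha < 2$, we get $\phi_N(2^j\delta_N) \le 2^{j\alpha}\phi_N(\delta_N)$, so the $j$th term is bounded by a constant times $2^{j(\alpha-2)}\phi_N(\delta_N)\big/\bigl(\sqrt N \delta_N^2\bigr) \le 2^{j(\alpha-2)}$, invoking the hypothesis $\phi_N(\delta_N)\le\sqrt N\delta_N^2$. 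Since $\alpha - 2 < 0$, summing the geometric series over $j \ge M$ gives a bound that is a constant times $2^{M(\alpha-2)}$, which can be made arbitrarily small by choosing $M$ large (uniformly in $N$). Adding back $P(\mathcal{A}_N^c)$ and $P(d_\Theta(\widehat\theta,\theta^\ast) > \delta)$, both $o(1)$, shows $P(d_\Theta(\widehat\theta,\theta^\ast) > 2^M\delta_N) \to 0$, i.e.\ $d_\Theta(\widehat\theta,\theta^\ast) = O_p(\delta_N)$.

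The main obstacle is the bookkeeping in the shell bound: one must check carefully that, after the three-way split of $S(\widehat\theta,\widehat h) - S(\theta^\ast,\widehat h)$, every error term — the nuisance curvature term $d_H^2(\widehat h,h^\ast)$, the cross term $e_N$ of order $C\,2^j\delta_N\varepsilon_N$, and the level-shift term of order $(C\varepsilon_N)^2$ — is genuinely of smaller order than the leading $(2^j\delta_N)^2$ for all shells $j \ge 0$ simultaneously, which is exactly where $\varepsilon_N \le \delta_N$ is used and where the fixed (not $N$-dependent) choice of $C$ matters. A secondary technical point is that the supremum in \eqref{maximal inequality} is over the whole ball $\{d_\Theta(\theta,\theta^\ast)\le\delta\}\times H_N(C)$ rather than a single point, so the peeling must be done only in the $\theta$-direction while the $h$-direction is handled uniformly through $H_N(C)$; one has to make sure the event "$\widehat\theta$ lies in the $j$th shell" implies the supremum over the corresponding $\theta$-ball of $|\widetilde S_N(\cdot,\widehat h)|$ is large, which is immediate since $\widehat\theta$ itself lies in that ball and $\widehat h \in H_N(C)$ on $\mathcal{A}_N$.
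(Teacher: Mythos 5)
Your proposal is correct and follows essentially the same route as the paper: the van der Vaart (1998, Theorem 5.55) peeling over $\theta$-shells of radius $2^{j}\delta_{N}$, the basic inequality from optimality of $\widehat{\theta}$ combined with the quadratic expansion, the $e_{N}$ and level-shift bounds, Markov's inequality with the maximal inequality, and the geometric tail via $\phi_{N}(2^{j}\delta_{N})\leq 2^{j\alpha}\phi_{N}(\delta_{N})\leq 2^{j\alpha}\sqrt{N}\delta_{N}^{2}$. The only cosmetic difference is that the paper's peeling sets additionally impose $d_{H}(h,h^{\ast})\leq 2^{-M}d_{\Theta}(\theta,\theta^{\ast})$ to dominate the $+d_{H}^{2}$ term, whereas you absorb $d_{H}^{2}(\widehat{h},h^{\ast})\leq (C\varepsilon_{N})^{2}\leq C^{2}\delta_{N}^{2}$ directly; this works, with the minor caveat that the claimed domination by $(2^{j}\delta_{N})^{2}$ holds only on shells with $2^{j}$ large relative to the fixed $C$ (not all $j\geq 0$), which is harmless since you sum only over $j\geq M$ and let $M\rightarrow\infty$ with $C$ fixed.
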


\begin{proof}
Based on the peeling technique of van der Vaart (1998, Theorem 5.55), for
each natural number $N$, integer $j$ and positive real $M$, construct the
set
\begin{equation*}
A_{N,j,M}(C)\equiv \big\{(\theta ,h)\in \Theta \times H_{N}(C):2^{j-1}\delta
_{N}<d_{\Theta }(\theta ,\theta ^{\ast })\leq 2^{j}\delta
_{N},\,d_{H}(h,h^{\ast })\leq 2^{-M}d_{\Theta }(\theta ,\theta ^{\ast })%
\big\}.
\end{equation*}%
Then we have that for any $\epsilon >0$,%
\begin{eqnarray}
&&P\left( d_{\Theta }(\widehat{\theta },\theta ^{\ast })\geq 2^{M}\left(
\delta _{N}+d_{H}(\widehat{h},h^{\ast })\right) ,\widehat{h}\in
H_{N}(C)\right)  \notag \\
&\leq &P(2d_{\Theta }(\widehat{\theta },\theta ^{\ast })>\epsilon )+P\left( (%
\widehat{\theta },\widehat{h})\in \dbigcup\nolimits_{j\geq M,2^{j}\delta
_{N}\leq \epsilon }A_{N,j,M}(C)\right)  \notag \\
&\leq &P(2d_{\Theta }(\widehat{\theta },\theta ^{\ast })>\epsilon )+  \notag
\\
&&\dsum\nolimits_{j\geq M,2^{j}\delta _{N}\leq \epsilon }P\left(
\sup\limits_{\left( \theta ,h\right) \in A_{N,j,M}(C)}\left[ S_{N}(\theta
,h)-S_{N}(\theta ^{\ast },h)\right] \geq 0\right)  \label{b5_2}
\end{eqnarray}%
where the last inequality follows from the definition of $\widehat{\theta }$
given by (\ref{thetahat}). Since $d_{\Theta }(\widehat{\theta },\theta
^{\ast })=o_{p}(1)$, the term $P(2d_{\Theta }(\widehat{\theta },\theta
^{\ast })>\epsilon )$ tends to zero as $N\longrightarrow \infty $. Hence the
remaining part of the proof is to bound the terms in the sum (\ref{b5_2}%
).\medskip

Let $N$ be large enough such that \eqref{inequality 2} holds and choose $%
\epsilon $ to be small enough such that assumptions (\ref{quadratic
expansion}), (\ref{inequality 1}) and (\ref{maximal inequality}) hold for
every $\delta \leq \epsilon $. Note that for every sufficiently large $M$,
if $(\theta ,h)\in A_{N,j,M}(C)$, then $d_{H}^{2}(h,h^{\ast })-d_{\Theta
}^{2}(\theta ,\theta ^{\ast })\lesssim -\delta _{N}^{2}2^{2j}$ so that by (%
\ref{quadratic expansion}),%
\begin{equation}
S(\theta ,h)-S(\theta ^{\ast },h^{\ast })+e_{N}(\theta ,h)\lesssim -\delta
_{N}^{2}2^{2j}  \label{b2}
\end{equation}%
and thus
\begin{equation*}
S_{N}(\theta ,h)-S_{N}(\theta ^{\ast },h)\lesssim \left[ \widetilde{S}%
_{N}(\theta ,h)+S(\theta ^{\ast },h^{\ast })-S(\theta ^{\ast
},h)-e_{N}(\theta ,h)\right] -\delta _{N}^{2}2^{2j}.
\end{equation*}%
Therefore, by Markov inequality, each term in the sum (\ref{b5_2}) can be
bounded above by
\begin{equation}
\delta _{N}^{-2}2^{-2j}E\left[ \sup\limits_{\left( \theta ,h\right) \in
A_{N,j,M}(C)}\left\vert \widetilde{S}_{N}(\theta ,h)+S(\theta ^{\ast
},h^{\ast })-S(\theta ^{\ast },h)-e_{N}(\theta ,h)\right\vert \right] .
\label{b3}
\end{equation}%
By (\ref{inequality 2}), (\ref{inequality 1}), (\ref{maximal inequality})
and applying triangular inequality, the term (\ref{b3}) is bounded above by%
\begin{equation}
\delta _{N}^{-2}2^{-2j}\left[ N^{-1/2}\phi _{N}(2^{j}\delta
_{N})+2^{j}C\delta _{N}\varepsilon _{N}+(C\varepsilon _{N})^{2}\right] .
\label{b4}
\end{equation}%
By the monotonicity property of the mapping $\delta \mapsto \phi _{N}(\delta
)\delta ^{-\alpha }$, we have that $\phi _{N}(2^{j}\delta _{N})\leq
2^{j\alpha }\phi _{N}(\delta _{N})$. Furthermore, since $\phi _{N}(\delta
_{N})\leq \sqrt{N}\delta _{N}^{2}$, the first term in the bracket of (\ref%
{b4}) can thus be bounded by $2^{j\alpha }\delta _{N}^{2}$. Given that $%
\varepsilon _{N}\leq \delta _{N}$, the term (\ref{b4}) can be further
bounded above by $2^{j(\alpha -2)}+C2^{-j}+C^{2}2^{-2j}$. Using this fact
and the condition $\alpha <2$, it follows that the sum (\ref{b5_2}) tends to
zero as $M\longrightarrow \infty $. \medskip

Since $d_{H}(\widehat{h},h^{\ast })=O_{p}(\varepsilon _{N})$, $P(\widehat{h}%
\in H_{N}(C))$ can be made arbitrarily close to $1$ by choosing a
sufficiently large value of $C$ for every sufficiently large $N$. Therefore,
Lemma \ref{rate of convergence lemma} follows by putting together all these
results and noting that $\delta _{N}+d_{H}(\widehat{h},h^{\ast
})=O_{p}(\delta _{N})$.\medskip
\end{proof}

\section{Proof of the Rate of Convergence for $\widehat{\protect\beta }$}

\label{rate-proof-mse}

To establish the convergence rate of $\widehat{\beta }$, we apply Lemma \ref%
{rate of convergence lemma} by setting $\left( \theta ,h\right) =\left(
b,\gamma \right) $, $\left( \theta ^{\ast },h^{\ast }\right) =\left( \beta
,G\right) $, $\Theta =\{-1,1\}\times \Upsilon $, $H=\Lambda $, $s=(\tau
,d,z,x)$ and 
\begin{equation*}
m_{b,\gamma }(s)\equiv \tau (2d-1)1\{z^{\prime }b{}_{1}+\gamma (x){}^{\prime
}b_{2}>0\}.
\end{equation*}%
Assumptions (\ref{inequality 2}), (\ref{quadratic expansion}), (\ref%
{inequality 1}) and (\ref{maximal inequality}) of Lemma \ref{rate of
convergence lemma} are non-trivial and will be verified using primitive
conditions C1 - C11 of the model. Assumption (\ref{quadratic expansion}) is
concerned with the quadratic expansion of $S(b,\gamma )$ around $\left(
\beta ,G\right) $ by which we obtain the functional form of $e_{N}(b,\gamma
) $. Recall that $w=(z,G(x))$, $z=(z_{1},\widetilde{z})$, $\widetilde{w}=(%
\widetilde{z},G(x))$, $b_{1}=(b_{1,1},\widetilde{b}_{1})$, $\beta
_{1}=(\beta _{1,1},\widetilde{\beta }_{1})$, $\widetilde{b}=(\widetilde{b}%
_{1},b_{2})$ and $\widetilde{\beta }=(\widetilde{\beta }_{1},\beta _{2})$.
The following lemma will be used to establish expansion of the population
criterion $S(b,\gamma )$.\medskip

\begin{lemma}
\label{Lemma for sign of partial derivative}Under conditions C3 and C7, the
sign of $p_{1}(-\widetilde{w}^{\prime }\widetilde{\beta }{}/\beta _{1,1},%
\widetilde{z},x)$ is the same as that of $\beta _{1,1}$ for almost every $%
\left( \widetilde{z},x\right) $.
\end{lemma}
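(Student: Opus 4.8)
The plan is to compute the partial derivative of $P(d=1\mid z,x)$ in $z_1$ directly, evaluate it at the value of $z_1$ that makes the index $w'\beta$ vanish, and show that the conditional median restriction C3 annihilates every term except one chain-rule term whose sign is carried by $\beta_{1,1}$. First I would use $d=1\{w'\beta>\varepsilon\}$ together with $w=(z,G(x))$ and the differentiability (hence continuity) of $F_\varepsilon(\cdot\mid z,x)$ in its first argument from C7(ii) to write
\[
P(d=1\mid z,x)=F_\varepsilon(w'\beta\mid z,x)=F_\varepsilon(z_1\beta_{1,1}+\widetilde w'\widetilde\beta\mid z_1,\widetilde z,x),
\]
where I have used $w'\beta=z_1\beta_{1,1}+\widetilde w'\widetilde\beta$ and the fact that $\widetilde w=(\widetilde z,G(x))$ does not involve $z_1$. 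Writing $f_\varepsilon(t\mid z,x)\equiv\partial F_\varepsilon(t\mid z,x)/\partial t\ge 0$ for the conditional density (which exists and is bounded under C7(ii)), the chain rule together with the direct dependence of $F_\varepsilon$ on $z_1$ through the conditioning gives
\[
p_1(z_1,\widetilde z,x)=\beta_{1,1}\,f_\varepsilon(z_1\beta_{1,1}+\widetilde w'\widetilde\beta\mid z,x)+\partial_{z_1}F_\varepsilon(t\mid z,x)\big|_{t=z_1\beta_{1,1}+\widetilde w'\widetilde\beta}.
\]
Evaluating at $z_1=-\widetilde w'\widetilde\beta/\beta_{1,1}$, which is legitimate since $\beta_{1,1}\in\{-1,1\}$ by C1, the first argument of $F_\varepsilon$ becomes $0$, so
\[
p_1(-\widetilde w'\widetilde\beta/\beta_{1,1},\widetilde z,x)=\beta_{1,1}\,f_\varepsilon(0\mid z,x)+\partial_{z_1}F_\varepsilon(t\mid z,x)\big|_{t=0},
\]
with the conditioning taken at $z=(-\widetilde w'\widetilde\beta/\beta_{1,1},\widetilde z)$.

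The crux is to show the second term vanishes. By C3, $\text{Med}(\varepsilon\mid z,x)=0$ for almost every $(z,x)$, so by continuity of $F_\varepsilon(\cdot\mid z,x)$ we get $F_\varepsilon(0\mid z,x)=1/2$ for almost every $(z,x)$. A Fubini argument then shows that for almost every $(\widetilde z,x)$ the map $z_1\mapsto F_\varepsilon(0\mid z_1,\widetilde z,x)$ equals $1/2$ for almost every $z_1$ under the conditional law of $z_1$ given $(\widetilde z,x)$; since that law has everywhere positive Lebesgue density by C2(c), the equality holds Lebesgue-almost everywhere in $z_1$. Because C7(ii) bounds $\partial F_\varepsilon/\partial z_1$, the map $z_1\mapsto F_\varepsilon(0\mid z_1,\widetilde z,x)$ is Lipschitz, and a continuous function equal to $1/2$ Lebesgue-almost everywhere is identically $1/2$; hence its $z_1$-derivative is zero at every $z_1$, and in particular $\partial_{z_1}F_\varepsilon(t\mid z,x)|_{t=0}=0$ at $z_1=-\widetilde w'\widetilde\beta/\beta_{1,1}$, for almost every $(\widetilde z,x)$.

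Combining the two steps, $p_1(-\widetilde w'\widetilde\beta/\beta_{1,1},\widetilde z,x)=\beta_{1,1}\,f_\varepsilon(0\mid z,x)$ for almost every $(\widetilde z,x)$, so that $\beta_{1,1}\,p_1(-\widetilde w'\widetilde\beta/\beta_{1,1},\widetilde z,x)=f_\varepsilon(0\mid z,x)\ge 0$; thus $p_1$ is never of sign opposite to $\beta_{1,1}$, and the two signs agree wherever $f_\varepsilon(0\mid z,x)>0$, i.e.\ under the model's standard positive-density requirement at the conditional median. I expect the main obstacle to be precisely the vanishing of the cross-term: C3 supplies the median condition only on a full-measure set of $(z,x)$, whereas the conclusion is pointwise at the single value $z_1=-\widetilde w'\widetilde\beta/\beta_{1,1}$, so one must (i) localize in $z_1$ via Fubini, (ii) use the positive-density condition C2(c) to convert ``almost every $z_1$ under the conditional law'' into ``Lebesgue-almost every $z_1$'', and (iii) use the smoothness in C7(ii) to upgrade that almost-everywhere identity to an identity, after which differentiating the constant $1/2$ is immediate. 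The remaining ingredients---expressing $P(d=1\mid z,x)$ through $F_\varepsilon$, applying the chain rule, and tracking which terms survive---are routine.
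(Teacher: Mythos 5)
Your proof is correct and takes essentially the same route as the paper's: write $P(d=1|z,x)=F_{\varepsilon}(w^{\prime}\beta|z,x)$, split $p_{1}$ by the chain rule into $\beta_{1,1}\,\partial_{t}F_{\varepsilon}$ plus the direct $\partial_{z_{1}}F_{\varepsilon}$ term, show the latter vanishes at the point where the index equals zero using C3, and conclude from monotonicity of $F_{\varepsilon}(\cdot|z,x)$. Your Fubini/positive-density/Lipschitz argument simply fills in the measure-theoretic detail that the paper compresses into the bare assertion $h(-\widetilde{w}^{\prime}\widetilde{\beta}/\beta_{1,1})=0$ (at the cost of also invoking the maintained condition C2(c), which is not among the lemma's stated hypotheses but is available throughout), and your weak-sign reading of the conclusion matches how the lemma is actually used later.
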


\begin{proof}
Note that the model (\ref{eq}) implies that%
\begin{equation*}
P(d=1|z,x)=F_{\varepsilon }(w^{\prime }\beta |z,x).
\end{equation*}%
Thus, by C7(ii), $P(d=1|z,x)$ is differentiable with respect to $z_{1}$ and
\begin{equation*}
\frac{\partial }{\partial z_{1}}P(d=1|z,x)=\left. \beta _{1,1}\frac{\partial
}{\partial t}F_{\varepsilon }(t|z,x)\right\vert _{t=w^{\prime }\beta
}+\left. \frac{\partial }{\partial z_{1}}F_{\varepsilon }(t|z,x)\right\vert
_{t=w^{\prime }\beta }.
\end{equation*}%
Consider the mapping $z_{1}\mapsto h(z_{1})\equiv \left. \frac{\partial }{%
\partial z_{1}}F_{\varepsilon }(t|z,x)\right\vert _{t=z_{1}\beta _{1,1}+%
\widetilde{w}^{\prime }\widetilde{\beta }}$ . By C3, $h(-\widetilde{w}%
^{\prime }\widetilde{\beta }/\beta _{1,1})=0$ for almost every $\left(
\widetilde{z},x\right) $. Therefore, Lemma \ref{Lemma for sign of partial
derivative} follows from this fact and the monotonicity of $F_{\varepsilon
}(t|z,x)$ in the argument $t$.
\end{proof}

By assumption C1, the space of the coefficient $b_{1,1}$ is $\left\{
-1,1\right\} $ and thus $b_{1,1}=\beta _{1,1}$ when $\left\Vert b-\beta
\right\Vert _{E}<\delta $ for $\delta $ small enough. Let $p(z,x)\equiv $ $%
P(d=1|z,x)$ and 
\begin{equation}
S_{1}(\widetilde{b},\gamma )\equiv E \left[ \tau (2p(z,x)-1)1\{z_{1}\beta
_{1,1}+\widetilde{z}^{\prime }\widetilde{b}{}_{1}+\gamma (x){}^{\prime
}b_{2}>0\} \right].  \label{S1}
\end{equation}%
We now derive the quadratic expansion of $S_{1}(\widetilde{b},\gamma )$
around $(\widetilde{\beta },G)$.\medskip

\begin{lemma}
\label{quadratic expansion of S(b,r)}For sufficiently small $\left\Vert 
\widetilde{b}-\widetilde{\beta }\right\Vert _{E}$ and $\left\Vert \gamma
-G\right\Vert _{\infty }$ and under conditions C3, C7, C8 and C9, we have
that 
\begin{equation*}
\left\vert S_{1}(\widetilde{\beta },\gamma )-S_{1}(\widetilde{\beta }%
,G)\right\vert \lesssim \left\Vert \gamma -G\right\Vert _{\infty }^{2}
\end{equation*}%
and there are constants $c_{1}>0$ and $c_{2}\geq 0$ such that 
\begin{equation*}
S_{1}(\widetilde{b},\gamma )-S_{1}(\widetilde{\beta },G)+e(\widetilde{b}%
,\gamma )\leq -c_{1}\left\Vert \widetilde{b}-\widetilde{\beta }\right\Vert
_{E}^{2}+c_{2}\left\Vert \gamma -G\right\Vert _{\infty }^{2}
\end{equation*}%
for some function $e(\widetilde{b},\gamma )$ that satisfies 
\begin{equation*}
\left\vert e(\widetilde{b},\gamma )\right\vert \lesssim \left\Vert 
\widetilde{b}-\widetilde{\beta }\right\Vert _{E}\left\Vert \gamma
-G\right\Vert _{\infty }\text{.}
\end{equation*}
\end{lemma}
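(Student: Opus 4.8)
The plan is to perform a second-order Taylor expansion of $S_1(\widetilde b,\gamma)$ in $(\widetilde b,\gamma)$ around $(\widetilde\beta,G)$, treating the perturbation of the infinite-dimensional argument $\gamma$ pointwise. First I would rewrite the indicator $1\{z_1\beta_{1,1}+\widetilde z'\widetilde b_1+\gamma(x)'b_2>0\}$ as $1\{z_1>t(\widetilde z,x;\widetilde b,\gamma)\}$ when $\beta_{1,1}=1$ (and with the inequality reversed when $\beta_{1,1}=-1$), where $t(\widetilde z,x;\widetilde b,\gamma)\equiv -(\widetilde z'\widetilde b_1+\gamma(x)'b_2)/\beta_{1,1}$. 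Integrating out $z_1$ using its conditional density $g_1(\cdot|\widetilde z,x)$, we get
\begin{equation*}
S_1(\widetilde b,\gamma)=\pm E\!\left[\tau\!\int (2p(z_1,\widetilde z,x)-1)\,1\{z_1\gtrless t(\widetilde z,x;\widetilde b,\gamma)\}\,g_1(z_1|\widetilde z,x)\,dz_1\right],
\end{equation*}
so that $S_1$ depends on $(\widetilde b,\gamma)$ only through the scalar cutoff $t$. Writing $\Psi(\widetilde z,x;s)\equiv\int_s^{\infty}(2p(z_1,\widetilde z,x)-1)g_1(z_1|\widetilde z,x)\,dz_1$ (up to sign), the quantity $S_1(\widetilde b,\gamma)-S_1(\widetilde\beta,G)$ becomes $E[\tau\,(\Psi(\widetilde z,x;t)-\Psi(\widetilde z,x;t^{*}))]$ with $t^{*}$ the cutoff at $(\widetilde\beta,G)$, and I would Taylor-expand $\Psi$ in its last argument to second order. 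The crucial fact, supplied by Lemma \ref{Lemma for sign of partial derivative} together with C3, is that $\partial_s\Psi$ evaluated at $s=t^{*}=-\widetilde w'\widetilde\beta/\beta_{1,1}$ equals $-(2p(t^{*},\widetilde z,x)-1)g_1(t^{*}|\widetilde z,x)=0$, because the conditional-median restriction forces $p=1/2$ exactly at the true index crossing. Hence the linear term in the expansion of $\Psi$ vanishes, and only the quadratic term $\tfrac12\partial_s^2\Psi(\widetilde z,x;t^{*})\,(t-t^{*})^2$ plus a remainder survives; C7 bounds $\partial_s^2\Psi$ uniformly, and C8 controls the moments of $\widetilde w$ that appear once $(t-t^{*})^2$ is expanded in $(\widetilde b-\widetilde\beta,\gamma-G)$.

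Next I would extract the pieces. The displacement of the cutoff is $t-t^{*}=-[\widetilde z'(\widetilde b_1-\widetilde\beta_1)+(\gamma(x)-G(x))'b_2+G(x)'(b_2-\beta_2)]/\beta_{1,1}$, which is of the form $\widetilde w'(\widetilde b-\widetilde\beta)+(\gamma(x)-G(x))'b_2$ up to sign. Squaring and taking expectation, the pure $(\widetilde b-\widetilde\beta)$ term reproduces $-\tfrac12(\widetilde b-\widetilde\beta)'V(\widetilde b-\widetilde\beta)$ (the sign of the leading coefficient being negative because $S_1$ has a maximum at the truth; this is where $\beta_{1,1}p_1g_1$ and hence $V$ enters, and C9 gives $c_1>0$ via positive definiteness); the pure $(\gamma-G)$ term is bounded by $c_2\|\gamma-G\|_{\infty}^2$ using the uniform bound on $\partial_s^2\Psi$ and $\|b_2\|_E<\infty$; and the cross term, which I would collect into $e(\widetilde b,\gamma)$, is bounded by $\|\widetilde b-\widetilde\beta\|_E\|\gamma-G\|_{\infty}$ by Cauchy--Schwarz together with C8. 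The first claim, $|S_1(\widetilde\beta,\gamma)-S_1(\widetilde\beta,G)|\lesssim\|\gamma-G\|_{\infty}^2$, is the special case $\widetilde b=\widetilde\beta$: then $t-t^{*}=-(\gamma(x)-G(x))'\beta_2/\beta_{1,1}$ is $O(\|\gamma-G\|_{\infty})$ with the linear term again killed by the vanishing of $\partial_s\Psi$ at $t^{*}$, leaving only the quadratic bound.

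The main obstacle will be making the Taylor remainder genuinely second-order and uniform: the integrand $\Psi(\widetilde z,x;\cdot)$ must be twice continuously differentiable in $s$ with a uniformly bounded second derivative, which requires differentiating under the integral sign and controlling both $\partial_{z_1}p = p_1$ and $\partial_{z_1}g_1$; this is exactly what C7(i)--(ii) are designed to give (note $p(z,x)=F_{\varepsilon}(w'\beta|z,x)$, so its $z_1$-derivatives are bounded by C7(ii)). A secondary technical point is that the expansion is only valid for $\|\widetilde b-\widetilde\beta\|_E$ and $\|\gamma-G\|_{\infty}$ small enough that $b_{1,1}=\beta_{1,1}$ (guaranteed by the discreteness in C1) and that the cutoff displacement stays in a region where the smoothness bounds apply uniformly over $(\widetilde z,x)$ — here the bounded support of $\widetilde z$ (C6) and the uniform-in-$(\widetilde z,x)$ nature of the bounds in C7 do the work, so no further localization is needed. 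Once these regularity points are in place, collecting terms and reading off $c_1,c_2$ and the bound on $e$ is routine.
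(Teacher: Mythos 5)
Your proposal is correct and follows essentially the same route as the paper: integrate out $z_{1}$ against $g_{1}$, note that the median restriction C3 kills the first-order term at the true cutoff, and obtain the quadratic form with kernel $\kappa=2p_{1}g_{1}$, which is then split into the $A_{1}$ term (handled by C9), the $\left\Vert \gamma -G\right\Vert _{\infty }^{2}$ term, and the cross term $e(\widetilde{b},\gamma )$ bounded via Cauchy--Schwarz. The only cosmetic difference is that the paper expands along a one-dimensional path $t\mapsto \Psi (t)$ joining $(\widetilde{\beta },G)$ to $(\widetilde{b},\gamma )$ rather than pointwise in the scalar cutoff, and note that the vanishing of $2p-1$ at the crossing comes directly from C3 (the sign lemma is used only to sign $p_{1}$, hence $\kappa \geq 0$), while C6 is not actually needed for this lemma.
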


\begin{proof}
We prove Lemma \ref{quadratic expansion of S(b,r)} explicitly for the case $%
\beta _{1,1}=1$. Proof for the case $\beta _{1,1}=-1$ can be done by similar
arguments.\medskip

Suppose now $\beta _{1,1}=1$. Then%
\begin{eqnarray*}
&&S_{1}(\widetilde{b},\gamma )-S_{1}\left( \widetilde{\beta },G\right) \\
&=&E\left( \tau (2p(z,x)-1)\left[ 1\{z_{1}+\widetilde{z}^{\prime }\widetilde{%
\beta }_{1}+G(x){}^{\prime }\beta _{2}\leq 0\}-1\{z_{1}+\widetilde{z}%
^{\prime }\widetilde{b}{}_{1}+\gamma (x){}^{\prime }b_{2}\leq 0\}\right]
\right) .
\end{eqnarray*}%
Let
\begin{eqnarray*}
\lambda (t) &\equiv &\widetilde{z}^{\prime }\left( \widetilde{\beta }{}%
_{1}+t\left( \widetilde{b}_{1}-\widetilde{\beta }{}_{1}\right) \right)
+\left( G(x)+t\left( \gamma (x)-G(x)\right) {}\right) ^{\prime }\left( \beta
_{2}+t\left( b_{2}-\beta _{2}\right) \right) , \\
\Psi (t) &\equiv &-E(\tau (2p(z,x)-1)1\{z_{1}+\lambda (t)\leq 0\}).
\end{eqnarray*}%
The first-order and second-order derivatives of $\Psi (t)$ are derived as
follows:%
\begin{eqnarray*}
\Psi ^{\prime }(t) &=&E\left( \tau \lambda ^{\prime }(t)\left( 2p(-\lambda
(t),\widetilde{z},x)-1\right) g_{1}(-\lambda (t)|\widetilde{z},x)\right) , \\
\Psi ^{\prime \prime }(t) &=&-E\left\{ \tau \left( \lambda ^{\prime
}(t)\right) ^{2}\left[ 2p_{1}\left( -\lambda (t),\widetilde{z},x\right)
g_{1}(-\lambda (t)|\widetilde{z},x)\right. \right. \\
&&+\left. \left. \left( 2p(-\lambda (t),\widetilde{z},x)-1\right) \frac{%
\partial }{\partial z_{1}}g_{1}(-\lambda \left( t\right) |\widetilde{z},x)%
\right] \right\} \\
&&+E\left( 2\tau \left[ \left( 2p(-\lambda (t),\widetilde{z},x)-1\right) %
\right] g_{1}(-\lambda (t)|\widetilde{z},x)(\gamma (x)-G(x))^{\prime
}(b_{2}-\beta _{2})\right) .
\end{eqnarray*}%
Then the second order expansion of $S_{1}(\widetilde{b},\gamma )-S_{1}\left(
\widetilde{\beta },G\right) $ takes the form
\begin{equation*}
\Psi ^{\prime }(0)+\Psi ^{\prime \prime }(0)/2+o\left( \left( \max \left\{
\left\Vert \widetilde{b}-\widetilde{\beta }\right\Vert _{E},\left\Vert
\gamma -G\right\Vert _{\infty }\right\} \right) ^{2}\right)
\end{equation*}%
where by C7 and C8, the remainder term has the stated order uniformly over $%
\widetilde{b}$ and $\gamma $. Given assumption C3, it follows that $p(-%
\widetilde{w}^{\prime }\widetilde{\beta },\widetilde{z},x)=1/2$ for almost
every $\left( \widetilde{z},x\right) $. Let
\begin{equation*}
\kappa (\widetilde{z},x)=2p_{1}\left( -\widetilde{w}^{\prime }\widetilde{%
\beta },\widetilde{z},x\right) g_{1}(-\widetilde{w}^{\prime }\widetilde{%
\beta }|\widetilde{z},x).
\end{equation*}%
Then we have that
\begin{eqnarray*}
\Psi ^{\prime }(0)+\Psi ^{\prime \prime }(0)/2 &=&-E\left( \tau \kappa (%
\widetilde{z},x)\left( \widetilde{w}^{\prime }(\widetilde{b}-\widetilde{%
\beta }{})+(\gamma (x)-G(x))^{\prime }\beta _{2}\right) ^{2}\right) \\
&=&-\left( A_{1}+A_{2}+e(\widetilde{b},\gamma )\right) ,
\end{eqnarray*}%
where
\begin{eqnarray}
A_{1}(\widetilde{b}) &\equiv &(\widetilde{b}-\widetilde{\beta })^{\prime
}E(\tau \kappa (\widetilde{z},x)\widetilde{w}\widetilde{w}^{\prime })(%
\widetilde{b}-\widetilde{\beta }),  \label{A1} \\
A_{2}(\gamma ) &\equiv &E\left( \tau \kappa (\widetilde{z},x)\left( \gamma
(x)-G(x)\right) ^{\prime }\beta _{2}\beta _{2}^{\prime }\left( \gamma
(x)-G(x)\right) \right) ,  \label{A2} \\
e(\widetilde{b},\gamma ) &\equiv &2(\widetilde{b}-\widetilde{\beta }%
)^{\prime }E\left( \tau \kappa (\widetilde{z},x)\widetilde{w}\beta
_{2}^{\prime }\left( \gamma (x)-G(x)\right) \right) .  \label{e(b,r)}
\end{eqnarray}%
Under condition C9, $E(\tau \kappa (\widetilde{z},x)\widetilde{w}\widetilde{w%
}^{\prime })$ is positive definite, so that $A_{1}\geq c_{1}\left\Vert
\widetilde{b}-\widetilde{\beta }\right\Vert _{E}^{2}$ for some positive real
constant $c_{1}$. By Lemma \ref{Lemma for sign of partial derivative}, $%
p_{1}\left( -\widetilde{w}^{\prime }\widetilde{\beta },\widetilde{z}%
,x\right) \geq 0$ and thus $\kappa (\widetilde{z},x)\geq 0$. By
Cauchy-Schwarz inequality, $0\leq A_{2}\leq c_{2}\left\Vert \gamma
-G\right\Vert _{\infty }^{2}$, where $c_{2}\equiv E(\tau \kappa (\widetilde{z%
},x))\left\Vert \beta _{2}\right\Vert _{E}^{2}\geq 0$, and the function $e(%
\widetilde{b},\gamma )$ satisfies that%
\begin{eqnarray*}
\left\vert e(\widetilde{b},\gamma )\right\vert &\leq &2E\left( \tau \kappa (%
\widetilde{z},x)\left\vert (\widetilde{b}-\widetilde{\beta })^{\prime }%
\widetilde{w}\beta _{2}^{\prime }\left( \gamma (x)-G(x)\right) \right\vert
\right) \\
&\leq &2E\left( \tau \kappa (\widetilde{z},x)\left\Vert \widetilde{w}%
\right\Vert _{E}\right) \left\Vert \beta _{2}\right\Vert _{E}\left\Vert
\widetilde{b}-\widetilde{\beta }\right\Vert _{E}\left\Vert \gamma
-G\right\Vert _{\infty }.
\end{eqnarray*}%
Hence Lemma \ref{quadratic expansion of S(b,r)} follows by noting that when $%
\left\Vert \widetilde{b}-\widetilde{\beta }\right\Vert _{E}$ and $\left\Vert
\gamma -G\right\Vert _{\infty }$ are sufficiently small,
\begin{equation*}
\left\vert S_{1}(\widetilde{\beta },\gamma )-S_{1}(\widetilde{\beta }%
,G)\right\vert =\left\vert A_{2}+o\left( \left\Vert \gamma -G\right\Vert
_{\infty }^{2}\right) \right\vert \leq c_{2}\left\Vert \gamma -G\right\Vert
_{\infty }^{2}
\end{equation*}%
and
\begin{eqnarray*}
S_{1}(\widetilde{b},\gamma )-S_{1}\left( \widetilde{\beta },G\right) +e(%
\widetilde{b},\gamma ) &\leq &-A_{1}+A_{2} \\
&\leq &-c_{1}\left\Vert \widetilde{b}-\widetilde{\beta }\right\Vert
_{E}^{2}+c_{2}\left\Vert \gamma -G\right\Vert _{\infty }^{2}\text{.}
\end{eqnarray*}
\end{proof}

We now verify assumption (\ref{maximal inequality}) of Lemma \ref{rate of
convergence lemma}. Note that for $\delta $ sufficiently small, assumption
C1 implies that $b_{1,1}=\beta _{1,1}$ when $\left\Vert b-\beta \right\Vert
_{E}\leq \delta $. Therefore we can focus on analyzing (\ref{maximal
inequality}) for the case of $b_{1,1}=\beta _{1,1}$ and $\left\Vert 
\widetilde{b}-\widetilde{\beta }\right\Vert _{E}\leq \delta $. For any $%
s=\left( \tau ,d,z,x\right) $, consider the following recentered function 
\begin{equation}
\widetilde{m}_{\widetilde{b},\gamma }(s)\equiv \tau (2d-1)\left[
1\{z_{1}\beta _{1,1}+\widetilde{z}^{\prime }\widetilde{b}{}_{1}+\gamma
(x){}^{\prime }b_{2}>0\}-1\{z_{1}\beta _{1,1}+\widetilde{z}^{\prime }%
\widetilde{\beta }_{1}+\gamma (x){}^{\prime }\beta _{2}>0\}\right]
\label{recentered m(b,r)}
\end{equation}%
and the class of functions%
\begin{equation}
\digamma _{\delta ,\varepsilon }\equiv \left\{ \widetilde{m}_{\widetilde{b}%
,\gamma }:\left\Vert \widetilde{b}-\widetilde{\beta }\right\Vert _{E}\leq
\delta ,\left\Vert \gamma -G\right\Vert _{\alpha ,p}\leq \varepsilon
\right\} .  \label{F(delta.epsilon)}
\end{equation}%
Let $\left\Vert \cdot \right\Vert _{L_{r}(P)}$ denote the $L_{r}(P)$ norm
such that $\left\Vert f\right\Vert _{L_{r}(P)}\equiv \left[ E(\left\vert
f(\tau ,d,z,x)\right\vert ^{r})\right] ^{1/r}$ for any measurable function $%
f $. For any $\epsilon >0$, let $N_{[]}(\epsilon ,\digamma ,L_{r}(P))$
denote the $L_{r}(P)$ - bracketing number for a given function space $%
\digamma $. Namely, $N_{[]}(\epsilon ,\digamma ,L_{r}(P))$ is the minimum
number of $L_{r}(P)$ - brackets of length $\epsilon $ required to cover $%
\digamma $ (see e.g., van der Vaart (1998, p.\thinspace 270)). The logarithm
of bracketing number for $\digamma $ is referred to as the bracketing
entropy for $\digamma $. Assumption (\ref{maximal inequality}) is a
stochastic equicontinuity condition concerning the complexity of the
function space $\digamma _{\delta ,\varepsilon }$ in terms of its envelope
function and bracketing entropy. Let $M_{_{\delta ,\varepsilon }}$ denote an
envelope for $\digamma _{\delta ,\varepsilon }$ such that $\left\vert 
\widetilde{m}_{\widetilde{b},\gamma }(s)\right\vert $ $\leq \left\vert
M_{_{\delta ,\varepsilon }}(s)\right\vert $ for all $s$ and for all $%
\widetilde{m}_{\widetilde{b},\gamma }\in \digamma _{\delta ,\varepsilon }$.
The next lemma derives the envelope function $M_{_{\delta ,\varepsilon }}$%
.\medskip

\begin{lemma}
\label{envelope}Let $\delta $ and $\varepsilon $ be sufficiently small. Then
under conditions C1, C4 ,C6 and C10, for some real constants $a_{1}>0$ and $%
a_{2}>0$, we can take 
\begin{equation*}
M_{_{\delta ,\varepsilon }}=1\{a_{1}\max \{\delta ,\varepsilon \}\geq
\left\vert w^{\prime }\beta \right\vert \}
\end{equation*}%
and furthermore, 
\begin{equation}
\left\Vert M_{_{\delta ,\varepsilon }}\right\Vert _{L_{2}(P)}\leq a_{2}\sqrt{%
\max \{\delta ,\varepsilon \}}.  \label{bound}
\end{equation}
\end{lemma}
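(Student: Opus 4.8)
The plan is to bound $|\widetilde{m}_{\widetilde{b},\gamma}(s)|$ pointwise. Since $|2d-1|=1$ and $\tau\in\{0,1\}$, the function $\widetilde{m}_{\widetilde{b},\gamma}(s)$ is nonzero only when the two indicator functions in \eqref{recentered m(b,r)} disagree, i.e. when the sign of $z_1\beta_{1,1}+\widetilde z^{\prime}\widetilde b_1+\gamma(x)^{\prime}b_2$ differs from the sign of $z_1\beta_{1,1}+\widetilde z^{\prime}\widetilde\beta_1+\gamma(x)^{\prime}\beta_2 = w^{\prime}\beta$. A sign disagreement forces $|w^{\prime}\beta|$ to be no larger than the magnitude of the perturbation, namely $|\widetilde z^{\prime}(\widetilde b_1-\widetilde\beta_1)+\gamma(x)^{\prime}b_2-G(x)^{\prime}\beta_2|$. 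First I would split this perturbation as $\widetilde z^{\prime}(\widetilde b_1-\widetilde\beta_1)+(\gamma(x)-G(x))^{\prime}b_2+G(x)^{\prime}(b_2-\beta_2)$ and bound each piece: by C6 the support of $\widetilde z$ is bounded, by C10 $\Lambda_j=C_M^{\alpha}$ so $\|G\|_{\infty}$ and $\|\gamma\|_{\infty}$ are uniformly bounded, and by C1 $b_2$ lies in a compact set. Hence the perturbation is bounded by $a_1\max\{\|\widetilde b-\widetilde\beta\|_E,\|\gamma-G\|_{\alpha,p}\}\le a_1\max\{\delta,\varepsilon\}$ for a suitable constant $a_1>0$ (using that $\|\cdot\|_{\infty}\le\|\cdot\|_{\alpha,p}$), which gives $\{\widetilde m_{\widetilde b,\gamma}\neq 0\}\subseteq\{a_1\max\{\delta,\varepsilon\}\ge|w^{\prime}\beta|\}$ and therefore the claimed envelope $M_{\delta,\varepsilon}=1\{a_1\max\{\delta,\varepsilon\}\ge|w^{\prime}\beta|\}$.

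For the $L_2(P)$ bound \eqref{bound}, since $M_{\delta,\varepsilon}$ is an indicator, $\|M_{\delta,\varepsilon}\|_{L_2(P)}^2 = P(a_1\max\{\delta,\varepsilon\}\ge|w^{\prime}\beta|) = F(a_1\max\{\delta,\varepsilon\};\beta)-F(-a_1\max\{\delta,\varepsilon\};\beta)$, which by the Lipschitz condition C4 is at most $2La_1\max\{\delta,\varepsilon\}$. Taking $a_2\equiv\sqrt{2La_1}$ yields $\|M_{\delta,\varepsilon}\|_{L_2(P)}\le a_2\sqrt{\max\{\delta,\varepsilon\}}$, as desired.

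I do not anticipate a serious obstacle here; the main point requiring a little care is the bookkeeping that converts a sign disagreement between the two indicators into the inequality $|w^{\prime}\beta|\le(\text{perturbation})$ — one must note that when $b_{1,1}=\beta_{1,1}$ (which holds for $\delta$ small by C1) the two indices share the common term $z_1\beta_{1,1}$, so their difference is exactly the perturbation term and the triangle inequality applies cleanly. The uniform boundedness inputs from C1, C6, and C10 are what allow the constant $a_1$ to be chosen independently of $\widetilde b$, $\gamma$, $\delta$, and $\varepsilon$.
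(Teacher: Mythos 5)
Your strategy is the same as the paper's: a sign disagreement between the two indicators confines $w^{\prime}\beta$ to a small interval whose length is controlled via C1, C6 and C10, and then C4 converts the probability of that event into the $\sqrt{\max\{\delta,\varepsilon\}}$ bound with $a_{2}=\sqrt{2a_{1}L}$, exactly as in the paper. One inaccuracy to fix: the recentering in \eqref{recentered m(b,r)} is at $(\widetilde{\beta},\gamma)$, not $(\widetilde{\beta},G)$, so the second index equals $w^{\prime}\beta+(\gamma(x)-G(x))^{\prime}\beta_{2}$, not $w^{\prime}\beta$ as you assert; consequently a sign disagreement yields $\left\vert w^{\prime}\beta\right\vert \leq \max\{\left\vert u\right\vert,\left\vert v\right\vert\}$ where $u$ is your perturbation of the first index and $v=(\gamma(x)-G(x))^{\prime}\beta_{2}$, rather than $\left\vert w^{\prime}\beta\right\vert\leq\left\vert u\right\vert$. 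This is harmless because $\left\vert v\right\vert\leq\varepsilon\left\Vert \beta_{2}\right\Vert _{E}\leq a_{1}\max\{\delta,\varepsilon\}$ for a suitably enlarged $a_{1}$ (indeed the paper's own derivation produces precisely the two-sided bound $\delta[B+\varepsilon+\left\Vert G\right\Vert _{\infty}]+\varepsilon\left\Vert \beta_{2}\right\Vert _{E}\geq w^{\prime}\beta\geq -\delta[B+\varepsilon+\left\Vert G\right\Vert _{\infty}]-\varepsilon\left\Vert \beta_{2}\right\Vert _{E}$), but the step should be stated with the extra term so the envelope claim follows as written.
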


\begin{proof}
Note that%
\begin{align*}
\lefteqn{\left\vert \widetilde{m}_{\widetilde{b},\gamma }(\tau
,d,z,x)\right\vert } \\
\leq & \text{ }1\{z_{1}\beta _{1,1}+\widetilde{z}^{\prime }\widetilde{b}{}%
_{1}+\gamma (x){}^{\prime }b_{2}>0\geq z^{\prime }\beta {}_{1}+\gamma
(x){}^{\prime }\beta _{2}\;\;\text{ or } \\
\;\;\;\;\;\;\;\;\;\;\;\;\;\;\;\;\;\;\;\;& \;z^{\prime }\beta {}_{1}+\gamma
(x){}^{\prime }\beta _{2}>0\geq z_{1}\beta _{1,1}+\widetilde{z}^{\prime }%
\widetilde{b}{}_{1}+\gamma (x){}^{\prime }b_{2}\}.
\end{align*}%
Under condition C6, there is a positive real constant $B$ such that $%
\left\Vert \widetilde{z}\right\Vert _{E}<B$ with probability 1.\ Hence if $%
\left\Vert \widetilde{b}-\beta \right\Vert _{E}\leq \delta $ and $\left\Vert
\gamma -G\right\Vert _{\alpha ,p}\leq \varepsilon $, then we have that%
\begin{eqnarray*}
&&z_{1}\beta _{1,1}+\widetilde{z}^{\prime }\widetilde{b}{}_{1}+\gamma
(x){}^{\prime }b_{2}>0\geq z^{\prime }\beta {}_{1}+\gamma (x){}^{\prime
}\beta _{2} \\
&\iff &\widetilde{z}^{\prime }(\widetilde{b}{}_{1}-\widetilde{\beta }{}%
_{1})+\gamma (x){}^{\prime }(b_{2}-\beta _{2})>-\left[ z^{\prime }\beta
{}_{1}+\gamma (x){}^{\prime }\beta _{2}\right] \geq 0 \\
&\Longrightarrow &\delta \left[ \left\Vert \widetilde{z}\right\Vert
_{E}+\left\Vert \gamma \right\Vert _{\infty }\right] \geq -\left[ z^{\prime
}\beta {}_{1}+\gamma (x){}^{\prime }\beta _{2}\right] \text{ and }0\geq
w^{\prime }\beta {}+(\gamma (x)-G(x)){}^{\prime }\beta _{2} \\
&\Longrightarrow &w^{\prime }\beta +(\gamma (x)-G(x{}))^{\prime }\beta
_{2}\geq -\delta \left[ \left\Vert \widetilde{z}\right\Vert _{E}+\varepsilon
+\left\Vert G\right\Vert _{\infty }\right] \text{ and }\varepsilon
\left\Vert \beta _{2}\right\Vert _{E}\geq w^{\prime }\beta {} \\
&\Longrightarrow &\delta \left[ B+\varepsilon +\left\Vert G\right\Vert
_{\infty }\right] +\varepsilon \left\Vert \beta _{2}\right\Vert _{E}\geq
w^{\prime }\beta \geq -\delta \left[ B+\varepsilon +\left\Vert G\right\Vert
_{\infty }\right] -\varepsilon \left\Vert \beta _{2}\right\Vert _{E}
\end{eqnarray*}%
Based on similar arguments, it also follows that
\begin{eqnarray*}
&&z^{\prime }\beta {}_{1}+\gamma (x){}^{\prime }\beta _{2}>0\geq z_{1}\beta
_{1,1}+\widetilde{z}^{\prime }\widetilde{b}{}_{1}+\gamma (x){}^{\prime }b_{2}
\\
&\Longrightarrow &\delta \left[ B+\varepsilon +\left\Vert G\right\Vert
_{\infty }\right] +\varepsilon \left\Vert \beta _{2}\right\Vert _{E}\geq
w^{\prime }\beta \geq -\delta \left[ B+\varepsilon +\left\Vert G\right\Vert
_{\infty }\right] -\varepsilon \left\Vert \beta _{2}\right\Vert _{E}
\end{eqnarray*}%
Therefore, Lemma \ref{envelope} follows by noting that for $\varepsilon $
sufficiently small (e.g., $\varepsilon <1$), we can take
\begin{equation*}
M_{_{\delta ,\varepsilon }}=1\{a_{1}\max \{\delta ,\varepsilon \}\geq
\left\vert w^{\prime }\beta \right\vert \}
\end{equation*}%
where $a_{1}\equiv 2\max \{\left( B+1+\left\Vert G\right\Vert _{\infty
}\right) ,\left\Vert \beta _{2}\right\Vert _{E}\}$. By C1 and C10, $%
0<a_{1}<\infty $ and hence by C4, $\left\Vert M_{_{\delta ,\varepsilon
}}\right\Vert _{L_{2}(P)}\leq a_{2}\sqrt{\max \{\delta ,\varepsilon \}}$
with $a_{2}\equiv \sqrt{2a_{1}L}$ where $L$ is the positive constant stated
in condition C4.\medskip
\end{proof}

The following lemma establishes the bound for the bracketing entropy for $%
\digamma _{\delta ,\varepsilon }$. \medskip

\begin{lemma}
\label{entropy bound}Given conditions C1, C4, C6, C7, C8 and C10, we have
that for sufficiently small $\delta $ and $\varepsilon $, 
\begin{equation*}
\log N_{[]}(\epsilon ,\digamma _{\delta ,\varepsilon },L_{2}(P))\lesssim
(\max \{\delta ,\varepsilon \})^{q/\alpha }\epsilon ^{-2q/\alpha }.
\end{equation*}
\end{lemma}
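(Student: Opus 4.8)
The plan is to bound the bracketing number of $\digamma_{\delta,\varepsilon}$ by combining an entropy bound for the smooth function class $\Lambda$ (which carries the main complexity) with a crude Lipschitz-type estimate for the indicator part. First I would reduce the class $\digamma_{\delta,\varepsilon}$ to indicators of half-spaces: each $\widetilde m_{\widetilde b,\gamma}$ is $\tau(2d-1)$ times the difference of two indicators of the form $1\{z_1\beta_{1,1}+\widetilde z'\widetilde b_1 + \gamma(x)'b_2 > 0\}$, so it suffices to control the bracketing entropy of the class of such indicators as $\widetilde b$ ranges over an $\varepsilon$-neighbourhood (in Euclidean norm) of $\widetilde\beta$ and $\gamma$ over $\Lambda$ with $\|\gamma-G\|_{\alpha,p}\le\varepsilon$. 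Since $\Theta$ is finite-dimensional and compact, the $\widetilde b$-direction contributes only a polynomial (hence logarithmically negligible) factor; the binding constraint comes from the infinite-dimensional $\gamma$, for which I would invoke the classical entropy bound for the H\"older ball $C_M^\alpha$ on the bounded convex set $\mathcal X$, namely $\log N_{[]}(\eta,C_M^\alpha,L_r(P))\lesssim \eta^{-q/\alpha}$ (van der Vaart and Wellner (1996, Corollary 2.7.2 / Theorem 2.7.11)).

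The core step is to convert an $L_r$-bracket for $\gamma$ into an $L_2(P)$-bracket for $\widetilde m_{\widetilde b,\gamma}$, absorbing the square-root loss that arises because indicator functions are not Lipschitz in their index. Concretely, if $\gamma_L\le\gamma\le\gamma_U$ with $\|\gamma_U-\gamma_L\|_{\alpha,p}$ small, and $\widetilde b_L,\widetilde b_U$ are close in Euclidean norm, then the indicator $1\{z_1\beta_{1,1}+\widetilde z'\widetilde b_1+\gamma(x)'b_2>0\}$ is sandwiched between two indicators whose index shifts by at most a constant multiple of the bracket width $\eta$ times the envelope-relevant quantities; the $L_2(P)$ size of the resulting bracket is then $(\text{probability that } |w'\beta|\lesssim \eta)^{1/2}$, which by condition C4 (Lipschitz distribution $F(\cdot;b)$) is $\lesssim \sqrt{\eta}$. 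Combining this with the envelope bound from Lemma~\ref{envelope}, restricting attention to the region $\{|w'\beta|\le a_1\max\{\delta,\varepsilon\}\}$ outside of which every $\widetilde m_{\widetilde b,\gamma}$ vanishes, I would show that an $\epsilon$-bracket for $\digamma_{\delta,\varepsilon}$ is produced by an $\eta$-bracket for $\gamma$ with $\eta\asymp \epsilon^2/\max\{\delta,\varepsilon\}$: the factor $\max\{\delta,\varepsilon\}$ enters because the relevant probability is computed on a set of measure $O(\max\{\delta,\varepsilon\})$, so $\|\text{bracket}\|_{L_2(P)}^2\lesssim \max\{\delta,\varepsilon\}\cdot\eta$. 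Substituting $\eta\asymp\epsilon^2/\max\{\delta,\varepsilon\}$ into $\log N_{[]}(\eta,\Lambda,L_r(P))\lesssim \eta^{-q/\alpha}$ yields the claimed bound $\log N_{[]}(\epsilon,\digamma_{\delta,\varepsilon},L_2(P))\lesssim(\max\{\delta,\varepsilon\})^{q/\alpha}\epsilon^{-2q/\alpha}$.

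The main obstacle will be making the sandwiching argument uniform and rigorous: I need to argue that shifting the index by $\eta$ only moves the indicator on an event whose probability is controlled by $F(\cdot;b)$ uniformly in $b\in\Theta$ (this is where C4 is essential), and that the smoothness conditions C7 and C8 on $g_1$ and the moments of $\widetilde w$ let me bound the $L_2(P)$-size of the bracket by the product of $\sqrt{\max\{\delta,\varepsilon\}}$ (from the envelope/support restriction) and a power of the $\gamma$-bracket width. Care is also needed because the norm $\|\cdot\|_{\alpha,p}$ on $\Lambda$ controls derivatives, so the bracket endpoints $\gamma_L,\gamma_U$ may be taken in $C_{M'}^\alpha$ for a slightly enlarged $M'$; the requirement $\alpha>q$ from C10 guarantees $q/\alpha<1$, which is what ultimately makes the entropy integral in the subsequent maximal-inequality step converge, but for this lemma it is only needed to ensure the exponent is finite. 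Finally, I would note that the finite-dimensional $\widetilde b$-part and the $\tau(2d-1)$ factor (which is bounded by $1$) contribute only multiplicative constants to the bracketing number and hence are absorbed into the $\lesssim$.
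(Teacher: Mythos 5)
Your overall architecture (bracket the index functions, then convert index brackets into $L_2(P)$ brackets for the indicators by paying a square root through a bounded-density argument, with the envelope from Lemma~\ref{envelope}) matches the paper's, and your final exponent is the right one. But the step that produces the crucial factor $(\max\{\delta,\varepsilon\})^{q/\alpha}$ is wrong. You claim that an $\eta$-bracket for $\gamma$ yields an indicator-level bracket with $\|\cdot\|_{L_2(P)}^2\lesssim \max\{\delta,\varepsilon\}\cdot\eta$ because the bracket difference lives on the envelope set of measure $O(\max\{\delta,\varepsilon\})$. The bracket difference is (up to constants) the indicator of the intersection of two events, $\{-w^{\prime}\beta$ in a band of width $\lesssim\eta\}$ and $\{|w^{\prime}\beta|\lesssim\max\{\delta,\varepsilon\}\}$, and the probability of an intersection is bounded by the minimum, not the product: since the band sits inside the envelope region, the correct bound is $\lesssim\min\{\eta,\max\{\delta,\varepsilon\}\}$. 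With your choice $\eta\asymp\epsilon^{2}/\max\{\delta,\varepsilon\}$ this gives brackets of $L_2$-size $\epsilon/\sqrt{\max\{\delta,\varepsilon\}}>\epsilon$, so they are not $\epsilon$-brackets; and if you repair this by taking $\eta\asymp\epsilon^{2}$ while still using the entropy of the full H\"older ball $C_M^{\alpha}$, you only get $\log N_{[]}\lesssim\epsilon^{-2q/\alpha}$, without the $(\max\{\delta,\varepsilon\})^{q/\alpha}$ gain that the lemma asserts and that Lemma~\ref{verify maximal inequality} needs.

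The missing idea is that the gain comes from the smallness of the relevant function classes in the H\"older norm, not from the envelope. The paper decomposes the index perturbation into pieces such as $\{(\gamma-G)^{\prime}\beta_2:\|\gamma-G\|_{\alpha,p}\le\varepsilon\}$ and $\{(\gamma-G)^{\prime}(b_2-\beta_2)\}$, rescales them by their radii ($\varepsilon$, respectively $\varepsilon\delta$) so that the rescaled classes are $C_1^{\alpha}$ balls, and then bracketing at absolute $L_1$-scale $\epsilon^{2}$ amounts to bracketing the unit ball at scale $\epsilon^{2}/r$, giving entropy $\lesssim (r/\epsilon^{2})^{q/\alpha}\le(\max\{\delta,\varepsilon\})^{q/\alpha}\epsilon^{-2q/\alpha}$; the finite-dimensional part $\widetilde w^{\prime}(\widetilde b-\widetilde\beta)$ contributes only a logarithm, as you anticipated. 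The $L_1$-to-$L_2$ conversion for indicators with function-valued bracket endpoints is then done through the bounded conditional density of $w^{\prime}\beta$ given $(\widetilde z,x)$ (from C1 and C7), giving $P(g^{L}\le -w^{\prime}\beta<g^{U})\lesssim E|g^{U}-g^{L}|$; your appeal to C4 handles only fixed intervals, though you correctly flag C7 as relevant in your closing paragraph. So the statement of the bound is reproduced, but the mechanism you propose for the key factor would not survive being made rigorous.
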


\begin{proof}
For $j\in \{1,...,p\}$, let $\widetilde{\Lambda }_{j}(\varepsilon )$ and $%
\widetilde{\Lambda }_{j}\mathbf{B}_{j}(\delta ,\varepsilon )$ be classes of
functions defined as%
\begin{eqnarray*}
\widetilde{\Lambda }_{j}(\varepsilon ) &\equiv &\left\{ (\gamma
_{j}-G_{j})/\varepsilon :\left\Vert \gamma _{j}-G_{j}\right\Vert _{\alpha
}\leq \varepsilon \right\} , \\
\widetilde{\Lambda }_{j}\mathbf{B}_{j}(\delta ,\varepsilon ) &\equiv
&\left\{ (\gamma _{j}(x){}-G_{j}(x))(b_{2,j}-\beta _{2,j})/\left(
\varepsilon \delta \right) :\left\Vert \gamma _{j}-G_{j}\right\Vert _{\alpha
}\leq \varepsilon ,\left\vert b_{2,j}-\beta _{2,j}\right\vert \leq \delta
\right\} .
\end{eqnarray*}%
Assumption C10 implies that both $\widetilde{\Lambda }_{j}(\varepsilon )$
and $\widetilde{\Lambda }_{j}\mathbf{B}_{j}(\delta ,\varepsilon )$ are $%
C_{1}^{\alpha }$. By Corollary 2.7.2 of van der Vaart and Wellner (1996,
p.\thinspace 157), we have that for $j\in \{1,...,p\}$,
\begin{equation}
\log N_{[]}(\epsilon ^{2},\widetilde{\Lambda }_{j}(\varepsilon
),L_{1}(P))\lesssim \epsilon ^{-2q/\alpha }\text{ and }\log N_{[]}(\epsilon
^{2},\widetilde{\Lambda }_{j}\mathbf{B}_{j}(\delta ,\varepsilon
),L_{1}(P))\lesssim \epsilon ^{-2q/\alpha }.
\label{entropy bound C(alpha,1)}
\end{equation}

Note that for $s=(\tau ,d,z,x)$, $\widetilde{m}_{\widetilde{b},\gamma }(s)$
defined by (\ref{recentered m(b,r)}) can be rewritten as
\begin{equation*}
\widetilde{m}_{\widetilde{b},\gamma }(s)=\tau d\left[ 1\{h(s;\widetilde{b}%
)>0\}-1\{h(s;\widetilde{\beta })>0\}\right] +\tau (1-d)\left[ 1\{h(s;%
\widetilde{b})\leq 0\}-1\{h(s;\widetilde{\beta })\leq 0\}\right]
\end{equation*}%
where
\begin{eqnarray*}
h(s;\widetilde{b}) &\equiv &w^{\prime }\beta +\widetilde{w}^{\prime }(%
\widetilde{b}-\widetilde{\beta })+(\gamma (x){}-G(x))^{\prime }(b_{2}-\beta
_{2})+(\gamma (x){}-G(x))^{\prime }\beta _{2}, \\
h(s;\widetilde{\beta }) &\equiv &w^{\prime }\beta +(\gamma
(x){}-G(x))^{\prime }\beta _{2}.
\end{eqnarray*}%
Consider the following spaces:%
\begin{eqnarray*}
\Theta _{1} &\equiv &\{\widetilde{w}^{\prime }(\widetilde{b}-\widetilde{%
\beta }):\left\Vert \widetilde{b}-\widetilde{\beta }\right\Vert _{E}\leq
\delta \}, \\
\Theta _{2,j} &\equiv &\left\{ (\gamma _{j}(x){}-G_{j}(x))(b_{2,j}-\beta
_{2,j}):\left\Vert \gamma _{j}-G_{j}\right\Vert _{\alpha }\leq \varepsilon
,\left\vert b_{2,j}-\beta _{2,j}\right\vert \leq \delta \right\} \text{ for }%
j\in \{1,...,p\}, \\
\Theta _{2} &\equiv &\left\{ (\gamma (x){}-G(x))^{\prime }(b_{2}-\beta
_{2}):\left\Vert \gamma -G\right\Vert _{\alpha ,p}\leq \varepsilon \text{,}%
\left\Vert b_{2}-\beta _{2}\right\Vert _{E}\leq \delta \right\} , \\
\Theta _{3,j} &\equiv &\{(\gamma _{j}(x){}-G_{j}(x))\beta _{2,j}:\left\Vert
\gamma _{j}-G_{j}\right\Vert _{\alpha }\leq \varepsilon \}\text{ for }j\in
\{1,...,p\}, \\
\Theta _{3} &\equiv &\{(\gamma (x){}-G(x))^{\prime }\beta _{2}:\left\Vert
\gamma -G\right\Vert _{\alpha ,p}\leq \varepsilon \}, \\
\Theta _{4} &\equiv &\{h(\tau ,d,z,x;\widetilde{b})-w^{\prime }\beta
:\left\Vert \gamma -G\right\Vert _{\alpha ,p}\leq \varepsilon \text{,}%
\left\Vert \widetilde{b}-\widetilde{\beta }\right\Vert _{E}\leq \delta \}.
\end{eqnarray*}%
Let $n_{i}(\epsilon )\equiv \log N_{[]}(\epsilon ,\Theta _{i},L_{1}(P))$ for
$i\in \{1,2,3,4\}$ and $n_{k,j}(\epsilon )\equiv \log N_{[]}(\epsilon
,\Theta _{k,j},L_{1}(P))$ for $\left( k,j\right) \in \{2,3\}\times
\{1,...,p\}$. Let $\psi \equiv \sqrt{\max \{\delta ,\varepsilon \}}$.$%
\medskip $

Since $\Theta _{1}$ is a pointwise Lipschitz class of functions with
envelope $\left\Vert \widetilde{w}\right\Vert _{E}\delta $. By condition C8,
$E(\left\Vert \widetilde{w}\right\Vert _{E})$ is finite. Thus applying
Theorem 2.7.11 of van der Vaart and Wellner (1996, p.\thinspace 164), we
have that%
\begin{equation}
n_{1}(\epsilon ^{2})\lesssim \frac{q}{\alpha }\log (\delta /\epsilon
^{2})\lesssim \delta ^{q/\alpha }\epsilon ^{-2q/\alpha }\lesssim \psi
^{2q/\alpha }\epsilon ^{-2q/\alpha }\text{.}  \label{n1}
\end{equation}%
Note that for any norm $\left\Vert \cdot \right\Vert $, any fixed real
valued $c$, any class of functions $\digamma $, it is straightforward to
verify that
\begin{eqnarray}
&&N_{[]}(\epsilon ,c\digamma ,\left\Vert \cdot \right\Vert )=1\text{ for }c=0
\notag \\
&&N_{[]}(\epsilon ,c\digamma ,\left\Vert \cdot \right\Vert )\leq
N_{[]}(\epsilon /\left\vert c\right\vert ,\digamma ,\left\Vert \cdot
\right\Vert )\text{ for }c\neq 0  \notag
\end{eqnarray}%
where $c\digamma \equiv \{cf:f\in \digamma \}$. \medskip

Using this fact, we have that $n_{2,j}(\epsilon ^{2})=\log N_{[]}(\epsilon
^{2}/(\varepsilon \delta ),\widetilde{\Lambda }_{j}\mathbf{B}_{j}(\delta
,\varepsilon ),L_{1}(P))$ and $n_{3,j}(\epsilon ^{2})=0$ for $\beta _{2,j}=0$
and $n_{3,j}(\epsilon ^{2})\leq \log N_{[]}(\epsilon ^{2}/(\varepsilon
\left\vert \beta _{2,j}\right\vert ),\widetilde{\Lambda }_{j}(\varepsilon
),L_{1}(P))$ for $\beta _{2,j}\neq 0$. Hence for sufficiently small $\delta $
and $\varepsilon $ (e.g., $\delta <1$ and $\varepsilon <1$) and by (\ref%
{entropy bound C(alpha,1)}), it follows that
\begin{equation*}
n_{2,j}(\epsilon ^{2})\leq \log N_{[]}(\epsilon ^{2}\psi ^{-2},\widetilde{%
\Lambda }_{j}\mathbf{B}_{j}(\delta ,\varepsilon ),L_{1}(P))\lesssim \psi
^{2q/\alpha }\epsilon ^{-2q/\alpha }.
\end{equation*}%
Using similar arguments, we can also deduce that $n_{3,j}(\epsilon
^{2})\lesssim \psi ^{2q/\alpha }\epsilon ^{-2q/\alpha }$. \medskip

By preservation of bracketing metric entropy (see, e.g., Lemma 9.25 of
Kosorok (2008, p.\thinspace 169)), we have that for $i\in \{2,3\}$,
\begin{equation*}
n_{i}(\epsilon )\leq n_{i,p}(\epsilon
2^{1-p})+\sum\nolimits_{j=1}^{p-1}n_{i,j}(\epsilon 2^{-j}).
\end{equation*}%
and $n_{4}(\epsilon )\leq n_{1}(\epsilon /2)+n_{2}(\epsilon
/4)+n_{3}(\epsilon /4)$. Therefore by the bounds derived above, it follows
that $n_{2}(\epsilon ^{2})\lesssim \psi ^{2q/\alpha }\epsilon ^{-2q/\alpha }$%
, $n_{3}(\epsilon ^{2})\lesssim \psi ^{2q/\alpha }\epsilon ^{-2q/\alpha }$
and also $n_{4}(\epsilon ^{2})\lesssim \psi ^{2q/\alpha }\epsilon
^{-2q/\alpha }$.\medskip

Now let $f_{1}^{L}\leq f_{1}^{U},...,f_{N_{[]}(\epsilon ^{2},\Theta
_{3},L_{1}(P))}^{L}\leq f_{N_{[]}(\epsilon ^{2},\Theta _{3},L_{1}(P))}^{U}$
and $g_{1}^{L}\leq g_{1}^{U},...,g_{N_{[]}(\epsilon ^{2},\Theta
_{4},L_{1}(P))}^{L}\leq g_{N_{[]}(\epsilon ^{2},\Theta _{4},L_{1}(P))}^{U}$
be the $\epsilon ^{2}$-brackets with bracket length defined by $L_{1}(P)$
for the spaces $\Theta _{3}$ and $\Theta _{4}$, respectively. For $1\leq
k\leq N_{[]}(\epsilon ^{2},\Theta _{3},L_{1}(P))$ and $1\leq j\leq
N_{[]}(\epsilon ^{2},\Theta _{4},L_{1}(P))$, define%
\begin{eqnarray*}
m_{jk}^{L}(\tau ,d,z,x) &\equiv &\tau d\left[ 1\{w^{\prime }\beta
+g_{j}^{L}(z,x)>0\}-1\{w^{\prime }\beta +f_{k}^{U}(z,x){}>0\}\right] \\
&&+\tau (1-d)\left[ 1\{w^{\prime }\beta +g_{j}^{U}(z,x)\leq 0\}-1\{w^{\prime
}\beta +f_{k}^{L}(z,x){}\leq 0\}\right] , \\
m_{jk}^{U}(\tau ,d,z,x) &\equiv &\tau d\left[ 1\{w^{\prime }\beta
+g_{j}^{U}(z,x)>0\}-1\{w^{\prime }\beta +f_{k}^{L}(z,x){}>0\}\right] \\
&&+\tau (1-d)\left[ 1\{w^{\prime }\beta +g_{j}^{L}(z,x)\leq 0\}-1\{w^{\prime
}\beta +f_{k}^{U}(z,x){}\leq 0\}\right] .
\end{eqnarray*}%
Note that%
\begin{equation*}
0\leq m_{jk}^{U}-m_{jk}^{L}\leq 2\left[ 1\{g_{j}^{L}\leq -w^{\prime }\beta
<g_{j}^{U}\}+1\{f_{k}^{L}\leq -w^{\prime }\beta <f_{k}^{U}\}\right] \text{.}
\end{equation*}%
Thus%
\begin{equation}
E\left( m_{jk}^{U}-m_{jk}^{L}\right) ^{2}\leq 12P(g_{j}^{L}\leq -w^{\prime
}\beta <g_{j}^{U})+4P(f_{k}^{L}\leq -w^{\prime }\beta <f_{k}^{U}).
\label{m_bound}
\end{equation}%
By condition C1 and given $(\widetilde{z},x)$, the mapping $z_{1}\longmapsto
w^{\prime }\beta $ is one-to-one. Hence by condition C7, the density of $%
w^{\prime }\beta $ conditional on $(\widetilde{z},x)$ is bounded and by (\ref%
{m_bound}), it then follows that $\left\Vert
m_{jk}^{U}-m_{jk}^{L}\right\Vert _{L_{2}(P)}\lesssim \epsilon $. Moreover
for each $\widetilde{m}_{\widetilde{b},\gamma }\in \digamma _{\delta
,\varepsilon _{N}}$, there is a bracket $\left[ m_{jk}^{L},m_{jk}^{U}\right]
$ in which it lies. Therefore,
\begin{equation*}
\log N_{[]}(\epsilon ,\digamma _{\delta ,\varepsilon _{N}},L_{2}(P))\lesssim
n_{3}(\epsilon ^{2})+n_{4}(\epsilon ^{2})\lesssim \psi ^{2q/\alpha }\epsilon
^{-2q/\alpha }\text{.}
\end{equation*}
\end{proof}

Replacing $\left( \theta ,h\right) $ and $\theta ^{\ast }$ with $((\beta
_{1,1},\widetilde{b}),\gamma )$ and $(\beta _{1,1},\widetilde{\beta })$,
respectively in the definition of $\widetilde{S}_{N}$ given by (\ref%
{recentered criterion}), we now verify assumption (\ref{maximal inequality})
in the next lemma.\medskip

\begin{lemma}
\label{verify maximal inequality}For sufficiently small $\delta $ and $%
\varepsilon $, under conditions C1, C4, C6, C7, C8 and C10, 
\begin{equation*}
E\left[ \sup_{\left\Vert \widetilde{b}-\widetilde{\beta }\right\Vert
_{E}\leq \delta ,\left\Vert \gamma -G\right\Vert _{\alpha ,p}\leq
\varepsilon }\left\vert \widetilde{S}_{N}(\widetilde{b},\gamma )\right\vert %
\right] \lesssim \frac{\sqrt{\max \{\delta ,\varepsilon \}}}{\sqrt{N}}.
\end{equation*}
\end{lemma}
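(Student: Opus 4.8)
The plan is to invoke the standard maximal inequality for empirical processes indexed by a class of functions whose bracketing entropy and envelope we have already controlled in Lemmas \ref{envelope} and \ref{entropy bound}. Recall that $N\widetilde{S}_N(\widetilde b,\gamma)$ is the centered empirical process $\sqrt N(\mathbb{P}_N-P)$ evaluated at $\widetilde m_{\widetilde b,\gamma}\in\digamma_{\delta,\varepsilon}$, so that
\begin{equation*}
E\Big[\sup_{\left\Vert\widetilde b-\widetilde\beta\right\Vert_E\le\delta,\ \left\Vert\gamma-G\right\Vert_{\alpha,p}\le\varepsilon}\left\vert\widetilde S_N(\widetilde b,\gamma)\right\vert\Big]
=\frac{1}{\sqrt N}\,E\Big[\big\Vert\sqrt N(\mathbb{P}_N-P)\big\Vert_{\digamma_{\delta,\varepsilon}}\Big].
\end{equation*}
First I would apply the bracketing maximal inequality (e.g., van der Vaart and Wellner (1996, Lemma 3.4.2), or the version in van der Vaart (1998, Section 19.2)), which bounds the right-hand expectation by $\widetilde J_{[]}(\left\Vert M_{\delta,\varepsilon}\right\Vert_{L_2(P)},\digamma_{\delta,\varepsilon},L_2(P))$ up to a universal constant, where $\widetilde J_{[]}$ is the bracketing integral
\begin{equation*}
\widetilde J_{[]}(u,\digamma_{\delta,\varepsilon},L_2(P))\equiv\int_0^{u}\sqrt{1+\log N_{[]}(\epsilon,\digamma_{\delta,\varepsilon},L_2(P))}\;d\epsilon .
\end{equation*}

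Next I would plug in the two inputs. By Lemma \ref{envelope}, $\left\Vert M_{\delta,\varepsilon}\right\Vert_{L_2(P)}\le a_2\psi$ where $\psi=\sqrt{\max\{\delta,\varepsilon\}}$, and by Lemma \ref{entropy bound}, $\log N_{[]}(\epsilon,\digamma_{\delta,\varepsilon},L_2(P))\lesssim\psi^{2q/\alpha}\epsilon^{-2q/\alpha}$. Substituting the entropy bound into the integral, the leading term is $\int_0^{a_2\psi}\psi^{q/\alpha}\epsilon^{-q/\alpha}\,d\epsilon$. Since condition C10 gives $\alpha>q$, the exponent $q/\alpha<1$, so the integral converges at $0$ and evaluates to a constant multiple of $\psi^{q/\alpha}\cdot\psi^{1-q/\alpha}=\psi$. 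The $+1$ under the square root contributes at most the length of the interval, namely $a_2\psi\lesssim\psi$, which is of the same order. Hence $\widetilde J_{[]}(a_2\psi,\digamma_{\delta,\varepsilon},L_2(P))\lesssim\psi=\sqrt{\max\{\delta,\varepsilon\}}$, and dividing by $\sqrt N$ gives exactly the claimed bound.

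The main technical point to handle carefully is the behavior of the bracketing integral near zero: one must check that $\sqrt{\log N_{[]}(\epsilon,\cdot)}\sim\psi^{q/\alpha}\epsilon^{-q/\alpha}$ is integrable at $\epsilon=0$, which is precisely where the hypothesis $\alpha>q$ (and hence $q/\alpha<1$) is used — without it the class would be too rich for a uniform central limit type bound of this order. A secondary point is bookkeeping of the $\psi$-dependence: because both the envelope norm and the entropy bound scale with $\psi$, one must track that the integral picks up exactly one power of $\psi$ overall (the $\psi^{q/\alpha}$ from the entropy times the $\psi^{1-q/\alpha}$ from integrating $\epsilon^{-q/\alpha}$ up to $\psi$), rather than, say, $\psi^2$ or $\psi^{2q/\alpha}$. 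Once these two observations are in place, the lemma follows by assembling the cited maximal inequality with Lemmas \ref{envelope} and \ref{entropy bound}.
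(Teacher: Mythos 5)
Your proposal is correct and is essentially the paper's own argument: the paper likewise applies the bracketing maximal inequality (Corollary 19.35 of van der Vaart (1998)) to the class $\digamma_{\delta,\varepsilon}$, plugging in the envelope bound from Lemma \ref{envelope} and the entropy bound from Lemma \ref{entropy bound}, and uses $\alpha>q$ to evaluate the bracketing integral as $\lesssim\sqrt{\max\{\delta,\varepsilon\}}$ before dividing by $\sqrt N$. Your extra care with the $\sqrt{1+\log N_{[]}}$ term and the integrability at $\epsilon=0$ is sound bookkeeping but does not change the route.
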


\begin{proof}
Let $\psi \equiv \sqrt{\max \{\delta ,\varepsilon \}}$. By Lemmas \ref%
{envelope} and \ref{entropy bound}, we have that
\begin{equation*}
\int\nolimits_{0}^{\left\Vert M_{_{\delta ,\varepsilon }}\right\Vert
_{L_{2}(P)}}\sqrt{\log N_{[]}(\epsilon ,\digamma _{\delta ,\varepsilon
},L_{2}(P))}d\epsilon \lesssim \psi ^{q/\alpha }\int\nolimits_{0}^{a_{2}\psi
}\epsilon ^{-q/\alpha }d\epsilon \lesssim \psi
\end{equation*}%
where the last inequality follows since $\alpha >q$. Lemma \ref{verify
maximal inequality} hence follows by applying Corollary 19.35 of van der
Vaart (1998, p.\thinspace 288).\medskip
\end{proof}

We now prove Theorem \ref{thm-rate}. \medskip

\begin{proof}[Proof of Theorem \protect\ref{thm-rate}]
We take $\delta _{N}=N^{-1/3}$, $d_{\Theta }(b,\beta )=\sqrt{c_{1}}%
\left\Vert b-\beta \right\Vert _{E}$ and $d_{H}(\gamma ,G)=\sqrt{c_{2}}%
\left\Vert \gamma -G\right\Vert _{\alpha ,p}$ in the application of Lemma %
\ref{rate of convergence lemma}, where $c_{1}$ and $c_{2}$ are real
constants stated in Lemma \ref{quadratic expansion of S(b,r)}. \medskip

Since $c_{1}>0$, the norm by the metric $d_{\Theta }(\cdot ,\cdot )$ is
equivalent to the Euclidean norm and thus by Theorem \ref{thm-consist}, $%
d_{\Theta }(\widehat{\beta },\beta )=o_{p}(1)$. Moreover since $c_{2}\geq 0$%
, assumption C11 implies that $d_{H}(\widehat{G},G)=O_{p}(\varepsilon _{N})$%
. Given assumption C1, for sufficiently small $\delta $, we have that $%
b_{1,1}=\beta _{1,1}$ when $d_{\Theta }(b,\beta )\leq \delta $. Hence for
sufficiently small $\delta $ and $\varepsilon _{N}$, by Lemma \ref{quadratic
expansion of S(b,r)} and noting that $\left\Vert \cdot \right\Vert _{\alpha
,p}$ is stronger than $\left\Vert \cdot \right\Vert _{\infty }$, assumptions
(\ref{inequality 2}), (\ref{quadratic expansion}) and (\ref{inequality 1})
hold. \medskip

By Lemma \ref{verify maximal inequality} and by taking $C$ sufficiently
large in the definition of $H_{N}(C)$ of Lemma \ref{rate of convergence
lemma}, assumption (\ref{maximal inequality}) also holds with $\phi
_{N}(\delta )=\sqrt{\max \{\delta ,\varepsilon _{N}\}}$. Clearly, $\phi
_{N}(\delta )\delta ^{-\alpha }$ is decreasing for some $\alpha <2$. By
assumption C11, $\varepsilon _{N}\leq \delta _{N}$ and thus $\phi
_{N}(\delta _{N})\leq \sqrt{N}\delta _{N}^{2}$ for every $N$. Therefore, all
conditions stated in Lemma \ref{rate of convergence lemma} are fulfilled and
the result of Theorem \ref{thm-rate} hence follows.
\end{proof}

\section{Proof of Asymptotic Distribution of $N^{1/3}(\widehat{\protect\beta 
}-\protect\beta )$}

For $C>0$, define the sets 
\begin{eqnarray*}
\Theta _{N}(C) &\equiv &\left\{ b\in \Theta :N^{1/3}\left\Vert b-\beta
\right\Vert _{E}\leq C\right\} , \\
H_{N}(C) &\equiv &\{\gamma \in \Lambda :\left\Vert \gamma -G\right\Vert
_{\alpha ,p}\leq C\varepsilon _{N}\}
\end{eqnarray*}%
where $\varepsilon _{N}$ is the sequence stated in the assumptions of
Theorem \ref{thm-distribution}. For each $\left( b,\gamma \right) $, define
the following recentered empirical and population criterion functions 
\begin{eqnarray*}
\overline{S}_{N}\left( b,\gamma \right) &\equiv &S_{N}\left( b,\gamma
\right) -S_{N}\left( \beta ,\gamma \right) , \\
\overline{S}\left( b,\gamma \right) &\equiv &S\left( b,\gamma \right)
-S\left( \beta ,\gamma \right) .
\end{eqnarray*}%
Clearly, $\widehat{\beta }$ and $\widehat{\beta }$ $_{G}$ , defined by (\ref%
{beta_hat}) and (\ref{beta_hat_G}), are still maximizers of the objective
functions $\overline{S}_{N}(b,\widehat{G})$ and $\overline{S}_{N}(b,G)$,
respectively. Decompose $\overline{S}_{N}(b,\widehat{G})-\overline{S}%
_{N}(b,G)$ as follows.%
\begin{equation}
\overline{S}_{N}(b,\widehat{G})-\overline{S}_{N}(b,G)=\left[ \widetilde{S}%
_{N}(b,\widehat{G})-\widetilde{S}_{N}(b,G)\right] +\left[ \overline{S}(b,%
\widehat{G})-\overline{S}(b,G)\right]  \label{decomposition}
\end{equation}%
where%
\begin{equation*}
\widetilde{S}_{N}(b,\gamma )\equiv \overline{S}_{N}\left( b,\gamma \right) -%
\overline{S}\left( b,\gamma \right) .
\end{equation*}%
We shall need the following results.\medskip

For $\delta >0$ and $\varepsilon >0$, Consider the local neighborhoods $%
\Theta (\delta )$ and $H(\varepsilon )$ defined as%
\begin{eqnarray}
\Theta (\delta ) &\equiv &\{b\in \Theta :\left\Vert b-\beta \right\Vert
_{E}\leq \delta \},  \label{theta_delta} \\
H(\varepsilon ) &\equiv &\{\gamma \in \Lambda :\left\Vert \gamma
-G\right\Vert _{\alpha ,p}\leq \varepsilon \}.  \label{H_epsilon}
\end{eqnarray}%
Recall that $w\equiv (z,G(x))$, $z\equiv (z_{1},\widetilde{z})$, $\widetilde{%
w}\equiv (\widetilde{z},G(x))$, $\widetilde{b}\equiv (\widetilde{b}%
_{1},b_{2})$ and $\widetilde{\beta }\equiv (\widetilde{\beta }_{1},\beta
_{2})$. Note that for $\delta $ sufficiently small, assumption C1 implies
that $b_{1,1}=\beta _{1,1}$ when $b\in \Theta (\delta )$. Therefore we may
assume that $\Theta (\delta )=\{b\in \Theta :$ $b_{1,1}=\beta _{1,1}$ and $%
\widetilde{b}\in \widetilde{\Theta }(\delta )\}$ where $\widetilde{\Theta }%
(\delta )\equiv \{\widetilde{b}\in \Upsilon :\left\Vert \widetilde{b}-%
\widetilde{\beta }\right\Vert _{E}\leq \delta \}$. For any $s=\left( \tau
,d,z,x\right) $, consider the following function%
\begin{equation}
\overline{m}_{\widetilde{b},\gamma }(z,x)\equiv 1\{z_{1}\beta _{1,1}+%
\widetilde{z}^{\prime }\widetilde{b}{}_{1}+\gamma (x){}^{\prime
}b_{2}>0\}-1\{z_{1}\beta _{1,1}+\widetilde{z}^{\prime }\widetilde{\beta }%
_{1}+\gamma (x){}^{\prime }\beta _{2}>0\}.  \label{m_b_r}
\end{equation}%
Let%
\begin{equation*}
\widetilde{m}_{\widetilde{b},\gamma }(s)\equiv \tau (2d-1)\left[ \overline{m}%
_{\widetilde{b},\gamma }(z,x)-\overline{m}_{\widetilde{b},G}(z,x)\right] .
\end{equation*}%
Define the class of functions%
\begin{equation}
\digamma _{\delta ,\varepsilon }\equiv \left\{ \widetilde{m}_{\widetilde{b}%
,\gamma }:(\widetilde{b},\gamma )\in \widetilde{\Theta }(\delta )\times
H(\varepsilon )\right\} .  \label{F_delta_epsilon}
\end{equation}%
Let $M_{\delta ,\varepsilon }$ denote an envelope for $\digamma _{\delta
,\varepsilon }$ such that $\left\vert \widetilde{m}_{\widetilde{b},\gamma
}(s)\right\vert $ $\leq \left\vert M_{\delta ,\varepsilon }(s)\right\vert $
for all $s$ and for all $\widetilde{m}_{\widetilde{b},\gamma }\in \digamma
_{\delta ,\varepsilon }$.\medskip

\begin{lemma}
\label{envelope copy(1)}Let $\delta $ and $\varepsilon $ be sufficiently
small. Given conditions C1, C4, C6 and C10, for some positive real constants 
$c_{1}$ and $c_{2}$, we can take 
\begin{equation*}
M_{_{\delta ,\varepsilon }}=2\times 1\{c_{1}\min \{\delta ,\varepsilon
\}\geq \left\vert w^{\prime }\beta \right\vert \}
\end{equation*}%
and furthermore, 
\begin{equation}
\left\Vert M_{_{\delta ,\varepsilon }}\right\Vert _{L_{2}(P)}\leq c_{2}\sqrt{%
\min \{\delta ,\varepsilon \}}.  \label{M_delta_epsilon}
\end{equation}
\end{lemma}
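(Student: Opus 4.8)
The plan is to follow the proof of Lemma \ref{envelope}, with the one new feature that $\widetilde{m}_{\widetilde{b},\gamma}$ here is a difference of the two indicator differences $\overline{m}_{\widetilde{b},\gamma}$ and $\overline{m}_{\widetilde{b},G}$ from \eqref{m_b_r}, rather than a single one. First I would fix $s=(\tau,d,z,x)$, abbreviate $a\equiv w^{\prime}\beta$, and note that each of the four indicator arguments appearing in $\overline{m}_{\widetilde{b},\gamma}(z,x)-\overline{m}_{\widetilde{b},G}(z,x)$ equals $a$ plus a perturbation: a $\widetilde{b}$-perturbation $\widetilde{w}^{\prime}(\widetilde{b}-\widetilde{\beta})$, of modulus at most $\|\widetilde{w}\|_{E}\delta\le(B+\|G\|_{\infty})\delta$ by conditions C6 and C10, plus (for the two $\gamma$-indicators) a $\gamma$-perturbation of the form $(\gamma(x)-G(x))^{\prime}b_{2}$ or $(\gamma(x)-G(x))^{\prime}\beta_{2}$, of modulus at most $(\|\beta_{2}\|_{E}+\delta)\varepsilon$ by conditions C1 and C10. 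Running the chain of implications used in Lemma \ref{envelope} on $\overline{m}_{\widetilde{b},\gamma}$ and on $\overline{m}_{\widetilde{b},G}$ separately and using $|\widetilde{m}_{\widetilde{b},\gamma}|\le|\overline{m}_{\widetilde{b},\gamma}|+|\overline{m}_{\widetilde{b},G}|$ already delivers an envelope of the form $2\cdot 1\{c\max\{\delta,\varepsilon\}\ge|a|\}$ for a suitable constant $c$.

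The substance of the lemma is in sharpening $\max\{\delta,\varepsilon\}$ to $\min\{\delta,\varepsilon\}$. The mechanism is that $\overline{m}_{\widetilde{b},\gamma}$ and $\overline{m}_{\widetilde{b},G}$ both respond to the \emph{same} $\widetilde{b}$-shift $u\equiv\widetilde{w}^{\prime}(\widetilde{b}-\widetilde{\beta})$ around centers that differ only by a $\gamma$-shift, so their difference is a mixed second difference; writing it as
\[
\bigl[1\{a+u+v_{1}>0\}-1\{a+u>0\}\bigr]-\bigl[1\{a+v_{2}>0\}-1\{a>0\}\bigr],
\]
with $v_{1},v_{2}$ the two $\gamma$-shifts, each bracket is a difference of two indicators whose arguments are $O(\varepsilon)$ apart, so each bracket vanishes unless its center ($a+u$, resp.\ $a$) lies within $O(\varepsilon)$ of the origin. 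I would then argue, by a case analysis on the signs of $u,v_{1},v_{2}$ and of the four indicators — parallel to the displayed implications in the proof of Lemma \ref{envelope} — that the two brackets cannot differ unless $|a|$ itself is $O(\varepsilon)$; absorbing the $O(\delta\varepsilon)$ cross-terms into $\min\{\delta,\varepsilon\}$ for $\delta,\varepsilon$ small then yields $M_{\delta,\varepsilon}=2\cdot 1\{c_{1}\min\{\delta,\varepsilon\}\ge|w^{\prime}\beta|\}$ with $c_{1}$ an explicit constant built from $B$, $\|G\|_{\infty}$ and $\|\beta_{2}\|_{E}$.

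Given the pointwise envelope, the $L_{2}(P)$-bound \eqref{M_delta_epsilon} is routine: $\|M_{\delta,\varepsilon}\|_{L_{2}(P)}^{2}=4\,P(|w^{\prime}\beta|\le c_{1}\min\{\delta,\varepsilon\})$, and since by Assumption C4 the distribution function $F(\cdot;\beta)$ of $w^{\prime}\beta$ is globally Lipschitz with constant $L$, this probability is at most $2Lc_{1}\min\{\delta,\varepsilon\}$; taking square roots gives the claim with $c_{2}\equiv\sqrt{8Lc_{1}}$. I expect essentially all of the difficulty to be concentrated in the sign bookkeeping of the second paragraph — verifying that the cancellation between the two indicator differences genuinely upgrades $\max\{\delta,\varepsilon\}$ to $\min\{\delta,\varepsilon\}$ — while everything else closely mirrors Lemma \ref{envelope}.
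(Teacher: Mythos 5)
Your first paragraph (the $\max\{\delta,\varepsilon\}$ envelope) and your final $L_{2}$ computation are fine, but the step you defer to ``sign bookkeeping'' is exactly where the content lies, and the claim you plan to establish there is false. Write, as you do, the bracketed argument of $\widetilde{m}_{\widetilde{b},\gamma}$ as $\bigl[1\{a+u+v_{1}>0\}-1\{a+u>0\}\bigr]-\bigl[1\{a+v_{2}>0\}-1\{a>0\}\bigr]$ with $a=w^{\prime }\beta $, $u=\widetilde{w}^{\prime }(\widetilde{b}-\widetilde{\beta })$, $v_{1}=(\gamma (x)-G(x))^{\prime }b_{2}$, $v_{2}=(\gamma (x)-G(x))^{\prime }\beta _{2}$. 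Pick a point with $\widetilde{w}\neq 0$, $\tau =1$ and $a=\left\Vert \widetilde{w}\right\Vert _{E}\delta >0$ (such points carry positive probability by C2(c)), take $\widetilde{b}-\widetilde{\beta }=-\delta \,\widetilde{w}/\left\Vert \widetilde{w}\right\Vert _{E}$ so that $a+u=0$, and take $\gamma \in H(\varepsilon )$ with $v_{1}>0$ at this $x$ (possible whenever $b_{2}\neq 0$, e.g.\ $b_{2}=\beta _{2}$). Then the first bracket equals $1$ while the second equals $0$ (since $|v_{2}|\lesssim \varepsilon <a$), so $|\widetilde{m}_{\widetilde{b},\gamma }(s)|=1$ even though $|w^{\prime }\beta |=a$ is of order $\delta $, not of order $\min \{\delta ,\varepsilon \}$. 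So ``the two brackets cannot differ unless $|a|$ is $O(\varepsilon )$'' fails whenever $\varepsilon \ll \delta $ --- which is precisely the regime in which the lemma is invoked in the proof of Theorem \ref{thm-distribution}, where $\delta =CN^{-1/3}$ and $\varepsilon =C\varepsilon _{N}$ with $\varepsilon _{N}=o(N^{-1/3})$. The only thing your geometry yields for such configurations is the $\max \{\delta ,\varepsilon \}$ envelope of your first paragraph; the upgrade to $\min \{\delta ,\varepsilon \}$ is not a matter of routine case checking.

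For comparison, the paper's own proof does not assert your pointwise claim: it bounds $|\widetilde{m}_{\widetilde{b},\gamma }|$ by $2$ times the indicator of the event that $\overline{m}_{\widetilde{b},\gamma }$ and $\overline{m}_{\widetilde{b},G}$ are nonzero with \emph{opposite} signs, and then verifies --- this is the part that corresponds to your bookkeeping and does go through --- that each of the two opposite-sign patterns forces $|w^{\prime }\beta |\leq B\min \{\delta ,\varepsilon \}$. The configuration above, however, is the mixed pattern $(\overline{m}_{\widetilde{b},\gamma },\overline{m}_{\widetilde{b},G})=(0,-1)$, for which the difference is still of absolute value $1$ but which that displayed bound does not cover, and in which $|w^{\prime }\beta |$ can be of order $\delta $. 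Any complete argument has to confront these mixed patterns explicitly; they cannot be absorbed into a $\min \{\delta ,\varepsilon \}$ pointwise envelope, and how they are handled matters downstream, since the $\min \{\delta ,\varepsilon \}$ bound is what makes the entropy-integral bound in Lemma \ref{entropy bound copy(1)} small enough to deliver the $o_{p}(N^{-2/3})$ rate in (\ref{rate of difference of Sn}). As it stands, your proposal stops at, and rests on, a false intermediate statement, so it does not close.
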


\begin{proof}
Note that%
\begin{equation*}
\left\vert \widetilde{m}_{\widetilde{b},\gamma }(s)\right\vert \leq 2\times
1\left\{ \left[ \overline{m}_{\widetilde{b},\gamma }(s)=1\text{ and }%
\overline{m}_{\widetilde{b},G}(s)=-1\right] \text{ or }\left[ \overline{m}_{%
\widetilde{b},\gamma }(s)=-1\text{ and }\overline{m}_{\widetilde{b},G}(s)=1%
\right] \right\} .
\end{equation*}%
Given C1, C6 and C10, there is positive real constant $B$ such that $\max
\{\left\Vert \widetilde{w}\right\Vert _{E},\left\Vert b_{2}\right\Vert
_{E}\}<B$ with probability 1.\ Hence if $(\widetilde{b},\gamma )\in
\widetilde{\Theta }(\delta )\times H(\varepsilon )$, we have that
\begin{eqnarray*}
&&\overline{m}_{\widetilde{b},\gamma }(s)=1 \\
&\iff &z_{1}\beta _{1,1}+\widetilde{z}^{\prime }\widetilde{b}{}_{1}+\gamma
(x){}^{\prime }b_{2}>0\geq z_{1}\beta _{1,1}+\widetilde{z}^{\prime }%
\widetilde{\beta }{}_{1}+\gamma (x){}^{\prime }\beta _{2} \\
&\iff &\widetilde{w}^{\prime }(\widetilde{b}{}-\widetilde{\beta }{})+(\gamma
(x)-G(x)){}^{\prime }b_{2}>-w^{\prime }\beta \geq (\gamma
(x){}-G(x))^{\prime }\beta _{2} \\
&\Longrightarrow &-B(\delta +\varepsilon )\leq w^{\prime }\beta \leq
B\varepsilon .
\end{eqnarray*}%
On the other hand,%
\begin{eqnarray*}
&&\overline{m}_{\widetilde{b},G}(s)=-1 \\
&\iff &z_{1}\beta _{1,1}+\widetilde{w}^{\prime }\widetilde{b}\leq
0<w^{\prime }\beta \\
&\iff &\widetilde{w}^{\prime }(\widetilde{b}{}-\widetilde{\beta }{})\leq
-w^{\prime }\beta <0 \\
&\Longrightarrow &0\leq w^{\prime }\beta \leq B\delta .
\end{eqnarray*}%
Therefore, the condition $\overline{m}_{\widetilde{b},\gamma }(s)=1$ and $%
\overline{m}_{\widetilde{b},G}(s)=-1$ implies $\left\vert w^{\prime }\beta
\right\vert \leq B\min \{\delta $,$\varepsilon \}$. Based on similar
arguments, we can verify that the condition $\overline{m}_{\widetilde{b}%
,\gamma }(s)=-1$ and $\overline{m}_{\widetilde{b},G}(s)=1$ also implies $%
\left\vert w^{\prime }\beta \right\vert \leq B\min \{\delta $,$\varepsilon
\} $. Therefore, Lemma \ref{envelope copy(1)} follows by taking $M_{_{\delta
,\varepsilon }}=2\times 1\{\left\vert w^{\prime }\beta \right\vert \leq
B\min \{\delta ,\varepsilon \}\}$ and noting that given C4, inequality (\ref%
{M_delta_epsilon}) holds for $c_{2}=2\sqrt{2c_{1}L}$.\medskip
\end{proof}

\begin{lemma}
\label{entropy bound copy(1)}Given conditions C1, C4, C6, C7, C8 and C10, we
have that for sufficiently small $\delta $ and $\varepsilon $, 
\begin{equation*}
E\left[ \sup\limits_{\left( b,\gamma \right) \in \widetilde{\Theta }(\delta
)\times H(\varepsilon )}\left\vert \widetilde{S}_{N}(b,\gamma )-\widetilde{S}%
_{N}(b,G)\right\vert \right] \lesssim N^{-1/2}(\max \{\delta ,\varepsilon
\})^{\frac{q}{2\alpha }}(\min \{\delta ,\varepsilon \})^{\frac{\alpha -q}{%
2\alpha }}.
\end{equation*}
\end{lemma}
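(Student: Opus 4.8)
The target is a maximal inequality for the recentered process $\widetilde{S}_N(b,\gamma)-\widetilde{S}_N(b,G)$ over the product neighborhood $\widetilde\Theta(\delta)\times H(\varepsilon)$, and the natural tool is a bracketing maximal inequality of the Ossiander / van der Vaart--Wellner type (Corollary 19.35 in van der Vaart (1998), exactly as used in Lemma \ref{verify maximal inequality}). The plan is therefore to recognize that the supremum above is the supremum of the empirical process indexed by the class $\digamma_{\delta,\varepsilon}$ defined in \eqref{F_delta_epsilon}, then feed into the maximal inequality the two ingredients that have been prepared: the envelope bound from Lemma \ref{envelope copy(1)}, namely $\|M_{\delta,\varepsilon}\|_{L_2(P)}\leq c_2\sqrt{\min\{\delta,\varepsilon\}}$, and a bracketing-entropy bound for $\digamma_{\delta,\varepsilon}$ of the form $\log N_{[]}(\epsilon,\digamma_{\delta,\varepsilon},L_2(P))\lesssim (\max\{\delta,\varepsilon\})^{q/\alpha}\epsilon^{-2q/\alpha}$. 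The latter follows by essentially the same argument as Lemma \ref{entropy bound}: decompose $\widetilde m_{\widetilde b,\gamma}$ through the indicator perturbations, split into the pieces $\Theta_1,\Theta_2,\Theta_3,\Theta_4$, use Corollary 2.7.2 and Theorem 2.7.11 of van der Vaart--Wellner together with $\Lambda_j=C_M^\alpha$ (condition C10), and then convert the one-dimensional index-perturbation brackets into $L_2(P)$ brackets for $\digamma_{\delta,\varepsilon}$ using the boundedness of the conditional density of $w'\beta$ (conditions C1 and C7). The one genuinely new twist relative to Lemma \ref{entropy bound} is that here the perturbation of $\gamma$ away from $G$ has size controlled by $\varepsilon$ while the perturbation of $\widetilde b$ has size $\delta$, and these enter asymmetrically; the entropy scales with $(\max\{\delta,\varepsilon\})^{q/\alpha}$ because that is the envelope-squared-type magnitude that governs the effective radius of the class.

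Once the two ingredients are in hand, the maximal inequality yields a bound of the shape
\begin{equation*}
E\Big[\sup\nolimits_{\digamma_{\delta,\varepsilon}}|\widetilde S_N(b,\gamma)-\widetilde S_N(b,G)|\Big]\lesssim \frac{1}{\sqrt N}\,J_{[]}\big(\|M_{\delta,\varepsilon}\|_{L_2(P)},\digamma_{\delta,\varepsilon},L_2(P)\big),
\end{equation*}
where $J_{[]}(\eta,\digamma,L_2(P))\equiv\int_0^\eta\sqrt{1+\log N_{[]}(\epsilon,\digamma,L_2(P))}\,d\epsilon$ is the bracketing entropy integral. So the remaining work is to evaluate this integral with the two bounds plugged in:
\begin{equation*}
\int_0^{c_2\sqrt{\min\{\delta,\varepsilon\}}}\sqrt{(\max\{\delta,\varepsilon\})^{q/\alpha}}\;\epsilon^{-q/\alpha}\,d\epsilon \;=\;(\max\{\delta,\varepsilon\})^{q/(2\alpha)}\int_0^{c_2\sqrt{\min\{\delta,\varepsilon\}}}\epsilon^{-q/\alpha}\,d\epsilon.
\end{equation*}
Because $\alpha>q$ (condition C10), the exponent $-q/\alpha$ exceeds $-1$, the integral converges, and it evaluates up to a constant to $(\min\{\delta,\varepsilon\})^{(1-q/\alpha)/2}=(\min\{\delta,\varepsilon\})^{(\alpha-q)/(2\alpha)}$. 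Combining the two powers gives exactly $(\max\{\delta,\varepsilon\})^{q/(2\alpha)}(\min\{\delta,\varepsilon\})^{(\alpha-q)/(2\alpha)}$, which is the stated bound after dividing by $\sqrt N$.

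I expect the main obstacle to be the bracketing-entropy step rather than the integral evaluation: one must carefully track how the asymmetric radii $\delta$ (for $\widetilde b$) and $\varepsilon$ (for $\gamma$) propagate through the products $(\gamma_j-G_j)(b_{2,j}-\beta_{2,j})$ and $(\gamma_j-G_j)\beta_{2,j}$ and through the one-to-one change of variables $z_1\mapsto w'\beta$, and verify that the resulting $L_2(P)$ bracket lengths for $\digamma_{\delta,\varepsilon}$ really do scale like $\epsilon$ (after rescaling the generic $\epsilon^2$-brackets of the $C_1^\alpha$ classes) with the claimed $(\max\{\delta,\varepsilon\})^{q/\alpha}$ prefactor. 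The key quantitative input making this go through is condition C7, which bounds the conditional density of $w'\beta$ given $(\widetilde z,x)$ and thereby turns $L_1$ bracket lengths on the index perturbations into $L_2$ bracket lengths on the indicator differences, exactly as in the display leading to \eqref{m_bound} in the proof of Lemma \ref{entropy bound}. Everything else is bookkeeping of constants, together with the observation that the integrability $\alpha>q$ is precisely what prevents the entropy integral from diverging at $0$.
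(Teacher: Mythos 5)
Your proposal is correct and follows essentially the same route as the paper: envelope bound of order $\sqrt{\min\{\delta,\varepsilon\}}$ from Lemma \ref{envelope copy(1)}, a bracketing-entropy bound of order $(\max\{\delta,\varepsilon\})^{q/\alpha}\epsilon^{-2q/\alpha}$ obtained by the same machinery as Lemma \ref{entropy bound}, and then the entropy integral plus Corollary 19.35 of van der Vaart (1998), with $\alpha>q$ ensuring convergence. The only cosmetic difference is that the paper transfers the entropy bound to the differenced class $\digamma_{\delta,\varepsilon}$ by splitting it into the two classes $A_{\delta,\varepsilon}$ and $B_{\delta,\varepsilon}$ and invoking Kosorok's Lemma 9.25, whereas you propose rerunning the $\Theta_1,\ldots,\Theta_4$ decomposition directly; both yield the same bound.
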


\begin{proof}
Define the following two classes of functions
\begin{eqnarray*}
A_{\delta ,\varepsilon } &\equiv &\left\{ \tau (2d-1)\overline{m}_{%
\widetilde{b},\gamma }(z,x):(\widetilde{b},\gamma )\in \widetilde{\Theta }%
(\delta )\times H(\varepsilon )\right\} , \\
B_{\delta ,\varepsilon } &\equiv &\left\{ \tau (2d-1)\overline{m}_{%
\widetilde{b},G}(z,x):\widetilde{b}\in \widetilde{\Theta }(\delta )\right\} .
\end{eqnarray*}%
Using Lemma 9.25 of Kosorok (2008, p.\thinspace 169)), we have that
\begin{equation*}
\log N_{[]}(\epsilon ,\digamma _{\delta ,\varepsilon },L_{2}(P))\leq \log
N_{[]}(\epsilon /2,A_{\delta ,\varepsilon },L_{2}(P))+\log N_{[]}(\epsilon
/2,B_{\delta ,\varepsilon },L_{2}(P)).
\end{equation*}%
Let $\psi \equiv \sqrt{\max \{\delta ,\varepsilon \}}$. By Lemma \ref%
{entropy bound}, we have that for sufficiently small $\delta $ and $%
\varepsilon $,
\begin{equation*}
\log N_{[]}(\epsilon ,A_{\delta ,\varepsilon },L_{2}(P))\lesssim \psi
^{2q/\alpha }\epsilon ^{-2q/\alpha }.
\end{equation*}%
Furthermore by simplifying proof of Lemma \ref{entropy bound}, it is
straightforward to verify that
\begin{equation*}
\log N_{[]}(\epsilon ,B_{\delta ,\varepsilon },L_{2}(P))\lesssim \psi
^{2q/\alpha }\epsilon ^{-2q/\alpha }
\end{equation*}%
and thus
\begin{equation}
\log N_{[]}(\epsilon ,\digamma _{\delta ,\varepsilon },L_{2}(P))\lesssim
\psi ^{2q/\alpha }\epsilon ^{-2q/\alpha }.  \label{entropy}
\end{equation}%
Using inequality (\ref{entropy}) and Lemmas \ref{envelope copy(1)}, we have
that
\begin{equation*}
\int\nolimits_{0}^{\left\Vert M_{_{\delta ,\varepsilon }}\right\Vert
_{L_{2}(P)}}\sqrt{\log N_{[]}(\epsilon ,\digamma _{\delta ,\varepsilon
},L_{2}(P))}d\epsilon \lesssim \psi ^{\frac{q}{\alpha }}\int%
\nolimits_{0}^{c_{2}\sqrt{\min \{\delta ,\varepsilon \}}}\epsilon ^{-\frac{q%
}{\alpha }}d\epsilon \lesssim \psi ^{\frac{q}{\alpha }}(\min \{\delta
,\varepsilon \})^{\frac{\alpha -q}{2\alpha }}
\end{equation*}%
where the last inequality follows from the assumption $\alpha >q$. Lemma \ref%
{entropy bound copy(1)} hence follows by applying Corollary 19.35 of van der
Vaart (1998, p.\thinspace 288).\medskip
\end{proof}

We now prove Theorem \ref{thm-distribution}.\medskip

\begin{proof}[Proof of Theorem \protect\ref{thm-distribution}]
By Kim and Pollard\thinspace (1990), $\left\Vert \widehat{\beta }_{G}-\beta
\right\Vert _{E}=O_{p}(N^{-1/3})$. Hence, by condition C11 and Theorem \ref%
{thm-rate}, for sufficiently large $C>0$, probability of the event that $%
\widehat{\beta }\in \Theta _{N}(C),$ $\widehat{\beta }_{G}\in \Theta _{N}(C)$
and $\widehat{G}\in H_{N}(C)$ can be made arbitrarily close to $1$. Thus to
show the theorem, it suffices to establish that for any fixed sufficiently
large $C>0$,
\begin{equation}
\sup\limits_{b\in \Theta _{N}(C)}\left\vert \overline{S}_{N}(b,\widehat{G})-%
\overline{S}_{N}(b,G)\right\vert =o_{p}(N^{-2/3}).
\label{rate of difference of Sn}
\end{equation}%
Given (\ref{rate of difference of Sn}), we have that
\begin{eqnarray*}
\overline{S}_{N}(\widehat{\beta },G) &\geq &\overline{S}_{N}(\widehat{\beta }%
,\widehat{G})-o_{p}(N^{-2/3}) \\
&\geq &\overline{S}_{N}(\widehat{\beta }_{G},\widehat{G})-o_{p}(N^{-2/3}) \\
&\geq &\overline{S}_{N}(\widehat{\beta }_{G},G)-o_{p}(N^{-2/3})
\end{eqnarray*}%
where the first and third inequalities follow from (\ref{rate of difference
of Sn}) and the second inequality follows from the definition of $\widehat{%
\beta }$. Therefore by Theorem 1.1 of Kim and Pollard\thinspace (1990), $%
N^{1/3}(\widehat{\beta }-\beta )$ and $N^{1/3}(\widehat{\beta }_{G}-\beta )$
are asymptotically equivalent in distribution.\medskip

We now verify equation (\ref{rate of difference of Sn}). Given the
decomposition (\ref{decomposition}), it suffices to show that%
\begin{eqnarray}
&&E\left[ \sup\limits_{\left( b,\gamma \right) \in \Theta _{N}(C)\times
H_{N}(C)}\left\vert \widetilde{S}_{N}(b,\gamma )-\widetilde{S}%
_{N}(b,G)\right\vert \right] =o(N^{-2/3}),  \label{e11} \\
&&\sup\limits_{b\in \Theta _{N}(C)}\left\vert \overline{S}(b,\widehat{G})-%
\overline{S}(b,G)\right\vert =o_{p}(N^{-2/3}).  \label{e21}
\end{eqnarray}%
Equation (\ref{e11}) concerns stochastic equicontinuity of the local
recentered process $N^{2/3}\widetilde{S}_{N}(b,\gamma )$ indexed by $\left(
b,\gamma \right) \in \Theta _{N}(C)\times H_{N}(C)$. It is satisfied by
setting $\delta =CN^{-1/3}$ and $\varepsilon =C\varepsilon _{N}$ in the
definition of sets $\Theta \left( \delta \right) $ and $H(\varepsilon )$
given by (\ref{theta_delta}) and (\ref{H_epsilon}) and by invoking Lemma \ref%
{entropy bound copy(1)} with the assumptions $\varepsilon _{N}=o(N^{-1/3})$
and $\alpha >q$.\medskip

We now verify equation (\ref{e21}). Note that for $N$ sufficiently large, if
$b\in \Theta _{N}(C)$, then $b_{1,1}=\beta _{1,1}$ under condition C1. Let
\begin{equation*}
\overline{S}_{1}(\widetilde{b},\gamma )\equiv S_{1}(\widetilde{b},\gamma
)-S_{1}(\widetilde{\beta },\gamma )
\end{equation*}%
where $S_{1}(\widetilde{b},\gamma )$ is defined by (\ref{S1}). Hence it
suffices to verify
\begin{equation*}
\sup\limits_{\widetilde{b}\in \widetilde{\Theta }_{N}(C)}\left\vert
\overline{S}_{1}(\widetilde{b},\widehat{G})-\overline{S}_{1}(\widetilde{b}%
,G)\right\vert =o_{p}(N^{-2/3})
\end{equation*}%
where $\widetilde{\Theta }_{N}(C)\equiv \{\widetilde{b}\in \Upsilon :$ $%
\left\Vert \widetilde{b}-\widetilde{\beta }\right\Vert _{E}\leq CN^{-1/3}\}$.

Note that the term $\left\vert \overline{S}_{1}(\widetilde{b},\widehat{G})-%
\overline{S}_{1}(\widetilde{b},G)\right\vert $ is bounded above by
\begin{equation}
\left\vert S_{1}(\widetilde{\beta },\widehat{G})-S_{1}(\widetilde{\beta }%
,G)\right\vert +\left\vert [S_{1}(\widetilde{b},\widehat{G})-S_{1}(%
\widetilde{\beta },G)]-[S_{1}(\widetilde{b},G)-S_{1}(\widetilde{\beta }%
,G)]\right\vert .  \label{ineq terms}
\end{equation}%
Since $\varepsilon _{N}=o(N^{-1/3})$, by C11 we have that $\left\Vert
\widehat{G}-G\right\Vert _{\infty }=o_{p}(N^{-1/3})$ because the norm $%
\left\Vert \cdot \right\Vert _{\alpha ,p}$ is stronger than the sup norm $%
\left\Vert \cdot \right\Vert _{\infty }$. Hence by Lemma \ref{quadratic
expansion of S(b,r)}, the first term of the sum (\ref{ineq terms}) is $%
o_{p}(N^{-2/3})$ and%
\begin{equation}
S_{1}(\widetilde{b},\widehat{G})-S_{1}\left( \widetilde{\beta },G\right)
=-\left( A_{1}(\widetilde{b})+A_{2}(\widehat{G})+e(\widetilde{b},\widehat{G}%
)\right) +o_{p}\left( N^{-2/3}\right)  \label{t1}
\end{equation}%
where the terms $A_{1}(\widetilde{b})$, $A_{2}(\widehat{G})$ and $e(%
\widetilde{b},\widehat{G})$ are given by (\ref{A1}), (\ref{A2}) and (\ref%
{e(b,r)}), respectively. Using the proof of Lemma \ref{quadratic expansion
of S(b,r)}, it is also straightforward to verify that
\begin{equation}
S_{1}(\widetilde{b},G)-S_{1}(\widetilde{\beta },G)=-A_{1}(\widetilde{b}%
)+o_{p}\left( N^{-2/3}\right) .  \label{t2}
\end{equation}%
Since $\left\Vert \widehat{G}-G\right\Vert _{\infty }=o_{p}(N^{-1/3})$ and $%
\widetilde{b}\in \widetilde{\Theta }_{N}(C)$, we have that $%
A_{2}=o_{p}\left( N^{-2/3}\right) $ and $e(\widetilde{b},\widehat{G}%
)=o_{p}\left( N^{-2/3}\right) $. Putting together (\ref{t1}) and (\ref{t2}),
it follows that the second term of the sum (\ref{ineq terms}) is also $%
o_{p}(N^{-2/3})$ and therefore equation (\ref{e21}) holds.
\end{proof}

\newpage

\begin{center}
\begin{tabular}{cccccc}
\multicolumn{6}{c}{Table 1 : Simulation Results for $\widehat{\beta }_{2,%
\func{Si}ngle}$ and $\widehat{\beta }_{2,OLS}$ (linear $G$)} \\ \hline\hline
$N$ & Bias & RMSE & Median & mean AD & median AD \\ 
\multicolumn{6}{c}{\textit{Single-stage estimation}} \\ 
300 & -0.112 & 0.410 & 0.894 & 0.329 & 0.288 \\ 
500 & -0.061 & 0.352 & 0.916 & 0.285 & 0.247 \\ 
1000 & -0.031 & 0.264 & 0.961 & 0.211 & 0.182 \\ 
\multicolumn{6}{c}{\textit{Two-stage estimation : OLS first stage}} \\ 
300 & -0.122 & 0.478 & 0.856 & 0.381 & 0.326 \\ 
500 & -0.070 & 0.376 & 0.908 & 0.304 & 0.259 \\ 
1000 & -0.033 & 0.301 & 0.952 & 0.240 & 0.211%
\end{tabular}

\bigskip

\bigskip

\bigskip

\bigskip

\bigskip

\bigskip

\begin{tabular}{cccccc}
\multicolumn{6}{c}{Table 2 : Simulation Results for $\widehat{\beta }_{2,%
\func{Si}ngle}$ and $\widehat{\beta }_{2,OLS}$ (nonlinear $G$)} \\ 
\hline\hline
$N$ & Bias & RMSE & Median & mean AD & median AD \\ 
\multicolumn{6}{c}{\textit{Single-stage estimation}} \\ 
300 & -0.056 & 0.330 & 0.918 & 0.262 & 0.216 \\ 
500 & -0.044 & 0.277 & 0.942 & 0.220 & 0.184 \\ 
1000 & -0.020 & 0.212 & 0.966 & 0.169 & 0.139 \\ 
\multicolumn{6}{c}{\textit{Two-stage estimation : OLS first stage}} \\ 
300 & -0.394 & 0.489 & 0.577 & 0.431 & 0.432 \\ 
500 & -0.413 & 0.469 & 0.568 & 0.424 & 0.432 \\ 
1000 & -0.400 & 0.435 & 0.587 & 0.402 & 0.412%
\end{tabular}

\newpage

\begin{tabular}{cccccc}
\multicolumn{6}{c}{Table 3 : Simulation Results for $\widehat{\beta }%
_{2,Kernel\_8th}$ (linear $G$)} \\ \hline\hline
$c$ & Bias & RMSE & Median & mean AD & median AD \\ 
\multicolumn{6}{c}{\textit{Two-stage estimation : kernel first stage }($%
N=300 $)} \\ 
5.4 & -0.063 & 0.639 & 0.868 & 0.502 & 0.417 \\ 
5.6 & 0.041 & 0.653 & 0.966 & 0.500 & 0.400 \\ 
5.8 & 0.138 & 0.718 & 1.067 & 0.544 & 0.427 \\ 
\multicolumn{6}{c}{\textit{Two-stage estimation : kernel first stage }($%
N=500 $)} \\ 
5.4 & -0.046 & 0.501 & 0.908 & 0.388 & 0.314 \\ 
5.6 & 0.056 & 0.518 & 0.992 & 0.393 & 0.307 \\ 
5.8 & 0.171 & 0.584 & 1.096 & 0.435 & 0.331 \\ 
\multicolumn{6}{c}{\textit{Two-stage estimation : kernel first stage }($%
N=1000$)} \\ 
5.4 & -0.087 & 0.389 & 0.887 & 0.311 & 0.266 \\ 
5.6 & 0.008 & 0.380 & 0.992 & 0.307 & 0.264 \\ 
5.8 & 0.111 & 0.424 & 1.086 & 0.334 & 0.278%
\end{tabular}

\bigskip

\bigskip

\bigskip

\bigskip

\begin{tabular}{cccccc}
\multicolumn{6}{c}{Table 4 : Simulation Results for $\widehat{\beta }%
_{2,Kernel\_8th}$ (nonlinear $G$)} \\ \hline\hline
$c$ & Bias & RMSE & Median & mean AD & median AD \\ 
\multicolumn{6}{c}{\textit{Two-stage estimation : kernel first stage }($%
N=300 $)} \\ 
5.8 & -0.071 & 0.480 & 0.918 & 0.380 & 0.328 \\ 
6 & 0.053 & 0.523 & 1.028 & 0.408 & 0.340 \\ 
6.2 & 0.147 & 0.577 & 1.132 & 0.448 & 0.372 \\ 
\multicolumn{6}{c}{\textit{Two-stage estimation : kernel first stage }($%
N=500 $)} \\ 
5.8 & -0.066 & 0.400 & 0.906 & 0.316 & 0.264 \\ 
6 & 0.030 & 0.408 & 1.004 & 0.323 & 0.268 \\ 
6.2 & 0.103 & 0.457 & 1.062 & 0.355 & 0.288 \\ 
\multicolumn{6}{c}{\textit{Two-stage estimation : kernel first stage }($%
N=1000$)} \\ 
5.8 & -0.125 & 0.298 & 0.865 & 0.243 & 0.211 \\ 
6 & -0.028 & 0.286 & 0.956 & 0.230 & 0.192 \\ 
6.2 & 0.059 & 0.320 & 1.038 & 0.251 & 0.211%
\end{tabular}

\newpage

\begin{tabular}{cccccc}
\multicolumn{6}{c}{Table 5 : Simulation Results for $\widehat{\beta }%
_{2,Kernel\_2nd}$ (linear $G$)} \\ \hline\hline
$c$ & Bias & RMSE & Median & mean AD & median AD \\ 
\multicolumn{6}{c}{\textit{Two-stage estimation : kernel first stage }($%
N=300 $)} \\ 
0.6 & -0.172 & 0.483 & 0.803 & 0.392 & 0.345 \\ 
0.8 & -0.122 & 0.502 & 0.865 & 0.395 & 0.328 \\ 
1 & -0.088 & 0.510 & 0.896 & 0.401 & 0.333 \\ 
\multicolumn{6}{c}{\textit{Two-stage estimation : kernel first stage }($%
N=500 $)} \\ 
0.6 & -0.111 & 0.391 & 0.880 & 0.315 & 0.276 \\ 
0.8 & -0.073 & 0.394 & 0.913 & 0.316 & 0.268 \\ 
1 & -0.037 & 0.408 & 0.937 & 0.326 & 0.280 \\ 
\multicolumn{6}{c}{\textit{Two-stage estimation : kernel first stage }($%
N=1000$)} \\ 
0.6 & -0.054 & 0.305 & 0.923 & 0.247 & 0.216 \\ 
0.8 & -0.028 & 0.301 & 0.956 & 0.242 & 0.211 \\ 
1 & 0.002 & 0.313 & 0.980 & 0.250 & 0.216%
\end{tabular}

\bigskip

\bigskip

\bigskip

\bigskip

\begin{tabular}{cccccc}
\multicolumn{6}{c}{Table 6 : Simulation Results for $\widehat{\beta }%
_{2,Kernel\_2nd}$ (nonlinear $G$)} \\ \hline\hline
$c$ & Bias & RMSE & Median & mean AD & median AD \\ 
\multicolumn{6}{c}{\textit{Two-stage estimation : kernel first stage }($%
N=300 $)} \\ 
0.6 & -0.112 & 0.440 & 0.865 & 0.347 & 0.297 \\ 
0.8 & -0.057 & 0.443 & 0.918 & 0.351 & 0.302 \\ 
1 & -0.009 & 0.469 & 0.968 & 0.370 & 0.316 \\ 
\multicolumn{6}{c}{\textit{Two-stage estimation : kernel first stage }($%
N=500 $)} \\ 
0.6 & -0.077 & 0.366 & 0.918 & 0.291 & 0.244 \\ 
0.8 & -0.040 & 0.382 & 0.932 & 0.302 & 0.254 \\ 
1 & -0.010 & 0.397 & 0.966 & 0.313 & 0.264 \\ 
\multicolumn{6}{c}{\textit{Two-stage estimation : kernel first stage }($%
N=1000$)} \\ 
0.6 & -0.037 & 0.272 & 0.952 & 0.218 & 0.182 \\ 
0.8 & -0.012 & 0.272 & 0.980 & 0.218 & 0.192 \\ 
1 & 0.036 & 0.286 & 1.028 & 0.230 & 0.201%
\end{tabular}
\end{center}

\end{document}